\begin{document}

\title{Improved Bounds for Shortest Paths in Dense Distance Graphs}
\date{}
\author[1]{Paweł Gawrychowski}
\author[2]{Adam Karczmarz\thanks{Supported by the grant NCN2014/13/B/ST6/01811 of the Polish Science Center.}}
\affil[1]{University of Haifa, Israel}
\affil[ ]{\texttt{gawry1@gmail.com}}
\affil[2]{Institute of Informatics, University of Warsaw, Poland}
\affil[ ]{\texttt{a.karczmarz@mimuw.edu.pl}}

\maketitle
\newtheorem{lemma}{Lemma}
\newtheorem{theorem}{Theorem}
\newtheorem{remark}{Remark}
\newtheorem{corollary}{Corollary}

\maketitle

\newtheorem{fact}{Fact}
\newcommand{\fak}{Fakcharoenphol }
\newcommand{\rdiv}{\mathcal{P}}
\newcommand{\region}{\mathcal{p}}
\newcommand{\parti}{\mathcal{S}}
\newcommand{\univ}{\mathcal{U}}
\newcommand{\mmat}{\mathcal{M}}

\newcommand{\bnd}{\partial}
\newcommand{\ddg}{\text{DDG}}

\newcommand{\vals}{\mathbb{R}}

\newcommand{\msize}{m}
\newcommand{\mrows}{R}
\newcommand{\mcols}{C}
\newcommand{\mrowcnt}{k}
\newcommand{\mcolcnt}{l}
\newcommand{\mrow}{r}
\newcommand{\mcol}{c}

\newcommand{\mcurrows}{\bar{\mrows}}
\newcommand{\mcurcols}{\bar{\mcols}}

\newcommand{\minit}{\textsc{Init}}
\newcommand{\marow}{\textsc{Activate-Row}}
\newcommand{\mlb}{\textsc{Lower-Bound}}
\newcommand{\meb}{\textsc{Ensure-Bound-And-Get}}
\newcommand{\mblb}{\textsc{Block-Lower-Bound}}
\newcommand{\mbeb}{\textsc{Block-Ensure-Bound}}
\newcommand{\mcolmin}{\textsc{Current-Min}}
\newcommand{\mcolminrow}{\textsc{Current-Min-Row}}
\newcommand{\blcnt}{b}
\newcommand{\mblock}{B}
\newcommand{\mblocks}{\mathcal{B}}
\newcommand{\mgrcnt}{q}
\newcommand{\mgroup}{\mathcal{C}}
\newcommand{\mbnotyet}{\bar{\mblocks}}
\newcommand{\mlastg}{\texttt{last}}

\newcommand{\mfirstb}{\texttt{first}}
\newcommand{\mlastb}{\texttt{last}}
\newcommand{\mcminarr}{\texttt{cmin}}
\newcommand{\mqueue}{\textsc{Updates}}

\newcommand{\prcount}{rowcnt}
\newcommand{\pccount}{colcnt}
\newcommand{\pfcount}{rectcnt}

\newcommand{\offset}{\phi}
\newcommand{\nil}{\textbf{nil}}

\newcommand{\bpart}{block}
\newcommand{\epart}{exact}

\newcommand{\pqdec}{\textsc{Decrease-Key}}
\newcommand{\pqext}{\textsc{Extract-Min}}
\newcommand{\pqins}{\textsc{Insert}}
\newcommand{\pqminkey}{\textsc{Min-Key}}

\newcommand{\ppred}{\textsc{Pred}}
\newcommand{\psucc}{\textsc{Succ}}
\newcommand{\off}{d}
\newcommand{\om}{\text{off}}

\begin{abstract}
  We study the problem of computing shortest paths in so-called \emph{dense
  distance graphs}.
  Every planar graph $G$ on $n$ vertices can be partitioned into a set
    of $O(n/r)$ edge-disjoint regions
    (called an $r$-division) with $O(r)$ vertices each,
  such that each region has $O(\sqrt{r})$ vertices (called \emph{boundary}
  vertices) in common with other regions.
  A dense distance graph of a region is a complete graph containing
  all-pairs distances between its boundary nodes.
  A dense distance graph of an $r$-division is the union of the $O(n/r)$ dense
  distance graphs of the individual pieces.
  Since the introduction of dense distance graphs by \fak and Rao \cite{Fak:2006},
  computing single-source shortest paths in dense distance graphs
  has found numerous applications in fundamental planar graph algorithms.
  %for such problems as maximum flows and (dynamic) exact distance oracles.

  \fak and Rao \cite{Fak:2006} proposed an algorithm (later called \emph{FR-Dijkstra})
  for computing single-source shortest
  paths in a dense distance graph in $O\left(\frac{n}{\sqrt{r}}\log{n}\log{r}\right)$ time.
  We show an %\linebreak
  $O\left(\frac{n}{\sqrt{r}}\left(\frac{\log^2{r}}{\log^2\log{r}}+\log{n}\log^{\epsilon}{r}\right)\right)$
  time algorithm for this problem,
  which is the first improvement to date over FR-Dijkstra for the important
  case when $r$ is polynomial in $n$.
  In this case, our algorithm is faster by a factor of $O(\log^2{\log{n}})$
  and implies improved upper bounds for such planar graph problems as
  multiple-source multiple-sink maximum flow, single-source all-sinks maximum flow,
  and (dynamic) exact distance oracles.
  
 \end{abstract}

\section{Introduction}
Computing shortest paths and finding maximum flows are
among the most basic graph optimization problems. Still,
even though a lot of effort has been made to construct
efficient algorithms for these problems, the known bounds
for the most general versions are not known to be tight yet.
For general digraphs with real edge lengths,
Bellman--Ford algorithm computes the shortest path tree
from a given vertex in $O(nm)$ time, where $n$ denotes
the number of vertices and $m$ is the number of edges.
This simple methods remains to be the best known
strongly-polynomial time bound, although some optimizations
in the constant factor are known~\cite{RandomizedBellmanFord}.
For the case of non-negative edge lengths, Fredman and Tarjan's \cite{Fredman:1987}
implementation of Dijkstra's algorithm achieves $O(m+n\log{n})$ time.
The maximum flow problem with real edge capacities
can be solved in $O(nm)$ time as well \cite{Orlin:2013},
but here the algorithm is much more complex.

Finding a truly subquadratic algorithm assuming $m=O(n)$
for either the single-source shortest paths or the maximum flow
seems to be very difficult.
However, the situation changes significantly
if we restrict ourselves to planar digraphs, which constitute
an important class of sparse graphs.
In this regime the goal is to obtain linear or almost linear time
complexity.
A~linear time algorithm for the single-source shortest path
problem with non-negative edge lengths was proposed
by Henzinger et al. \cite{Henzinger:1997}.
In their breakthrough paper, \fak and Rao gave the first 
nearly-linear time algorithm for the case of real edge
lengths~\cite{Fak:2006}.
Their algorithm had $O(n\log^3{n})$ time complexity.
Although their upper bounds for single-source
shortest paths were eventually improved to $O\left(n\frac{\log^2{n}}{\log\log{n}}\right)$ by
Mozes and Wulff-Nilsen \cite{Mozes:2010}, the
techniques introduced in \cite{Fak:2006} proved very useful
in obtaining not only nearly-linear time algorithms for other
static planar graph problems, but also first sublinear
dynamic algorithms for shortest paths and maximum flows.

A major contribution of \fak and Rao was introducing the \emph{dense distance graph}.
For a planar digraph $G$ partitioned into edge-disjoint regions $G_1,\ldots,G_g$,
define a boundary of a region $\bnd{G_i}$ to be the vertices of $G_i$
shared with other regions.
Let $\bnd{G}=\bigcup_i \bnd{G_i}$.
The boundary of each weakly connected component of $G_i$
can be assumed to lie on a constant number of faces of $G_i$.
A partition is called an $r$-division, if additionally $g=O(n/r)$,
$|V(G_i)|=O(r)$ and $|\bnd{G_i}|=O(\sqrt{r})$.
The dense distance graph of a region is a complete digraph on $\bnd{G_i}$
with the length of edge $(u,v)$ equal to the length of the shortest
path $u\to v$ in $G_i$.
The dense distance graph of $G$ is the union of dense distance graphs
of individual regions.
\fak and Rao showed how to compute the lengths of the shortest paths
from $s\in \bnd{G}$ to all
other vertices of $\bnd{G}$ in $O(\sum_i|\bnd{G_i}|\log{|\bnd{G_i}|}\log{|\bnd{G}|})$ time,
which is nearly-linear in the number of \emph{vertices} of this graph,
as opposed to the number of edges, i.e., $O(\sum_i|\bnd{G_i}|^2)$.
Their method is often called the \emph{FR-Dijkstra}.
For an $r$-division, FR-Dijkstra runs in $O\left(\frac{n}{\sqrt{r}}\log{n}\log{r}\right)$ time.
Based on FR-Dijkstra, they also showed how to compute the dense distance
graph itself in nearly-linear time.

Following the work of \fak and Rao, dense distance graphs and FR-Dijkstra
have become important planar graph primitives and
have been used to obtain faster algorithms for numerous problems
related to cuts (e.g. \cite{Borradaile:2015, Borradaile:2016, Italiano:2011}), flows (\cite{Borradaile:2011, Lacki:2012})
and computing exact point-to-point distances (\cite{Fak:2006, Mozes:2012}) in planar digraphs.
FR-Dijkstra has also found applications in algorithms for
bounded-genus graphs, e.g., \cite{Borradaile:2016}.

Although better algorithms (running in $O(\sum_i|\bnd{G_i}|\log{|\bnd{G_i}}|)$ time)
have been proposed for computing the dense distance graph itself
(\cite{Italiano:2011, Klein:2005}), the only improvement over FR-Dijkstra
to date is due to Mozes et al. (\cite{Mozes:2014}, manuscript).
Using the methods of \cite{Henzinger:1997}, they show that for an $r$-division,
the shortest paths in a dense distance graph can be found in $O\left(\frac{n}{\sqrt{r}}\log^2{r}\right)$
time.
However, this does not improve over FR-Dijkstra in the
case when $r$ is polynomial in $n$, a case which emerges in many applications.

In this paper we show an algorithm for computing single-source shortest paths
in a dense distance graph in 
$O\left(\sum_i|\bnd(G_i)|\left(\frac{\log^2{|\bnd{G_i}|}}{\log^2\log{|\bnd{G_i}|}}+\
\log{|\bnd{G}|}\log^{\epsilon}{|\bnd{G_i}|}\right)\right)$ time (for any $\epsilon\in(0,1)$),
which is faster than FR-Dijkstra in \emph{all} cases.
Specifically, in the case of an $r$-division with $r=\text{poly}(n)$, the algorithm runs in
$O\left(\frac{n}{\sqrt{r}}\frac{\log^2{n}}{\log^2\log{n}}\right)$
time.
Our algorithm implies an improvement by a factor of $O(\log^2\log{n})$ in the time complexity
for a number of planar
digraph problems such as
multiple-source multiple-sink maximum flows, maximum bipartite matching~\cite{Borradaile:2011},
single-source all-sinks maximum flows~\cite{Lacki:2012},
exact distance oracles~\cite{Mozes:2012},
It also yields polylog-logarithmic improvements to
dynamic algorithms for both
shortest paths and maximum flows~\cite{Italiano:2011, Kaplan:2012, Klein:2005}.

However, for small values of $r$, such as $r=\text{polylog}(n)$, our algorithm does
not improve on \cite{Mozes:2014}, as the $O\left(\frac{n}{\sqrt{r}}\log{n}\log^\epsilon{r}\right)$
term starts to dominate the overall complexity of our algorithm.
Dense distance graphs for $r$-divisions with $r=\text{polylog}(n)$
have also found applications, most notably in the $O(n\log\log{n})$ algorithm
for minimum $s,t$-cut in undirected planar graphs \cite{Italiano:2011}.
However, computing shortest paths in a DDG is not a bottleneck in this case.
For other applications of $r$-divisions with small $r$, consult \cite{Mozes:2014}.

\paragraph{Overview of the Result}
In order to obtain the speedup we use a subtle combination of techniques.
The problem of computing the single-source shortest paths
in a dense distance graphs is solved with an optimized
implementation of Dijkstra's algorithm.
Since the vertices of $\bnd{G_i}$ lie on $O(1)$ faces
of a planar digraph $G_i$, we can exploit the fact that
many of the shortest paths represented by the dense distance
graph have to cross.
Consequently, there is no point in relaxing most of the edges
of the dense distance graph of $G_i$.
The edge-length matrix of a dense distance graph on $G_i$
can be partitioned into a constant number of so-called
\emph{staircase Monge matrices}.
A natural approach to restricting the number of edges of the dense
distance graph to be relaxed is to design a data structure
reporting the column minima of a certain staircase Monge matrix $\mmat$
in an online fashion.
Specifically, the data structure has to handle
\emph{row activations} intermixed with extractions of the column minima
in non-decreasing order.
Once Dijkstra's algorithm establishes the distance $d(v)$
to some vertex $v$, the row of~$\mmat$ corresponding to $v$
%(and containing entires not smaller than $d(v)$)
is activated and becomes available to the data structure.
This row contains values $d(v)+\ell(v,w)$, where $\ell(v,w)$ is the length
of the edge $(v,w)$ of the DDG.
Alternatively, a minimum in some column corresponding to $v$
(in the revealed part of $\mmat$)
may be used by Dijkstra's
algorithm to establish a new distance label $d(v)$, even though
not all rows of $\mmat$ have been revealed so far.
In this case, we can guarantee that all the inactive rows of $\mmat$
contain entries not smaller than $d(v)$ and hence we can safely extract
the column minimum of $\mmat$.

We show how to use such a data structure to obtain an improved
single-source shortest path algorithm in Section~\ref{s:dijkstra}.
Such an approach was also used by \fak and Rao \cite{Fak:2006} and
Mozes et al. \cite{Mozes:2014},
who both dealt with staircase Monge matrices by using a recursive
partition into square Monge matrices, which are easier to handle.
In particular, \fak and Rao showed that a sequence of row activations
and column minima extractions can be performed on a $\msize\times\msize$
square Monge matrix in $O(\msize\log{\msize})$ time.
The recursive partition assigns each row and column to $O(\log{|\bnd{G_i}|})$ square
Monge matrices.
As a result, the total time for handling all the square matrices is
$O(|\bnd{G_i}|\log^2{|\bnd{G_i}|})$.

Our first component is a refined data structure for handling row activations
and column minima extractions on a \emph{rectangular} Monge matrix,
described in Section~\ref{s:full_min}.
We show a data structure supporting any sequence of operations on a $\mrowcnt\times\mcolcnt$ matrix in
$O\left(\mrowcnt\frac{\log{\msize}}{\log\log{\msize}}+\mcolcnt\log{\msize}\right)$ total time,
where $\msize=\max(\mrowcnt,\mcolcnt)$.
In comparison to \cite{Fak:2006}, we do not map all the columns
to active rows containing the current minima.
Instead, the columns are assigned \emph{potential row sets} of bounded size
that are guaranteed to contain the ``currently optimal'' rows.
This relaxed notion allows to remove the seemingly unavoidable binary search
at the heart of \cite{Fak:2006} and instead use the SMAWK algorithm \cite{SMAWK:1987}
to split the potential row sets once they become too large.
The maintenance of a priority queue used for reporting the column minima in order
is possible with
the recent efficient data structure supporting subrow minimum queries in Monge matrices \cite{Gawry:2015}
and the usage of priority queues with $O(1)$ time
$\pqdec$ operation \cite{Fredman:1987}.

The second step is to relax the requirements posed on a data structure
handling rectangular $\mrowcnt\times\mcolcnt$ Monge matrices.
It is motivated by the following observation.
Let $\Delta>0$ be an integer.
Imagine we have found the minima of $\mcolcnt/\Delta$ evenly spread, \emph{pivot} columns
$\mcol_1,\ldots,\mcol_{\mcolcnt/\Delta}$.
Denote by $\mrow_1,\ldots,\mrow_{\mcolcnt/\Delta}$ the rows containing
the corresponding minima.
A well-known property of Monge matrices implies that
for any column $\mcol'$ lying between $\mcol_i$ and $\mcol_{i+1}$,
we only have to look for a minimum of $\mcol'$ in rows $\mrow_i,\ldots,\mrow_{i+1}$.
Thus, the minima in the remaining columns can be found in $O(\mrowcnt\Delta+\mcolcnt)$
total time.
In Section~\ref{s:block_min} we show how to adapt this idea to an online setting
that fits our needs.
The columns are partitioned into $O(\mcolcnt/\Delta)$ \emph{blocks} of size at most $\Delta$.
Each block is conceptually contracted to a single column: an entry in row $\mrow$
is defined as the minimum in row $\mrow$ over the contracted columns.
For sufficiently small values of $\Delta$, such a minimum can be computed
in $O(1)$ time using the data structure of \cite{Gawry:2015}.
Locating a block minimum can be seen as an introduction of a new pivot column.
We handle the block matrix with the data structure of Section~\ref{s:full_min}
and prove that the total time needed to correctly report all the column
minima is $O\left(\mrowcnt\frac{\log{\msize}}{\log\log{\msize}}+\mrowcnt\Delta+\mcolcnt+\frac{\mcolcnt}{\Delta}\log{\msize}\right)$.
In particular, for $\Delta=\log^{1-\epsilon}{\msize}$, this bound becomes
$O\left(\mrowcnt\frac{\log{\msize}}{\log\log{\msize}}+\mcolcnt\log^{\epsilon}{\msize}\right)$.

Finally, in Section~\ref{s:stair_min} we exploit the asymmetry of per-row and per-column costs
of the developed block data structure for rectangular matrices
by using a different partition of a staircase Monge
matrix.
Our partition is biased towards columns, i.e., the matrix is split into
\emph{rectangular} (as opposed to square) Monge matrices, each with roughly poly-logarithmically more columns
than rows.
Consequently, the total number of rows in these matrices is $O\left(|\bnd{G_i}|\frac{\log{|\bnd{G_i}|}}{\log\log{|\bnd{G_i}|}}\right)$,
whereas the total number of columns is only slightly larger, i.e.,
$O\left(|\bnd{G_i}|\log^{1+\epsilon}{|\bnd{G_i}|}\right)$.
This yields a data structure handling staircase Monge matrices in
$O\left(|\bnd{G_i}|\frac{\log^2{|\bnd{G_i}|}}{\log^2\log{|\bnd{G_i}|}}\right)$
total time.

\paragraph{Model of Computation} We assume the standard word-RAM model with word size $\Omega(\log{n})$.
However, we stress that our algorithm works in the very general case of
\emph{real} edge lengths, i.e., we are only allowed to perform arithmetical
operations on lengths and compare them.

\paragraph{Outline of the Paper}
We present our algorithm in a bottom-up manner:
in Section~\ref{s:pre} we introduce the terminology, while
in Sections~\ref{s:full_min}, \ref{s:block_min} and~\ref{s:stair_min} 
we develop the increasingly more powerful data structures
for reporting column minima in online Monge matrices.
Each of these data structures is used in a black-box
manner in the following section.
The improved algorithm for computing single-source shortest paths
in dense distance graph is
discussed in detail in Section~\ref{s:dijkstra}.
We describe the most important implications in Section~\ref{s:impli}.

\section{Preliminaries}\label{s:pre}
\subsection{Partitions of Planar Graphs and Dense Distance Graphs}\label{s:prelim}
Let $G=(V,E)$ be a planar weighted digraph.
Let $E_1,\ldots,E_g$ be a partition of $E$ into non-empty, disjoint subsets.
We define \emph{regions} of $G$ to be the induced subgraphs $G_i=G[E_i]$.
The \emph{boundary} $\bnd{G_i}$ of a region $G_i$ is defined to be the set
of vertices of $G_i$ that also belong to other regions, i.e.,
$\bnd{G_i}=V(G_i)\cap V(G[E\setminus E_i])$.
Let $\bnd{G}=\bigcup_{i=1}^g \bnd{G_i}$.

A partition of $G$ into regions $G_1,\ldots,G_g$ is called a \emph{partition
with few holes} if for each $i$ and for each weakly connected component $G_i^j$ of $G_i$, the
vertices of $\bnd{G_i}$ lie on $O(1)$ faces of $G_i^j$.
A \emph{hole} is thus defined to be a face of $G_i$ containing
at least one vertex of $\bnd{G_i}$.

For a partition of $G$ with few holes,
we denote by $\ddg(G_i)$ the \emph{dense distance graph} of a region~$G_i$,
which is defined to be a
complete directed graph on vertices $\bnd{G_i}$, such that the weight
of an edge $(u,v)$ is equal to the length of the shortest path $u\to v$
in $G_i$.
If the edge lengths are non-negative, the dense distance graphs are typically computed using the \emph{multiple-source
shortest paths} data structure of Klein \cite{Klein:2005}.
This data structure allows us to preprocess a plane graph $G=(V,E)$ with a distinguished
face $F$ in $O(n\log{n})$ time so that we can find in $O(\log{n})$ time
the length of the shortest path $u\to v$ for any $u\in F$ and $v\in V$.
As the boundary vertices in each component of $G_i$ lie on $O(1)$
faces, we can compute $\ddg(G_i)$ in $O((|V(G_i)|+|\bnd{G_i}|^2)\log{|V(G_i)|})$ time.
$\ddg(G)$ is defined as $\bigcup_{i=1}^g\ddg(G_i)$.

For $r<n$, an $r$-division of a planar graph $G$ is a partition $G_1,\ldots,G_g$
with few holes such that $g=O(n/r)$ while $|V(G_i)|=O(r)$ and $|\bnd{G_i}|=O(\sqrt{r})$ for any $i=1,\ldots,g$.
Klein et al. \cite{Klein:2013} proved that for any triangulated and biconnected planar graph $G$
and any $r<n$, an $r$-division can be computed in linear time.
Given an $r$-division $G_1,\ldots,G_g$, $\ddg(G)$ can be thus computed
by computing the dense distance graph
for each region separately in $O(n\log{r})$ total time.

\subsection{Matrices and Their Minima}
In this paper we define a \emph{matrix} to be a partial function
$\mmat:\mrows\times\mcols\to\mathbb{R}$, where $\mrows$ (called \emph{rows})
and~$\mcols$ (called \emph{columns}) are some totally ordered finite sets.
Set $\mrows=\{\mrow_1,\ldots,\mrow_\mrowcnt\}$ and $\mcols=\{\mcol_1,\ldots,\mcol_\mcolcnt\}$,
where $\mrow_1\leq\ldots\leq\mrow_\mrowcnt$
and $\mcol_1\leq\ldots\leq\mcol_\mcolcnt$.
If for $\mrow_i,\mrow_j\in\mrows$ we have $\mrow_i\leq \mrow_j$, we also say
that $\mrow_i$ is (weakly) \emph{above} $\mrow_j$
and $\mrow_j$ is (weakly) \emph{below} $\mrow_i$.
Similarly, when $\mcol_i,\mcol_j$ we have $\mcol_i<\mcol_j$, we
say that $\mcol_i$ is \emph{to the left} of $\mcol_j$
and $\mcol_j$ is \emph{to the right} of $\mcol_i$.

For some matrix $\mmat$ defined on rows $\mrows$ and columns $\mcols$,
for $\mrow\in\mrows$ and $\mcol\in\mcols$ we denote by
$\mmat_{\mrow,\mcol}$ an \emph{element} of $\mmat$.
An element is the value of $\mmat$ on pair $(\mrow,\mcol)$,
if defined.

For $\mrows'\subseteq\mrows$ and $\mcols'\subseteq\mcols$
we define $\mmat(\mrows',\mcols')$ to be a \emph{submatrix}
of $\mmat$.
$\mmat(\mrows',\mcols')$ is a partial function on $\mrows'\times\mcols'$
satisfying $\mmat(\mrows',\mcols')_{\mrow,\mcol}=\mmat_{\mrow,\mcol}$
for any $(\mrow,\mcol)\in\mrows'\times\mcols'$ such that
$\mmat_{\mrow,\mcol}$ is defined.

The \emph{minimum} of a matrix $\min\{\mmat\}$ is defined as the minimum
value of the partial function $\mmat$.
The \emph{column minimum} of $\mmat$ in column $\mcol$ is defined
as $\min\{\mmat(\mrows,\{\mcol\})\}$.

We call a matrix $\mmat$ \emph{rectangular} if $\mmat_{\mrow,\mcol}$
is defined for every $\mrow\in\mrows$ and $\mcol\in\mcols$.
A matrix is called \emph{staircase} (\emph{flipped staircase}) if $|\mrows|=|\mcols|$ and
$\mmat_{\mrow_i,\mcol_j}$ is defined if and only if $i\leq j$ ($i\geq j$ respectively).

Finally, a \emph{subrectangle} of $\mmat$ is a rectangular matrix
$\mmat(\{\mrow_a,\ldots,\mrow_b\}, \{\mcol_x,\ldots,\mcol_y\})$
for $1\leq a\leq b\leq \mrowcnt$, $1\leq x\leq y\leq \mcolcnt$.
We define a \emph{subrow} to be a subrectangle with a single row.

Given a matrix $\mmat$ and a function $\off:\mrows\to\mathbb{R}$,
we define the \emph{offset matrix} $\om(\mmat,\off)$ to be a matrix $\mmat'$
such that for all $\mrow\in\mrows$, $\mcol\in\mcols$ for which
$\mmat_{\mrow,\mcol}$ is defined, we have
$\mmat'_{\mrow,\mcol}=\mmat_{\mrow,\mcol}+\off(\mrow)$.

\subsection{Monge Matrices}
We say that a matrix $\mmat$ with rows $\mrows$ and columns $\mcols$
is a \emph{Monge matrix}, if for each $\mrow_1,\mrow_2\in\mrows$, $\mrow_1\leq\mrow_2$
and $\mcol_1,\mcol_2\in\mcols$, $\mcol_1\leq\mcol_2$ such
that all elements $\mmat_{\mrow_1,\mcol_1},\mmat_{\mrow_1,\mcol_2},
\mmat_{\mrow_2,\mcol_1},\mmat_{\mrow_2,\mrow_2}$ are defined,
the following \emph{Monge property} holds
$$\mmat_{\mrow_2,\mcol_1}+\mmat_{\mrow_1,\mcol_2}\leq\mmat_{\mrow_1,\mcol_1}+\mmat_{\mrow_2,\mcol_2}.$$

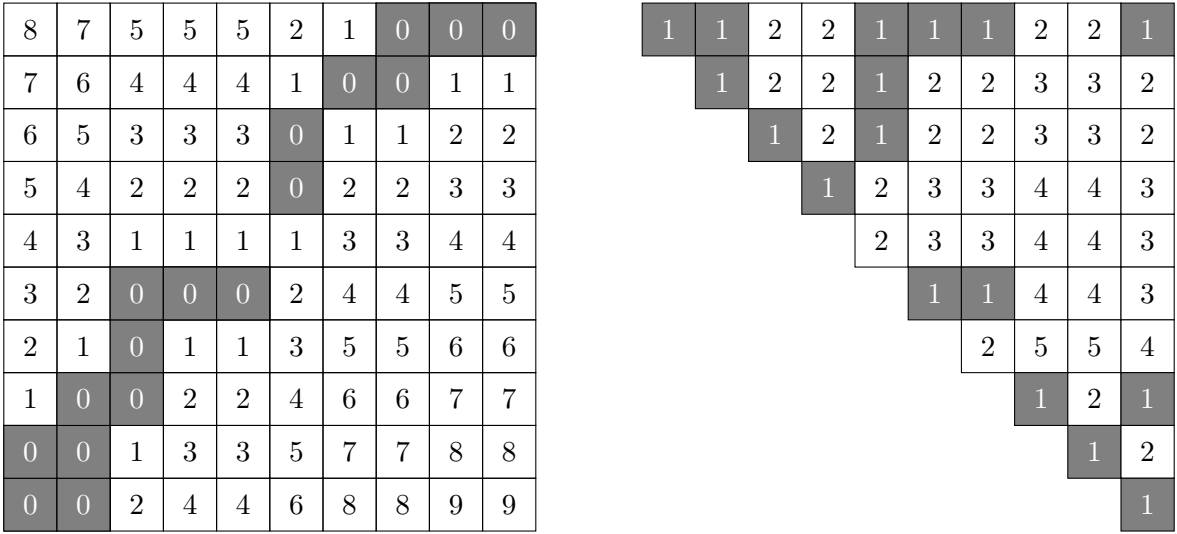
\begin{figure}[htb]
\centering
\begin{tikzpicture}[scale=0.7]

\foreach \x/\a/\b in {0/0/1,1/0/2,2/2/4,3/4/4,4/4/4,5/6/7,6/8/8,7/8/9,8/9/9,9/9/9} {
  \foreach \y in {\b,...,9} {
    \pgfmathsetmacro{\u}{\y-\b};
    \draw (\x,\y) rectangle (\x+1,\y+1) node[pos=.5,color=black] {\pgfmathprintnumber[int trunc]{\u}};
  }
  \foreach \y in {0,...,\a} {
    \pgfmathsetmacro{\u}{\a-\y};
    \draw (\x,\y) rectangle (\x+1,\y+1) node[pos=.5,color=black] {\pgfmathprintnumber[int trunc]{\u}};
  }

  \foreach \y in {\a,...,\b} {
    \draw[fill=gray] (\x,\y) rectangle (\x+1,\y+1) node[pos=.5,color=white] {0};
  }

}

\draw[fill=gray] (12,9) rectangle (13,10) node[pos=.5,color=white] {1};
\draw[fill=gray] (13,9) rectangle (14,10) node[pos=.5,color=white] {1};
\draw[fill=white] (14,9) rectangle (15,10) node[pos=.5,color=black] {2};
\draw[fill=white] (15,9) rectangle (16,10) node[pos=.5,color=black] {2};
\draw[fill=gray] (16,9) rectangle (17,10) node[pos=.5,color=white] {1};
\draw[fill=gray] (17,9) rectangle (18,10) node[pos=.5,color=white] {1};
\draw[fill=gray] (18,9) rectangle (19,10) node[pos=.5,color=white] {1};
\draw[fill=white] (19,9) rectangle (20,10) node[pos=.5,color=black] {2};
\draw[fill=white] (20,9) rectangle (21,10) node[pos=.5,color=black] {2};
\draw[fill=gray] (21,9) rectangle (22,10) node[pos=.5,color=white] {1};
\draw[fill=gray] (13,8) rectangle (14,9) node[pos=.5,color=white] {1};
\draw[fill=white] (14,8) rectangle (15,9) node[pos=.5,color=black] {2};
\draw[fill=white] (15,8) rectangle (16,9) node[pos=.5,color=black] {2};
\draw[fill=gray] (16,8) rectangle (17,9) node[pos=.5,color=white] {1};
\draw[fill=white] (17,8) rectangle (18,9) node[pos=.5,color=black] {2};
\draw[fill=white] (18,8) rectangle (19,9) node[pos=.5,color=black] {2};
\draw[fill=white] (19,8) rectangle (20,9) node[pos=.5,color=black] {3};
\draw[fill=white] (20,8) rectangle (21,9) node[pos=.5,color=black] {3};
\draw[fill=white] (21,8) rectangle (22,9) node[pos=.5,color=black] {2};
\draw[fill=gray] (14,7) rectangle (15,8) node[pos=.5,color=white] {1};
\draw[fill=white] (15,7) rectangle (16,8) node[pos=.5,color=black] {2};
\draw[fill=gray] (16,7) rectangle (17,8) node[pos=.5,color=white] {1};
\draw[fill=white] (17,7) rectangle (18,8) node[pos=.5,color=black] {2};
\draw[fill=white] (18,7) rectangle (19,8) node[pos=.5,color=black] {2};
\draw[fill=white] (19,7) rectangle (20,8) node[pos=.5,color=black] {3};
\draw[fill=white] (20,7) rectangle (21,8) node[pos=.5,color=black] {3};
\draw[fill=white] (21,7) rectangle (22,8) node[pos=.5,color=black] {2};
\draw[fill=gray] (15,6) rectangle (16,7) node[pos=.5,color=white] {1};
\draw[fill=white] (16,6) rectangle (17,7) node[pos=.5,color=black] {2};
\draw[fill=white] (17,6) rectangle (18,7) node[pos=.5,color=black] {3};
\draw[fill=white] (18,6) rectangle (19,7) node[pos=.5,color=black] {3};
\draw[fill=white] (19,6) rectangle (20,7) node[pos=.5,color=black] {4};
\draw[fill=white] (20,6) rectangle (21,7) node[pos=.5,color=black] {4};
\draw[fill=white] (21,6) rectangle (22,7) node[pos=.5,color=black] {3};
\draw[fill=white] (16,5) rectangle (17,6) node[pos=.5,color=black] {2};
\draw[fill=white] (17,5) rectangle (18,6) node[pos=.5,color=black] {3};
\draw[fill=white] (18,5) rectangle (19,6) node[pos=.5,color=black] {3};
\draw[fill=white] (19,5) rectangle (20,6) node[pos=.5,color=black] {4};
\draw[fill=white] (20,5) rectangle (21,6) node[pos=.5,color=black] {4};
\draw[fill=white] (21,5) rectangle (22,6) node[pos=.5,color=black] {3};
\draw[fill=gray] (17,4) rectangle (18,5) node[pos=.5,color=white] {1};
\draw[fill=gray] (18,4) rectangle (19,5) node[pos=.5,color=white] {1};
\draw[fill=white] (19,4) rectangle (20,5) node[pos=.5,color=black] {4};
\draw[fill=white] (20,4) rectangle (21,5) node[pos=.5,color=black] {4};
\draw[fill=white] (21,4) rectangle (22,5) node[pos=.5,color=black] {3};
\draw[fill=white] (18,3) rectangle (19,4) node[pos=.5,color=black] {2};
\draw[fill=white] (19,3) rectangle (20,4) node[pos=.5,color=black] {5};
\draw[fill=white] (20,3) rectangle (21,4) node[pos=.5,color=black] {5};
\draw[fill=white] (21,3) rectangle (22,4) node[pos=.5,color=black] {4};
\draw[fill=gray] (19,2) rectangle (20,3) node[pos=.5,color=white] {1};
\draw[fill=white] (20,2) rectangle (21,3) node[pos=.5,color=black] {2};
\draw[fill=gray] (21,2) rectangle (22,3) node[pos=.5,color=white] {1};
\draw[fill=gray] (20,1) rectangle (21,2) node[pos=.5,color=white] {1};
\draw[fill=white] (21,1) rectangle (22,2) node[pos=.5,color=black] {2};
\draw[fill=gray] (21,0) rectangle (22,1) node[pos=.5,color=white] {1};
\end{tikzpicture}
\caption{Example $10\times 10$ Monge matrices: a rectangular one to the left and a staircase
one to the right. The grey cells contain the column minima of the respective columns.}
\label{f:monge}
\end{figure}

\begin{fact}\label{f:submatrix}
Let $\mmat$ be a Monge matrix.
For any $\mrows'\subseteq\mrows$ and $\mcols'\subseteq\mcols$,
$\mmat(\mrows',\mcols')$ is also a Monge matrix.
\end{fact}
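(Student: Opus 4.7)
The plan is to give a direct proof straight from the definitions, observing that the submatrix operation preserves every ingredient of the Monge property: the total order, the values of defined entries, and the ``defined'' status of the four participating entries.

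First, I would fix arbitrary rows $\mrow_1,\mrow_2\in\mrows'$ with $\mrow_1\leq\mrow_2$ and arbitrary columns $\mcol_1,\mcol_2\in\mcols'$ with $\mcol_1\leq\mcol_2$ such that all four of the entries $\mmat(\mrows',\mcols')_{\mrow_1,\mcol_1},\ \mmat(\mrows',\mcols')_{\mrow_1,\mcol_2},\ \mmat(\mrows',\mcols')_{\mrow_2,\mcol_1},\ \mmat(\mrows',\mcols')_{\mrow_2,\mcol_2}$ are defined. Since $\mrows'\subseteq\mrows$ and $\mcols'\subseteq\mcols$ inherit their total orders from $\mrows$ and $\mcols$, the indices $\mrow_1,\mrow_2,\mcol_1,\mcol_2$ also satisfy $\mrow_1\leq\mrow_2$ and $\mcol_1\leq\mcol_2$ in $\mrows$ and $\mcols$.

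Next, I would invoke the defining identity of the submatrix, namely $\mmat(\mrows',\mcols')_{\mrow,\mcol}=\mmat_{\mrow,\mcol}$ whenever the left-hand side is defined. This shows that the four entries above are also defined in $\mmat$ and take the same values. The Monge property of $\mmat$ then yields
$$\mmat_{\mrow_2,\mcol_1}+\mmat_{\mrow_1,\mcol_2}\leq \mmat_{\mrow_1,\mcol_1}+\mmat_{\mrow_2,\mcol_2},$$
and rewriting each entry in terms of $\mmat(\mrows',\mcols')$ gives the corresponding inequality for the submatrix. Since $\mrow_1,\mrow_2,\mcol_1,\mcol_2$ were arbitrary, $\mmat(\mrows',\mcols')$ is Monge.

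There is no real obstacle here; the statement is essentially a bookkeeping observation stating that the Monge condition is hereditary under restriction of rows and columns. The only subtlety worth spelling out is the one about the ``defined'' predicate, so that the argument is valid even when $\mmat$ is a partial function (e.g., for staircase or flipped staircase shapes), which is precisely why the hypothesis of the Monge property is phrased conditionally on all four entries being defined.
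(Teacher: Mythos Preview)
Your proof is correct and is exactly the natural direct verification from the definitions. The paper itself does not prove this statement at all; it is merely recorded as a ``Fact'' and left to the reader, precisely because it is the immediate bookkeeping observation you describe.
\par
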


\begin{fact}\label{f:condense}
Let $\mmat$ be a rectangular Monge matrix and assume $\mrows$
is partitioned into disjoint blocks $\mathcal{R}=\mrows_1,\ldots,\mrows_a$
such that each $\mrows_i$ is a contiguous group of subsequent rows
and each~$\mrows_i$ is above~$\mrows_{i+1}$.
Assume also that the set $\mcols$ is partitioned into
blocks $\mathcal{C}=\mcols_1,\ldots,\mcols_b$ so that $\mcols_i$ is to the left
of $\mcols_{i+1}$.
Then, a matrix $\mmat'$ with rows $\mathcal{R}$ and columns $\mathcal{C}$
defined as
$$\mmat'_{\mrows_i,\mcols_j}=\min\{\mmat(\mrows_i,\mcols_j)\},$$
is also a Monge matrix.
\end{fact}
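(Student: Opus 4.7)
The plan is to lift the Monge inequality from $\mmat'$ back to $\mmat$ via the minimizing cells in each relevant block and then apply the original Monge property of $\mmat$ exactly once. First I would note that because $\mmat$ is rectangular and every block $\mrows_i$, $\mcols_j$ is nonempty, each value $\mmat'_{\mrows_i,\mcols_j}=\min\{\mmat(\mrows_i,\mcols_j)\}$ is well-defined and attained, so $\mmat'$ is itself a rectangular matrix on $\mathcal{R}\times\mathcal{C}$ and the Monge inequality for $\mmat'$ is at least meaningful on every quadruple of entries.

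To check the Monge property of $\mmat'$, I would fix block indices $i_1<i_2$ and $j_1<j_2$ (the cases $i_1=i_2$ or $j_1=j_2$ make the inequality a trivial identity) and choose cells $(\mrow_a,\mcol_x)\in\mrows_{i_1}\times\mcols_{j_1}$ and $(\mrow_b,\mcol_y)\in\mrows_{i_2}\times\mcols_{j_2}$ that attain $\mmat'_{\mrows_{i_1},\mcols_{j_1}}$ and $\mmat'_{\mrows_{i_2},\mcols_{j_2}}$ respectively. Because the blocks are contiguous and $\mrows_{i_1}$ lies above $\mrows_{i_2}$ while $\mcols_{j_1}$ lies to the left of $\mcols_{j_2}$, we have $\mrow_a\leq\mrow_b$ and $\mcol_x\leq\mcol_y$ in the underlying order, so applying the Monge property of $\mmat$ to this $2\times 2$ pattern yields
$$\mmat_{\mrow_b,\mcol_x}+\mmat_{\mrow_a,\mcol_y}\leq\mmat_{\mrow_a,\mcol_x}+\mmat_{\mrow_b,\mcol_y}.$$
The right-hand side is exactly $\mmat'_{\mrows_{i_1},\mcols_{j_1}}+\mmat'_{\mrows_{i_2},\mcols_{j_2}}$ by construction. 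On the left, $(\mrow_b,\mcol_x)$ lies in $\mrows_{i_2}\times\mcols_{j_1}$ and $(\mrow_a,\mcol_y)$ lies in $\mrows_{i_1}\times\mcols_{j_2}$, so each summand is at least the block minimum of the block containing it, i.e.\ at least $\mmat'_{\mrows_{i_2},\mcols_{j_1}}$ and $\mmat'_{\mrows_{i_1},\mcols_{j_2}}$, respectively. Chaining these lower bounds with the displayed Monge inequality gives precisely the Monge property for $\mmat'$.

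There is no real obstacle here; the single subtle point worth stressing in the write-up is that the ``cross'' cells $(\mrow_b,\mcol_x)$ and $(\mrow_a,\mcol_y)$ need not themselves be minimizers in their blocks. This is harmless because the Monge inequality is used in the direction where these two values sit on the smaller side, so replacing them by the corresponding block minima can only strengthen the inequality, and the argument goes through without any assumption on where the minima within a block are located.
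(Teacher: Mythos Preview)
Your proof is correct and is the standard argument for this fact. The paper states Fact~\ref{f:condense} without proof (it is listed among basic preliminaries on Monge matrices), so there is nothing to compare against; your write-up would serve as a complete justification.
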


\begin{fact}\label{f:monoton}
Let $\mmat$ be a rectangular Monge matrix. Assume that for some
$\mcol\in\mcols$ and $\mrow\in\mrows$, $\mmat_{\mrow,\mcol}$ is
a column minimum of $\mcol$.
Then, for each column $\mcol^-$ to the left of $\mcol$,
there exists a row $\mrow^-$ (weakly) below $\mrow$,
such that $\mmat_{\mrow^-,\mcol^-}$ is a column minimum
of $\mcol^-$.
Similarly, for each column $\mcol^+$ to the right of $\mcol$,
there exists a row $\mrow^+$ (weakly) above $\mrow$,
such that $\mmat_{\mrow^+,\mcol^+}$ is a column minimum
of $\mcol^+$.
\end{fact}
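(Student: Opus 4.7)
The plan is to derive both claims by a direct application of the Monge inequality to an appropriately chosen $2\times 2$ submatrix, exploiting the fact that $\mmat$ is rectangular so all four entries involved are defined.

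First I would handle the ``to the right'' case. Fix a column $\mcol^+$ to the right of $\mcol$ and consider any row $\mrow'$ strictly below $\mrow$. Applying the Monge property to the four entries $\mmat_{\mrow,\mcol}$, $\mmat_{\mrow,\mcol^+}$, $\mmat_{\mrow',\mcol}$, $\mmat_{\mrow',\mcol^+}$ (with $\mrow$ above $\mrow'$ and $\mcol$ left of $\mcol^+$) yields
\[
\mmat_{\mrow',\mcol} + \mmat_{\mrow,\mcol^+} \;\leq\; \mmat_{\mrow,\mcol} + \mmat_{\mrow',\mcol^+}.
\]
Since $\mmat_{\mrow,\mcol}$ is a column minimum of $\mcol$, we have $\mmat_{\mrow',\mcol} \geq \mmat_{\mrow,\mcol}$, and rearranging gives $\mmat_{\mrow,\mcol^+} \leq \mmat_{\mrow',\mcol^+}$. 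Thus no row strictly below $\mrow$ is strictly better than $\mrow$ in column $\mcol^+$, so among the rows attaining the column minimum of $\mcol^+$ at least one lies weakly above $\mrow$ (in the worst case, $\mrow$ itself).

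The ``to the left'' case is symmetric. For a column $\mcol^-$ to the left of $\mcol$ and any row $\mrow'$ strictly above $\mrow$, applying the Monge property to the submatrix with rows $\mrow', \mrow$ and columns $\mcol^-, \mcol$ gives
\[
\mmat_{\mrow,\mcol^-} + \mmat_{\mrow',\mcol} \;\leq\; \mmat_{\mrow',\mcol^-} + \mmat_{\mrow,\mcol},
\]
and the inequality $\mmat_{\mrow',\mcol} \geq \mmat_{\mrow,\mcol}$ again yields $\mmat_{\mrow,\mcol^-} \leq \mmat_{\mrow',\mcol^-}$, so some column minimum of $\mcol^-$ is attained weakly below $\mrow$.

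There is no real obstacle here: the statement is the classical totally-monotone property and the only thing one has to be mindful of is the orientation convention (rows indexed top-to-bottom, so ``above'' means smaller index and ``below'' means larger index), and the fact that because $\mmat$ is rectangular, the required four entries are simultaneously defined so the Monge inequality applies without side conditions. A one-line remark that $\mmat_{\mrow,\mcol^+}$ (resp.\ $\mmat_{\mrow,\mcol^-}$) itself is therefore a valid witness completes the argument.
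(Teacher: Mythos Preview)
Your argument is correct and is the standard derivation of this classical monotonicity property from the Monge inequality. Note that the paper states this as a \emph{Fact} without proof, so there is no paper-side argument to compare against; your direct $2\times 2$ application of the Monge property (together with the observation that $\mrow$ itself serves as a witness if no strictly higher/lower row attains the minimum) is exactly how one proves it.
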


\begin{fact}\label{f:contig}
Let $\mmat$ be a rectangular Monge matrix. Let $\mrow\in\mrows$
and $\mcols=\{\mcol_1,\ldots,\mcol_\mcolcnt\}$.
The set of columns $\mcols_\mrow\in\mcols$ having
one of their column minima in row $\mrow$ is 
contiguous, that is either $\mcols_\mrow=\emptyset$
or $\mcols_\mrow=\{\mcol_a,\ldots,\mcol_b\}$
for some $1\leq a\leq b\leq \mcolcnt$.
\end{fact}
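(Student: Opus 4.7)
The plan is as follows. Fix $\mrow\in\mrows$ and assume $\mcols_\mrow$ is non-empty; let $\mcol_a$ and $\mcol_b$ be its leftmost and rightmost elements. Then $\mcols_\mrow\subseteq\{\mcol_a,\ldots,\mcol_b\}$ by the choice of $a$ and $b$, so it suffices to prove the reverse inclusion: for every column $\mcol_i$ with $a\le i\le b$, row $\mrow$ attains the column minimum in $\mcol_i$. The cases $i=a$ and $i=b$ are immediate, so I would fix an intermediate $i$ with $a<i<b$.

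Pick any row $\mrow^*$ that attains the column minimum of $\mcol_i$. If $\mrow^*=\mrow$ there is nothing to prove; otherwise either $\mrow<\mrow^*$ or $\mrow>\mrow^*$. In the first case I would invoke the Monge inequality on the $2\times 2$ submatrix with rows $\mrow<\mrow^*$ and columns $\mcol_a<\mcol_i$, yielding
$$\mmat_{\mrow^*,\mcol_a}+\mmat_{\mrow,\mcol_i}\le\mmat_{\mrow,\mcol_a}+\mmat_{\mrow^*,\mcol_i}.$$
Combining this with the two column-minimum conditions $\mmat_{\mrow,\mcol_a}\le\mmat_{\mrow^*,\mcol_a}$ (since $\mrow$ is a minimum row of $\mcol_a$) and $\mmat_{\mrow^*,\mcol_i}\le\mmat_{\mrow,\mcol_i}$ (since $\mrow^*$ is a minimum row of $\mcol_i$) forces equality throughout, so $\mmat_{\mrow,\mcol_i}=\mmat_{\mrow^*,\mcol_i}$ and $\mrow$ also attains the column minimum in $\mcol_i$. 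The case $\mrow>\mrow^*$ is symmetric: apply the Monge inequality to rows $\mrow^*<\mrow$ and columns $\mcol_i<\mcol_b$, and then use that $\mrow^*$ is a minimum row of $\mcol_i$ and $\mrow$ is a minimum row of $\mcol_b$ to sandwich the relevant sums in exactly the same way.

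I do not foresee any real obstacle. Since $\mmat$ is rectangular, all four entries used in each Monge inequality are defined, so the argument reduces to a careful choice of a \emph{witness} column ($\mcol_a$ or $\mcol_b$) on the side of $\mcol_i$ opposite to $\mrow^*$ relative to $\mrow$. The only degenerate case, $\mcol_a=\mcol_b$, is vacuous, and the contiguity claim follows by combining the two inclusions established above.
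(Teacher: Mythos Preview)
Your proof is correct. The paper states Fact~\ref{f:contig} without proof (as a well-known property of Monge matrices), so there is no argument to compare against; your direct application of the Monge inequality with the appropriate witness column on each side is the standard way to establish this and works exactly as you describe.
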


\begin{remark}
  The statements of facts~\ref{f:monoton}~and~\ref{f:contig} could be simplified
  if we either assumed that the column minima in the considered Monge matrices
  are unique or introduced some tie-breaking rule.
  However, this would lead to a number of similar assumptions in terms of priority
  queue keys and path lengths in the following sections which in turn would
  complicate the description.
  Thus, we do not use any simplifying assumptions about the column minima.
\end{remark}

\begin{fact}\label{f:offset}
  Let $\mmat$ be a Monge matrix with rows $\mrows$ and let $\off:\mrows\to\mathbb{R}$.
  Then $\om(\mmat,\off)$ is also a Monge matrix.
\end{fact}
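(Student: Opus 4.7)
The plan is to verify the Monge inequality directly for $\mmat'=\om(\mmat,\off)$, relying on the fact that adding a per-row constant to every entry of a row cancels out on both sides of the Monge inequality.

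Concretely, I would fix arbitrary $\mrow_1\leq \mrow_2$ in $\mrows$ and $\mcol_1\leq \mcol_2$ in $\mcols$ for which all four relevant entries of $\mmat$ are defined (they are exactly the entries for which the corresponding entries of $\mmat'$ are defined, by construction). Then, by the definition of the offset matrix, I would expand
$$\mmat'_{\mrow_2,\mcol_1}+\mmat'_{\mrow_1,\mcol_2}=\mmat_{\mrow_2,\mcol_1}+\off(\mrow_2)+\mmat_{\mrow_1,\mcol_2}+\off(\mrow_1),$$
$$\mmat'_{\mrow_1,\mcol_1}+\mmat'_{\mrow_2,\mcol_2}=\mmat_{\mrow_1,\mcol_1}+\off(\mrow_1)+\mmat_{\mrow_2,\mcol_2}+\off(\mrow_2).$$
Observing that the contributions $\off(\mrow_1)+\off(\mrow_2)$ are identical on both sides, the desired inequality for $\mmat'$ reduces to the Monge property for $\mmat$, which holds by assumption.

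There is no real obstacle here; the only thing worth a sentence is noting that $\mmat'$ has the same domain of definition as $\mmat$, so the Monge property needs to be checked on exactly the same quadruples $(\mrow_1,\mcol_1),(\mrow_1,\mcol_2),(\mrow_2,\mcol_1),(\mrow_2,\mcol_2)$, and that the argument does not care about the sign or magnitude of $\off$, since it is purely an algebraic cancellation.
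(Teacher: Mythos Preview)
Your proof is correct and is exactly the standard argument; the paper itself states this as a basic fact without proof, so there is nothing to compare against. The cancellation of $\off(\mrow_1)+\off(\mrow_2)$ on both sides is the entire content, and your remark about the domains of definition coinciding is the only thing that even needs to be said.
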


\subsection{Data-Structural Prerequisites}

\paragraph{Priority Queues}
We assume that priority queues store elements with real keys.
A priority queue $H$ supports the following set of operations:
\begin{itemize}
\item $\pqins(e,k)$ -- insert an element $e$ with key $k$ into $H$.
\item $\pqext()$ -- delete an element $e\in H$ with the smallest key and return $e$.
\item $\pqdec(e,k)$ -- given an element $e\in H$, decrease key of $e$ to $k$.
  If the current key of $e$ is smaller than $k$, do nothing.
\item $\pqminkey()$ -- return the smallest key in $H$.
\end{itemize}
Formally, we assume that each call $\pqins(e,k)$ also produces a ``handle'',
which can be later used to point the call $\pqdec$ to a place inside
$H$, where $e$ is being kept.
In our applications, the elements stored in a priority queue are always
distinct and thus for brevity we skip the details of using handles later on.

Fredman and Tarjan \cite{Fredman:1987} showed a data structure called \emph{the Fibonacci heap},
which can perform \linebreak
$\pqext$ in amortized $O(\log{n})$ time and all the remaining
operations in amortized $O(1)$ time.
Here $n$ is the current size of the queue.
In the following sections, we assume that each priority queue is implemented
as a Fibonacci heap.

\paragraph{Predecessor Searching}
Let $S$ be some totally ordered set such that for any $s\in S$ we can compute
the \emph{rank} of $s$, i.e., the number $|\{y\leq s:y\in S\}|$, in constant time.
A \emph{dynamic predecessor/successor data structure} maintains a subset $R$ of $S$
and supports the following operations:
\begin{itemize}
\item Insertion of some $s\in S$ into $R$.
\item Deletion of some $s\in R$.
\item $\ppred(s)$ ($\psucc(s)$) -- for some $s\in S$, return the largest
  (smallest respectively) element $r$ of $R$ such that $r\leq s$ ($r\geq s$ resp.).
\end{itemize}
Van Emde Boas \cite{vEB:1977} showed that using $O(|S|)$ space
we can perform each of these operations in $O(\log\log{|S|})$
time.
Whenever we use a dynamic predecessor/successor data structure
in the following sections,
we assume the above bounds to hold.

\section{Online Column Minima of a Rectangular Offset Monge Matrix}\label{s:full_min}
Let $\mmat_0$ be a rectangular $\mrowcnt\times\mcolcnt$ Monge matrix.
Let $R=\{\mrow_1,\ldots,\mrow_\mrowcnt\}$ and
$C=\{\mcol_1,\ldots,\mcol_\mcolcnt\}$ be the sets of rows
and columns of $\mmat_0$, respectively.
Set $\msize=\max(\mrowcnt,\mcolcnt)$.

Let $\off:\mrows\to\mathbb{R}$ be an offset function and set $\mmat=\om(\mmat_0,\off)$.
By Fact~\ref{f:offset}, $\mmat$ is also a Monge matrix.
Our goal is to design a data structure capable of reporting
the column minima of $\mmat$ in increasing order of their values.
However, the function $\off$ is not entirely revealed beforehand,
as opposed to the matrix $\mmat_0$.
There is an initially empty, growing set $\mcurrows\subseteq\mrows$
containing the rows for which $\off(\mrow)$ is known.
Alternatively, $\mcurrows$ can be seen as a set of ``active''
rows of $\mmat$ which can be accessed by the data structure.
There is also a set $\mcurcols\subseteq\mcols$ containing the remaining
columns for which we have not reported the minima yet.
Initially, $\mcurcols=\mcols$ and the set $\mcurcols$
shrinks over time.
We also provide a mechanism to guarantee that the rows
that have not been revealed do not influence the smallest
of the column minima of $\mcurcols$.

The exact set of operations we support is the following:
\begin{itemize}
\item $\minit(\mrows,\mcols)$ -- initialize the data structure and set
  $\mcurrows=\emptyset$, $\mcurcols=\mcols$.
\item $\marow(\mrow)$, where $\mrow\in\mrows\setminus\mcurrows$ --
  add $\mrow$ to the set $\mcurrows$.
\item $\mlb()$ -- compute the number $\min\{\mmat(\mcurrows,\mcurcols)\}$.
  If $\mcurrows=\emptyset$ or $\mcurcols=\emptyset$, return $\infty$.
\item $\meb()$ -- inform the data structure that we have
  $$\min\{\mmat(\mrows\setminus\mcurrows,\mcols)\}\geq\min\{\mmat(\mcurrows,\mcurcols)\}=\mlb(),$$
  that is, the smallest element of $\mmat(\mrows,\mcurcols)$ does not
  depend on the values of $\mmat$ located in rows $\mrows\setminus\mcurrows$.
  It is the responsibility of the user to guarantee that this condition
  is in fact satisfied.

  Such claim implies that for some column $\mcol\in\mcurcols$ we have
  $\min\{\mmat(\mrows,\{\mcol\})\}=\min\{\mmat(\mcurrows,\mcurcols)\}$,
  which in turn means that we are able to find the minimum element in column $c$.
  The function returns any such $c$ and removes it from the set $\mcurcols$.

\item $\mcolminrow(\mcol)$, where $\mcol\in\mcols$ -- compute $\mrow$, where $\mrow\in\mcurrows$
  is a row
  such that $\min\{\mmat(\mcurrows,\{\mcol\})\}=\mmat_{\mrow,\mcol}$.
  If $\mcurrows=\emptyset$, return $\nil$.
  Note that $\mcol$ is not necessarily in~$\mcurcols$.
  
  Additionally, we require $\mcolminrow$ to have the following property:
  once the column $\mcol$ is moved out of $\mcurcols$, $\mcolminrow(\mcol)$ always returns
  the same row.
  Moreover, for $\mcol_1,\mcol_2\in\mcols$ such that $\mcol_1<\mcol_2$
  we require $$\mcolminrow(\mcol_1)\geq \mcolminrow(\mcol_2).$$
\end{itemize}
Note that $\marow$ increases the size of $\mcurrows$ and thus cannot
be called more than~$\mrowcnt$ times.
Analogously, $\meb$ decreases the size of $\mcurcols$ so it cannot be called
more than $\mcolcnt$ times.
Actually, in order to reveal all the column minima with this data structure,
the operation $\meb$ has to be called \emph{exactly} $\mcolcnt$ times.

\subsection{The Components}

\paragraph{The Subrow Minimum Query Data Structure}
Given $\mrow\in\mcurrows$ and $a,b$, $1\leq a\leq b\leq\mcolcnt$,
a subrow minimum query $S(\mrow,a,b)$ computes a column
$\mcol\in\{\mcol_a,\ldots,\mcol_b\}$ such that
$$\min\{\mmat(\{\mrow\},\{\mcol_a,\ldots,\mcol_b\})\}=\mmat_{\mrow,\mcol}.$$
We use the following theorem of Gawrychowski et al. \cite{Gawry:2015}.

\begin{theorem}[Theorem 3. of \cite{Gawry:2015}]\label{t:subrow}
Given a $\mrowcnt\times\mcolcnt$ rectangular Monge matrix $\mmat$, a data structure
of size $O(\mcolcnt)$ can be constructed in $O(\mcolcnt\log{\mrowcnt})$ time
to answer subrow minimum queries in $O(\log{\log{(\mrowcnt+\mcolcnt)}})$ time.
\end{theorem}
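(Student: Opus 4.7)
The plan is to reduce each subrow minimum query to constant-time argmin lookups on $O(\log \mcolcnt)$ canonical submatrices of a balanced segment tree over the columns, tying everything together with fractional cascading. Each canonical submatrix inherits the Monge property via Fact~\ref{f:submatrix}, which is what makes the required row-argmins monotone in the row index and hence compactly storable.

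Concretely, I would build a balanced segment tree over $\{1,\ldots,\mcolcnt\}$; a node $v$ owning the column range $[l_v,h_v]$ stores the monotone row-argmin staircase $j^\star_v:\mrows\to[l_v,h_v]$ produced by SMAWK on $\mmat(\mrows,[l_v,h_v])$, kept in compressed form as the at most $h_v-l_v+1$ breakpoint rows where $j^\star_v$ jumps to a new plateau column. To answer a query $(\mrow,a,b)$: decompose $[a,b]$ into the $O(\log\mcolcnt)$ canonical nodes $v_1,\ldots,v_k$, use a predecessor search at each $v_i$ to locate row $\mrow$'s plateau and recover $j^\star_{v_i}(\mrow)$, evaluate $\mmat_{\mrow,j^\star_{v_i}(\mrow)}$ via $O(1)$ oracle access, and return the minimum over the $k$ candidate values. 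Correctness is immediate since the true minimising column $c^\star\in[a,b]$ lies in exactly one $v_i$ and, by definition of $j^\star_{v_i}$, $\mmat_{\mrow,j^\star_{v_i}(\mrow)}\le \mmat_{\mrow,c^\star}$, with the reverse inequality following because $j^\star_{v_i}(\mrow)\in[a,b]$.

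To hit the $O(\log\log(\mrowcnt+\mcolcnt))$ query bound I would chain the per-node $y$-fast tries (keyed by row indices) using fractional cascading along every root-to-leaf path, amortising a single $O(\log\log(\mrowcnt+\mcolcnt))$ predecessor search at the root over the $O(\log\mcolcnt)$ canonical-node lookups, each then costing $O(1)$. To keep the total space at $O(\mcolcnt)$ despite the tree containing $\Theta(\mcolcnt\log\mcolcnt)$ breakpoint slots, I would use that at each level of the tree the breakpoint columns across sibling nodes are pairwise disjoint and total at most $\mcolcnt$; a single shared $y$-fast trie per level plus $O(1)$-size cascading bridges then stores everything in linear space. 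For the $O(\mcolcnt\log\mrowcnt)$ construction, a standard SMAWK schedule is run at each node restricted to the rows that actually contribute breakpoints there; Fact~\ref{f:contig} applied level-by-level bounds the total work so that it telescopes to $O(\mcolcnt\log\mrowcnt)$.

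The main obstacle is correctness of the simple ``minimum of $O(\log\mcolcnt)$ candidates'' rule in concert with the fractional-cascading engineering; in particular one must resist the temptation to shortcut the segment-tree descent. A concrete pitfall is that rows of a Monge matrix are \emph{not} in general unimodal: even a $3\times 3$ Monge matrix can have a row of the form $10,5,10$ whose subrow minimum lies strictly interior, so a single predecessor lookup on the \emph{global} column-minima staircase $i^\star$ can miss the true optimiser entirely (the row achieving the minimum inside the range is not the row realising any global column minimum in it). Only the per-submatrix staircases $j^\star_{v}$ at the canonical nodes can be trusted to pinpoint the minimum of $\mmat_{\mrow,\cdot}$ restricted to their column range, which is precisely why the segment-tree decomposition, and the attendant care in making the cascading bridges preserve the per-node plateau information, is essential.
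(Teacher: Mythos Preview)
The paper does not prove this theorem at all; it is quoted verbatim as Theorem~3 of \cite{Gawry:2015} and used as a black box, so there is no ``paper's own proof'' to compare against.

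That said, your proposed construction does not meet the stated bounds. The decisive gap is the query time: you decompose $[a,b]$ into $O(\log\mcolcnt)$ canonical segment-tree nodes and then, even granting that fractional cascading reduces each per-node predecessor search to $O(1)$ after one $O(\log\log(\mrowcnt+\mcolcnt))$ lookup at the root, you still must \emph{visit} all $O(\log\mcolcnt)$ nodes and take a minimum over their candidates. Your query time is therefore $\Theta(\log\mcolcnt)$, not $O(\log\log(\mrowcnt+\mcolcnt))$; when $\mcolcnt$ is polynomial in $\mrowcnt$ these differ by an exponential. Fractional cascading removes the per-node search cost, not the number of nodes.

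The space argument is also off. At each level the breakpoint \emph{rows} (the keys you need for predecessor search) number up to $\mcolcnt$, and there are $\Theta(\log\mcolcnt)$ levels, so the total is $\Theta(\mcolcnt\log\mcolcnt)$ (or $\Theta(\mcolcnt\log\mrowcnt)$ with the tighter per-node bound $\min(\mrowcnt,|\mcols_v|)$), not $O(\mcolcnt)$. Sharing one $y$-fast trie per level does not help: you still have $\Theta(\log\mcolcnt)$ levels each of size up to $\mcolcnt$. Similarly, the construction-time claim is underspecified: running SMAWK at every node on all $\mrowcnt$ rows costs $\Omega(\mrowcnt)$ per node and hence $\Omega(\mrowcnt\mcolcnt)$ overall; the sentence about ``restricting to rows that actually contribute breakpoints'' needs a concrete recursive scheme, and Fact~\ref{f:contig} concerns column minima, not the row-argmin staircase you are building. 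The actual result in \cite{Gawry:2015} relies on a different decomposition (cf.\ the special case quoted here as Lemma~\ref{l:entrow}, which handles $x=O(\log y)$ columns in $O(1)$ per query), not a plain segment tree over columns.
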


Recall that $\mmat=\om(\mmat_0,\off)$.
Adding the offset $\off(\mrow)$ to all the elements in row $\mrow$ of
$\mmat_0$ does not change the relative order of elements in row $r$.
Hence, the answer to a subrow minimum query $S(\mrow,a,b)$ in $\mmat$ is 
the same as the answer to $S(\mrow,a,b)$ in $\mmat_0$.

We build a data structure of Theorem~\ref{t:subrow} for $\mmat_0$
and assume that any subrow minimum query in $\mmat$ can be answered
in $O(\log\log{m})$ time.

\paragraph{The Column Groups}
The set $\mcols$ is internally partitioned into disjoint, contiguous \emph{column groups} %\linebreak
$\mgroup_1,\ldots,\mgroup_\mgrcnt$ (where~$\mgroup_1$ is the leftmost group
and $\mgroup_\mgrcnt$ is the rightmost),
so that $\bigcup_i\mgroup_i=\mcols$.

As the groups constitute contiguous segments of columns, we can represent
the partition with a subset $F\subseteq \mcols$ containing the first columns
of individual groups.
Each group can be identified with its leftmost column.
We use a dynamic predecessor data structure for maintaining the set $F$.
The first column of the group containing column $\mcol$ can be
thus found by calling $F.\ppred(\mcol)$ in $O(\log\log{m})$ time.
Such representation also allows to split groups and merge neighboring groups
in $O(\log\log{m})$ time.

\paragraph{The Potential Row Sets}
For each $\mgroup_i$ we store a set $P(\mgroup_i)\subseteq \mcurrows$,
called a \emph{potential row set}.
Between consecutive operations, the potential row sets satisfy
the following invariants:
\begin{enumerate}[label=P.\arabic*,leftmargin=*]
\item For any $\mcol\in\mgroup_i$ there exists a row $\mrow\in P(\mgroup_i)$
  such that $\min\{\mmat(\mcurrows,\{\mcol\})\}=\mmat_{\mrow,\mcol}$.
  \label{i:prow}
\item The size of any set $P(\mgroup_i)$ is less than $2\alpha$, where $\alpha$ is a parameter
  to be fixed later.\label{i:psize}
\item For any $i<j$ and any $\mrow_i\in P(\mgroup_i)$, $\mrow_j\in P(\mgroup_j)$,
  we have $\mrow_i\geq \mrow_j$.\label{i:pmono}
\end{enumerate}

As by Fact~\ref{f:submatrix}
$\mmat(\mcurrows,\mcols)$ is a Monge matrix,
from Fact~\ref{f:monoton} it follows that invariant~\ref{i:pmono} can
be indeed satisfied.
By invariant~\ref{i:pmono} we also have $|P(\mgroup_i)\cap P(\mgroup_{i+1})|\leq 1$
and thus the sum of sizes of sets $P(\mgroup_i)$ is $O(\mrowcnt+\mcolcnt)$.
The sets $P(\mgroup_i)$ are stored as balanced binary search trees,
sorted bottom to top.
Additionally, the union of sets~$P(\mgroup_i)$ is
stored in a dynamic predecessor/successor data structure $U$.
We also have an auxiliary array $\mlastg$ mapping each row $\mrow\in\mcurrows$
to the rightmost column group $\mgroup_i$ such that $\mrow\in P(\mgroup_i)$
(if such group exists).
\begin{lemma}\label{l:pcost}
An insertion or deletion of some $\mrow$ to $P(\mgroup_i)$ (along with the update
of the auxiliary structures) can be performed in $O(\log{\alpha}+\log{\log{m}})$ time.
\end{lemma}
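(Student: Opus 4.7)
The plan is to show that the claimed time bound is just the sum of the natural costs of updating the three data structures involved, provided we are careful about whether the row $\mrow$ already appears in $U$ through a neighbouring group. Since $|P(\mgroup_i)|<2\alpha$ by invariant~\ref{i:psize}, every insertion or deletion in the balanced BST representing $P(\mgroup_i)$ costs $O(\log\alpha)$. The structure $U$ is defined over a universe of size $|\mrows|\le \msize$, so each of its operations costs $O(\log\log\msize)$, and the array $\mlastg$ supports $O(1)$ access. If I can argue that each call performs at most $O(1)$ BST operations on sets of size $<2\alpha$, at most one operation on $U$, and $O(1)$ updates to $\mlastg$, the bound follows immediately.

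For an insertion of $\mrow$ into $P(\mgroup_i)$, I first insert $\mrow$ into the BST of $P(\mgroup_i)$. I then need to decide whether $\mrow$ has to be added to $U$ and whether $\mlastg[\mrow]$ has to change. To do this in $O(1)$ extra time, I observe that by invariant~\ref{i:pmono} the set of indices $j$ with $\mrow\in P(\mgroup_j)$ forms a contiguous range, whose rightmost element is stored in $\mlastg[\mrow]$. Since each group is identified with the column index of its leftmost column (maintained via the predecessor structure on $F$), I compare $\mlastg[\mrow]$ with $\mgroup_i$ in $O(1)$ by comparing the corresponding column ranks. If $\mlastg[\mrow]$ is undefined or lies strictly to the left of $\mgroup_i$, I insert $\mrow$ into $U$ in $O(\log\log\msize)$ time and set $\mlastg[\mrow]=\mgroup_i$; otherwise $\mrow$ is already present in $U$ and $\mlastg[\mrow]$ is already at least $\mgroup_i$, so nothing further is done.

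For a deletion of $\mrow$ from $P(\mgroup_i)$, I remove $\mrow$ from the BST of $P(\mgroup_i)$ and again use $\mlastg$ to decide what else needs to change. If $\mlastg[\mrow]$ points to a group strictly to the right of $\mgroup_i$, then $\mrow$ still belongs to some $P(\mgroup_j)$ and must remain in $U$, so no further work is needed. Otherwise $\mlastg[\mrow]=\mgroup_i$ and, by the contiguity argument above, $\mrow$ can still lie only in $P(\mgroup_{i-1})$. I test membership of $\mrow$ in the BST of $P(\mgroup_{i-1})$ in $O(\log\alpha)$ time; if present, I set $\mlastg[\mrow]=\mgroup_{i-1}$, and otherwise I delete $\mrow$ from $U$ in $O(\log\log\msize)$ time and mark $\mlastg[\mrow]$ as undefined.

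The only subtle point — and the one I would flag as the main obstacle — is justifying that $O(1)$ BST lookups and $O(1)$ operations on $U$ suffice to keep $\mlastg$ and $U$ consistent without scanning the entire range of groups that could contain $\mrow$. This is exactly where invariant~\ref{i:pmono} is essential: because the indices of groups whose potential sets contain $\mrow$ form a contiguous interval, knowing only its right endpoint $\mlastg[\mrow]$ together with a single membership test in the immediately neighbouring group is enough to determine whether $\mrow$'s status in $U$ has changed. Once this is established, adding up the three contributions $O(\log\alpha)$, $O(\log\log\msize)$, and $O(1)$ yields the claimed $O(\log\alpha+\log\log\msize)$ bound.
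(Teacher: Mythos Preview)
Your argument is correct and follows essentially the same approach as the paper: bound the BST cost by $O(\log\alpha)$ via invariant~\ref{i:psize}, the $U$ cost by $O(\log\log\msize)$, and handle $\mlastg$ on deletion by checking only $P(\mgroup_{i-1})$. Your write-up is in fact more careful than the paper's sketch, since you explicitly justify via invariant~\ref{i:pmono} why a single neighbouring-group lookup suffices to keep $U$ and $\mlastg$ consistent.
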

\begin{proof}
  The cost of updating the binary search tree is $O(\log{|P(\mgroup_i)|})=O(\log{\alpha})$, whereas
updating the predecessor structure $U$ takes $O(\log\log{m})$ time.
Updating the array $\mlastg$ upon insertion is trivial.
When a row $r$ is deleted and $\mlastg[r]\neq\mgroup_i$, $\mlastg[r]$
does not have to be updates.
Otherwise, we check if $r\in P(\mgroup_{i-1})$ and set $\mlastg[r]$
to either $\mgroup_{i-1}$ or $\nil$.
\end{proof}

\paragraph{Special Handling of Columns with Known Minima}
We require that for each column $\mcol$ being moved out of $\mcurcols$,
a row $y_\mcol$ such that $\min\{\mmat(\mrows,\mcol)\}=\mmat_{y_\mcol,\mcol}$
is computed.
In order to ensure that $\mcolminrow$ has the described deterministic
behavior, we guarantee that starting at the moment of deletion
of $\mcol$ from $\mcurcols$,
there exists a group $\mgroup$ consisting
of a single element $\mcol$, such that $P(\mgroup)=\{y_\mcol\}$.
Such groups are called \emph{done}.

\paragraph{The Priority Queue}
A priority queue $H$ contains an element $\mcol$
for each $\mcol\in\mcurcols$.
The queue $H$ satisfies the following invariants.
\begin{enumerate}[label=H.\arabic*,leftmargin=*]
  \item For each $\mcol\in\mcurcols$, the key of $\mcol$ in $H$ is greater
    than or equal to $\min\{\mmat(\mcurrows,\{\mcol\})\}$.\label{i:hcorrect}
  \item For each group $\mgroup_j$
that is not done, there exists such column $\mcol_j\in \mgroup_j$
that the key of $\mcol_j$ in $H$ is equal to\label{i:heap}
$$\min\{\mmat(\mcurrows,\{\mcol_j\})\}=\min\left\{\mmat\left(\mcurrows,\mgroup_j\right)\right\}.$$
\end{enumerate}
\begin{lemma}\label{l:hinv}
We can ensure that invariant~\ref{i:heap} is satisfied for a single group $\mgroup_j$
in $O(\alpha\log\log{\msize})$ time.
\end{lemma}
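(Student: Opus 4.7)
The plan is to exploit the bound $|P(\mgroup_j)| < 2\alpha$ guaranteed by invariant~\ref{i:psize} together with the near-constant-time subrow minimum queries from Theorem~\ref{t:subrow}. By invariant~\ref{i:prow}, for every column $\mcol\in\mgroup_j$ some row realizing $\min\{\mmat(\mcurrows,\{\mcol\})\}$ already lies in $P(\mgroup_j)$. Hence, to identify a column $\mcol^*\in\mgroup_j$ witnessing \ref{i:heap}, it suffices to search over the $O(\alpha)$ rows of $P(\mgroup_j)$ rather than all of $\mcurrows$.

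First I would locate the column range $\{\mcol_a,\ldots,\mcol_b\}$ of $\mgroup_j$ via $O(1)$ queries on the predecessor structure $F$, each costing $O(\log\log\msize)$ time. Then, iterating over $P(\mgroup_j)$ (stored as a balanced BST and thus easy to traverse), for every $\mrow\in P(\mgroup_j)$ I would issue a subrow minimum query $S(\mrow,a,b)$ on $\mmat_0$ using the data structure of Theorem~\ref{t:subrow}. Because adding the per-row offset $\off(\mrow)$ preserves the relative order of entries within row $\mrow$, the column $\mcol_\mrow$ returned by the query also minimizes $\mmat_{\mrow,\cdot}$ over $\mgroup_j$. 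I would then read off the value $v_\mrow:=(\mmat_0)_{\mrow,\mcol_\mrow}+\off(\mrow)=\mmat_{\mrow,\mcol_\mrow}$ in $O(1)$ time.

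Having collected all pairs $(\mcol_\mrow,v_\mrow)$, let $\mrow^*\in P(\mgroup_j)$ minimize $v_\mrow$ and set $\mcol^*:=\mcol_{\mrow^*}$. Since $P(\mgroup_j)\subseteq\mcurrows$ and, by invariant~\ref{i:prow}, the column minimum over $\mcurrows$ of every $\mcol\in\mgroup_j$ is attained in $P(\mgroup_j)$, we obtain
\[
v_{\mrow^*}=\min\{\mmat(\mcurrows,\mgroup_j)\}=\min\{\mmat(\mcurrows,\{\mcol^*\})\}.
\]
Invariant~\ref{i:hcorrect} guarantees that the current key of $\mcol^*$ in $H$ is at least $\min\{\mmat(\mcurrows,\{\mcol^*\})\}=v_{\mrow^*}$, so a single call $H.\pqdec(\mcol^*,v_{\mrow^*})$ forces the key of $\mcol^*$ to be exactly $v_{\mrow^*}$, thereby restoring \ref{i:heap}. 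Invariant~\ref{i:hcorrect} is preserved for $\mcol^*$ because the new key matches its column minimum, and it is untouched for the remaining columns.

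The running time is dominated by $O(\alpha)$ subrow queries at $O(\log\log\msize)$ each, plus $O(\log\log\msize)$ for locating the column range in $F$ and $O(1)$ amortized for the $\pqdec$ in the Fibonacci heap, for a total of $O(\alpha\log\log\msize)$ as claimed. The only nontrivial point is observing that running the query against $\mmat_0$ (rather than the offset matrix $\mmat$) is legitimate because $\off$ is constant within a row and thus does not perturb intra-row comparisons; everything else follows directly from invariants \ref{i:prow}--\ref{i:pmono} and \ref{i:hcorrect}.
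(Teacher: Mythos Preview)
Your proof is correct and follows essentially the same approach as the paper: iterate over the $O(\alpha)$ rows of $P(\mgroup_j)$, issue a subrow minimum query for each, and use the results to fix the key in $H$. The only cosmetic difference is that the paper calls $H.\pqdec$ once per row (on the column returned by each subrow query), whereas you first aggregate to find the global minimizer $(\mcol^*,v_{\mrow^*})$ and then issue a single $\pqdec$; both variants restore invariant~\ref{i:heap} within the same $O(\alpha\log\log\msize)$ bound.
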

\begin{proof}
  We perform $O(|P(\mgroup_j)|)=O(\alpha)$ subrow minimum queries on $\mmat$
  to compute for each $\mrow\in P(\mgroup_j)$
  some column $\mcol\in\mgroup_j$ such that $\mmat_{\mrow,\mcol}=\min\{\mmat(\{\mrow\},\mgroup_j)\}$.
  As each subrow minimum query takes $O(\log\log{\msize})$ time, this
  takes $O(\alpha\log\log{\msize})$ in total.
  For each computed $\mcol$, we decrease the key of $\mcol$ in $H$
  to $\mmat_{\mrow,\mcol}$ in $O(1)$ time.
  Note that by invariant~\ref{i:prow}, some $\mmat_{\mrow,\mcol}$
  is in fact equal to $\min\{\mmat(\mcurrows,\mgroup_j)\}$.
\end{proof}
We will maintain invariant~\ref{i:hcorrect} implicitly, each time setting
the key of a column $\mcol$ to either $\infty$ or some value $\mmat_{\mrow,\mcol}$,
where $\mrow\in\mcurrows$.
Note that invariant~\ref{i:heap} guarantees that the key of the top
element of $H$ is equal to $\min\{\mmat(\mcurrows,\mcurcols)\}$.

\subsection{Implementing the Operations}
\paragraph{Initialization}
First, we build the data structure of Theorem~\ref{t:subrow} in $O(\mcolcnt\log{\msize})$ time.
Then, an element $\mcol$ with key $\infty$ is inserted into $H$ for
each $\mcol\in\mcols$.
When the first row $\mrow$ is activated,
we create a single group $\mgroup=\mcols$ with $P(\mgroup)=\{\mrow\}$.
Using Lemma~\ref{l:hinv} we ensure that invariant~\ref{i:heap} is satisfied.

\paragraph{\mcolminrow} The data structure $F$ is used to identify
the group $\mgroup$ containing the column $\mcol$.
If $\mcol\in {\mcols\setminus\mcurcols}$, then the group $\mcol$ is
done and we return the only element of $P(\mgroup)$.
Otherwise, we spend $O(|P(\mgroup)|)=O(\alpha)$ time to find the topmost
row of $P(\mgroup)$ that contains a minimum of $\mcol$.
By Fact~\ref{f:monoton} and invariant~\ref{i:pmono}, returning the topmost row of $P(\mgroup)$ guarantees
that for $\mcol_1\leq\mcol_2$, $\mcolminrow(\mcol_1)\geq \mcolminrow(\mcol_2)$.
The total running time is thus $O(\alpha+\log\log{m})$.

\paragraph{\mlb, \meb}
Invariant~\ref{i:heap} guarantees that we have
$\min\{\mmat(\mcurrows,\mcurcols)\}=\min\{\mmat(\mcurrows,\{\mcol_i\})\}=\mmat_{\mrow^*,\mcol_i}=H.\pqminkey()$,
where $\mcol_i$ is the top element of $H$ and $\mrow^*$ is a row returned
by $\mcolminrow(\mcol_i)$.
Thus, the operation $\mlb()$ can be executed in $O(1)$ time.

Let us now implement $\meb$.
By the precondition of this call, we conclude that $\mmat_{\mrow^*,\mcol_i}=\min\{\mmat(\mrows,\mcurcols)\}$.
By invariant~\ref{i:heap}, $H.\pqext()$ returns the column $\mcol_i$.
With a single query to $F$, we find the current group of~$c_i$,
$\mgroup=\{\mcol_a,\ldots,\mcol_i,\ldots,\mcol_b\}$.
First, we need to create a single-column group $\mgroup^*=\{\mcol_i\}$
and mark it done, with $P(\mgroup^*)=\{\mrow^*\}$.
We thus split $\mgroup$ into at most three groups $\mgroup^-=\{\mcol_a,\ldots,\mcol_{i-1}\}$,
$\mgroup^*$ and $\mgroup^+=\{\mcol_{i+1},\ldots,\mcol_b\}$
and mark $\mgroup^*$ done.
By Fact~\ref{f:monoton}, we can safely set
$P(\mgroup^-)=\{r\in P(\mgroup):r\geq r^*\}$
and $P(\mgroup^+)=\{r\in P(\mgroup):r\leq r^*\}$.
The split of $\mgroup$ requires $O(1)$ operations on $F$,
whereas by Lemma~\ref{l:pcost}, replacing the set $P(\mgroup)$
with the sets $P(\mgroup^-),P(\mgroup^*),P(\mgroup^+)$ takes $O(\alpha(\log\log{m}+\log{\alpha}))$ time.
The last step is to fix the invariant~\ref{i:heap} for the newly created groups.
This takes $O(\alpha\log\log{\msize})$, by Lemma~\ref{l:hinv}.
Thus, taking into account the $O(\log{m})$ cost
of performing $H.\pqext$, $\meb$ takes $O(\log{m}+\alpha(\log{\alpha}+\log\log{\msize}))$ time.

Before we describe how $\marow$ is implemented, we need the following
lemma.
\begin{lemma}\label{l:split}
Let $\mmat$ be a $u\times v$ rectangular Monge matrix
with rows $\mrows=\{\mrow_1,\ldots,\mrow_u\}$ and
columns $\mcols=\{\mcol_1,\ldots,\mcol_v\}$.
For any $i\in [1,u]$,
in $O\left(u\frac{\log{v}}{\log{u}}\right)$ time we can find such
column $\mcol_s\in\mcols$ that:
\begin{enumerate}
\item Some minima of columns $\mcol_1,\ldots,\mcol_s$ lie in rows
  $\mrow_1,\ldots,\mrow_i$.
\item Some minima of columns $\mcol_{s+1},\ldots,\mcol_v$ lie in rows
  $\mrow_{i+1},\ldots,\mrow_{u}$.
\end{enumerate}
\end{lemma}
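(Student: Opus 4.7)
The plan is to exploit the monotonicity of column minima in Monge matrices (Fact~\ref{f:monoton}) combined with the SMAWK algorithm~\cite{SMAWK:1987} in a recursive sampling scheme. By Fact~\ref{f:monoton}, if we assign to each column $\mcol_j$ some row $\rho(\mcol_j)$ attaining its column minimum, then the sequence $\rho(\mcol_1),\ldots,\rho(\mcol_v)$ is monotone in the ordering on $\mrows$. Consequently, the columns whose column minimum lies in the top $i$ rows form a contiguous portion of $\mcols$, and a valid split index $s$ exists and is determined (up to ties at the boundary) by the transition point in this monotone sequence.

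First I would sample $u$ evenly spaced columns $\tilde{\mcol}_1,\ldots,\tilde{\mcol}_u$ from $\mcols$ and apply SMAWK to the $u\times u$ submatrix $\mmat(\mrows,\{\tilde{\mcol}_1,\ldots,\tilde{\mcol}_u\})$. Since every Monge matrix is totally monotone, SMAWK returns in $O(u)$ time, for each $\tilde{\mcol}_j$, a row $\tilde{\mrow}_j$ attaining its column minimum, and the returned sequence $\tilde{\mrow}_1,\ldots,\tilde{\mrow}_u$ is monotone in $j$. A single linear scan over this sequence locates the transition index $j^*$ where $\tilde{\mrow}_j$ crosses the boundary between $\{\mrow_1,\ldots,\mrow_i\}$ and $\{\mrow_{i+1},\ldots,\mrow_u\}$; if all samples lie on one side of $\mrow_i$, the desired split $s$ is read off immediately as $0$ or $v$. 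Otherwise, the split column $\mcol_s$ must lie in the sub-range of columns strictly between $\tilde{\mcol}_{j^*-1}$ and $\tilde{\mcol}_{j^*}$, which contains at most $\lceil v/u\rceil$ columns.

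I would then recurse on this shorter column range, still using the full row set $\mrows$. This gives the recurrence $T(v) = O(u) + T(\lceil v/u\rceil)$, with base case $T(v') = O(u+v') = O(u)$ whenever $v' \leq 2u$, handled by invoking SMAWK once on all surviving columns. The recursion depth is $O(\log_u v) = O(\log v/\log u)$ and each level performs $O(u)$ work, so the total running time is $O\left(u\frac{\log{v}}{\log{u}}\right)$, matching the lemma.

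The main obstacle is verifying that SMAWK's output at each level is consistent enough with the global monotonicity for the transition index to be read off without further computation. This reduces to the standard facts that the Monge property implies total monotonicity (so SMAWK applies verbatim after restriction to any subset of columns, by Fact~\ref{f:submatrix}) and that SMAWK's chosen argmin rows form a monotone sequence in the column index; together these ensure that a single linear scan suffices to localize $\mcol_s$ to a range of at most $\lceil v/u\rceil$ columns. Minor technical care is required for boundary situations --- sample positions not dividing $v$ evenly, the entire sampled sequence lying on one side of $\mrow_i$, or $v\leq 2u$ where sampling degenerates --- but each of these is resolved by a single direct SMAWK call on the surviving columns.
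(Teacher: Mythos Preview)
Your proposal is correct and follows essentially the same approach as the paper: sample $u$ evenly spaced columns, run SMAWK on the resulting $u\times u$ submatrix to obtain a monotone sequence of minimizing rows, locate the transition across $\mrow_i$ by a linear scan, and recurse on the surviving interval of $O(v/u)$ columns, giving $O(\log_u v)$ levels at $O(u)$ each. The paper's proof differs only in presentation details (it explicitly takes \emph{bottommost} column minima to force monotonicity, and its base case is $u\geq v$ rather than $v\leq 2u$), but the argument is the same.
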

\begin{proof}

Aggarwal et al. \cite{SMAWK:1987} proved the following theorem.
The algorithm they found was nicknamed \emph{the SMAWK algorithm}.
\begin{theorem}\label{t:smawk}
One can compute the bottommost column minima of a rectangular
$\mrowcnt\times\mcolcnt$ Monge matrix in $O(\mrowcnt+\mcolcnt)$ time.
\end{theorem}

If $u\geq v$, we can find the column minima for each column of matrix $\mmat$ using
the SMAWK algorithm in $O(u)$ time.
Picking the right $\mcol_s$ is straightforward in this case.

Assume $u<v$.
We first pick a set $\mcols'=\{\mcol_1',\ldots,\mcol_u'\}$ of $u$ evenly spread columns of $\mcols$,
including the leftmost and the rightmost column.
By Fact~\ref{f:submatrix}, $\mmat(\mrows,\mcols')$ is also a Monge matrix.
The SMAWK algorithm is then used to obtain the bottommost
rows $\mrow_1',\ldots,\mrow_u'$
containing the column minima of $\mcol_1',\ldots,\mcol_u'$
in $O(u)$ time.
By Fact~\ref{f:monoton} we have $\mrow_1'\geq\ldots\geq \mrow_u'$.
We then find some $j$ such that $\mrow_j'\geq \mrow_i\geq \mrow_{j+1}'$.
The sought column~$c_s$ can now be found by proceeding
recursively on the matrix
$$\mmat'=\mmat(\mrows,\{\mcol_j',\ldots,\mcol_{j+1}'\}).$$
The matrix $\mmat'$ has still $u$ rows, but it has only $O(v/u)$ columns.

At each recursive step we divide the size of the column set by $\Omega(u)$,
so there are at most $\log_u{v}=\frac{\log{v}}{\log{u}}$ steps.
Each step takes $O(u)$ time and hence we obtain the desired bound. %of $O(u\frac{\log{v}}{\log{u}})$.
\end{proof}

\paragraph{\marow}
Assume we activate row $\mrow$.
At that point $\mrow\notin P(\mgroup_i)$ for any group $\mgroup_i$.
Our goal is to reorganize the column groups and their potential row sets
so that the conditions~\ref{i:prow}, \ref{i:psize}, \ref{i:pmono} and~\ref{i:heap}
are again satisfied.

Consider some group $\mgroup_i$. $\mgroup_i$ can fall into
three categories.
\begin{enumerate}[label=\textbf{C.\arabic*},leftmargin=*]
\item For each $\mcol\in\mgroup_i$ we have
  $\mmat_{\mrow,\mcol}\leq \min\{\mmat(P(\mgroup_i),\{\mcol\})\}$.
  %Then, to satisfy invariant~\ref{i:prow}, we can replace $P(\mgroup_i)$ with a single
  %element set $\{\mrow\}$.
  \label{c1}
\item For some two columns $\mcol_1,\mcol_2\in\mgroup_i$ we have
  $\mmat_{\mrow,\mcol_1}< \min\{\mmat(P(\mgroup_i),\{\mcol_1\})\}$
and 
$\mmat_{\mrow,\mcol_2}> \min\{\mmat(P(\mgroup_i),\{\mcol_2\})\}$.\label{c2}
%Then, to satisfy invariant~\ref{i:prow}, we can add $\mrow$ to $P(\mgroup_i)$.
\item For each $\mcol\in\mgroup_i$ we have
  $\mmat_{\mrow,\mcol}\geq \min\{\mmat(P(\mgroup_i),\{\mcol\})\}$.
%In this case $P(\mgroup_i)$ already satisfies invariant~\ref{i:prow}.
\label{c3}
\end{enumerate}
Fact~\ref{f:contig} guarantees that row $\mrow$ contains
column minima for a (possibly empty) interval of
columns of $\mmat(\mcurrows\cup\{\mrow\},\mcols)$.
As the groups do not overlap, this implies that the groups in
category~\ref{c1} form a (possibly empty) interval of groups $\mgroup_a,\ldots,\mgroup_b$,
while there are at most two category~\ref{c2} groups -- $\mgroup_{a-1}$
and $\mgroup_{b+1}$, if they exist.
The groups that are done, clearly fall into category~\ref{c3}.

We can decide if $\mgroup_i$ falls into category~\ref{c1} 
in $O(|P(\mgroup_i)|)=O(\alpha)$ time by
looking only at the leftmost and rightmost columns $\mcol_-, \mcol_+$
of~$\mgroup_i$.
Clearly, if for some $\mrow'\in P(\mgroup_i)$ we have
$\mmat_{\mrow',\mcol_-}<\mmat_{\mrow,\mcol_-}$ or
$\mmat_{\mrow',\mcol_+}<\mmat_{\mrow,\mcol_+}$, $\mgroup_i$
does not belong to~\ref{c1}.
Otherwise, by invariant~\ref{i:prow}, the row $\mrow$ contains some column minima
of both columns $\mcol_-$ and $\mcol_+$ of $\mmat(\mcurrows\cup\{\mrow\},\mgroup_i)$
and hence by Fact~\ref{f:contig} it contains
column minima for all columns of $\mgroup_i$.
Moreover, if $\mrow$ is below all the rows of $P(\mgroup_i)$
or above all the rows of $P(\mgroup_i)$,
by looking only at the border columns of $\mgroup_i$, we can precisely
detect the category of $\mgroup_i$.
As invariant~\ref{i:pmono} holds before the activation of $\mrow$,
there is at most one group $\mgroup^{+}_{-}$ such that $P(\mgroup^{+}_{-})$
contains rows both above and below $\mrow$.

We first find the
rightmost group $\mgroup_i$ such that for all $\mrow'\in P(\mgroup_i)$
we have $\mrow'> \mrow$.
This can be done in $O(\log\log{m})$ time by setting
$\mgroup_i=\mlastg(U.\psucc(\mrow))$.
By Fact~\ref{f:contig}, if there
is any group $\mgroup'$ in categories~\ref{c1} or~\ref{c2},
then one of the groups $\mgroup_i$, $\mgroup_{i+1}$
also falls into~\ref{c1} or~\ref{c2}.
We may thus find all groups $\mgroup_a,\ldots,\mgroup_b$
in category~\ref{c1}
by moving both to the left and to the right of~$\mgroup_i$.
The groups $\mgroup_a,\ldots,\mgroup_b$ are replaced with a single
group $\mgroup^*$ spanning all their columns and~$P(\mgroup^*)$ is set to $\{\mrow\}$.
If the group $\mgroup_{a-1}$ ($\mgroup_{b+1}$ resp.) exists,
we insert $\mrow$ into $P(\mgroup_{a-1})$ ($P(\mgroup_{b+1})$)
only if this group is either in fact $\mgroup^{+}_{-}$
or is in category~\ref{c2}.
After such insertions, both invariants~\ref{i:psize} and~\ref{i:pmono}
may become violated.

Invariant~\ref{i:pmono} can only be violated if the group existed
$\mgroup_{-}^{+}$ and was not in category~\ref{c1}
and also there exists some other group with $\mrow$
in its potential row set.
Since it is impossible that $\mrow$ was inserted into potential
row sets of groups both to the left and to the right of $\mgroup_{-}^{+}$,
suppose wlog. that some $\mgroup'$ is to the right of $\mgroup_{-}^{+}$
and $r\in P(\mgroup')$.
In $O(\alpha)$ time we can check if $\mrow$ contains the column
minimum of the rightmost column of $\mgroup^+_-$ of $\mmat(\mcurrows\cup\{\mrow\},\mgroup^+_-)$.
If so, by Facts~\ref{f:monoton} and~\ref{f:contig}, we can delete
from $P(\mgroup^+_-)$ all the rows above $\mrow$
(recall that $\mrow$ contains a column minimum for the leftmost
column of $\mgroup'$).
Otherwise, by Fact~\ref{f:contig}, we can safely delete $\mrow$ 
from $P(\mgroup^+_-)$.
Hence, we fix invariant~\ref{i:pmono} in $O(\alpha(\log\log{m}+\log{\alpha}))$ time.

Invariant~\ref{i:psize} is violated if
$|P(\mgroup_{a-1})|=2\alpha$ or $|P(\mgroup_{b+1})|=2\alpha$.
In that case algorithm of Lemma~\ref{l:split} is used to split
group $\mgroup_{z}$, for $z\in\{a-1,b+1\}$ 
into  groups $\mgroup_z',\mgroup_z''$
such that $|P(\mgroup_z')|=|P(\mgroup_z'')|=\alpha$.

  \begin{figure}[t]
  \centering
  \includegraphics[scale=0.6]{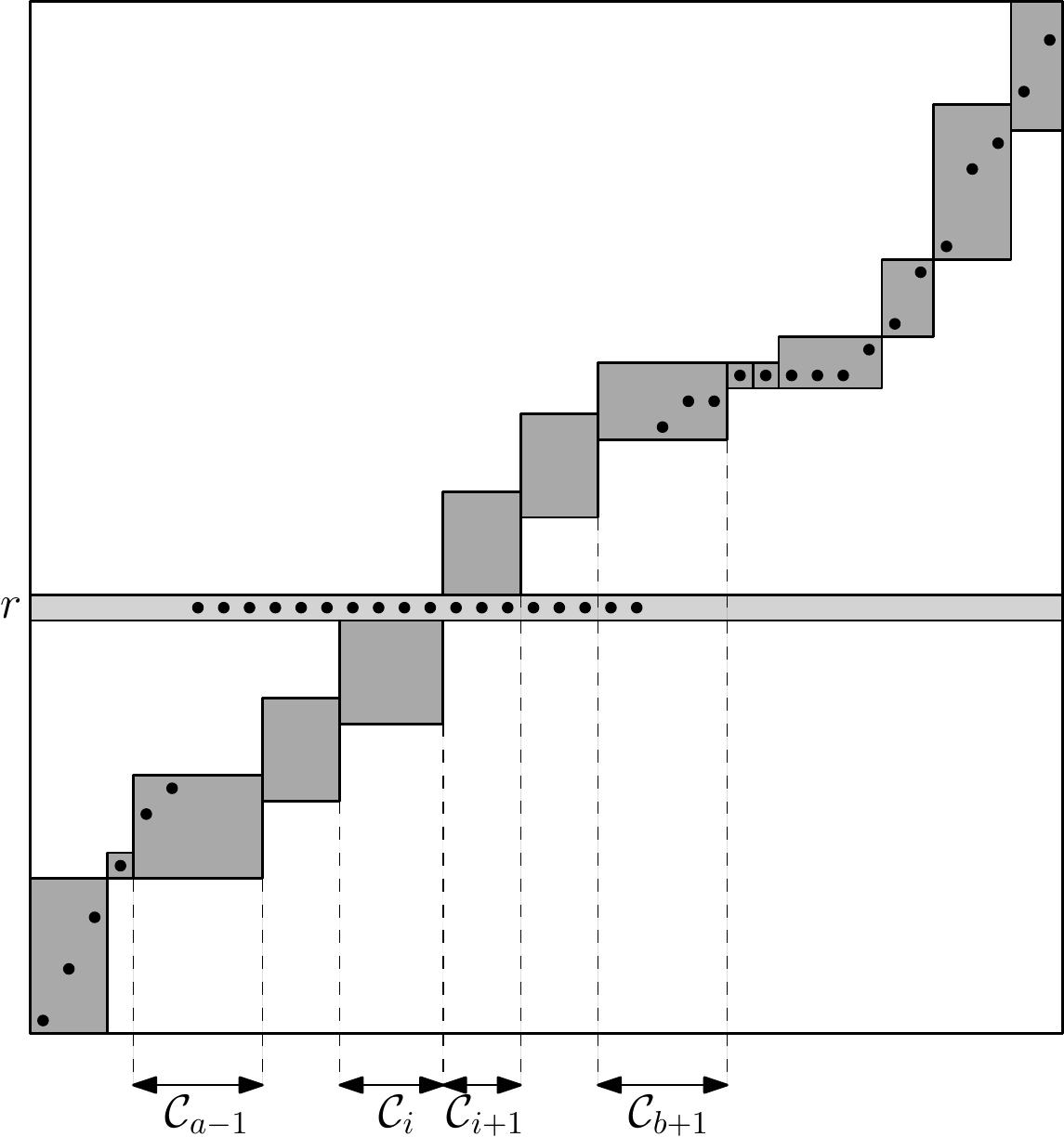}
  \caption{Updating the column groups and the corresponding
    potential row sets after activating row $\mrow$.
    The rectangles conceptually show the potential
    row sets.
    The rows of $\mcurrows$ that are not contained
    in any potential row set are omitted in the picture.
    The dots represent the column minima.
  Note that it might happen that $P(\mgroup_{i})$ contains rows both above and below $\mrow$.}
  \label{f:activate}
  \end{figure}

  We spend $O(\alpha(\log\log{\msize}+\log{\alpha}))$ time on identifying, accessing and updating
each group that falls into categories~\ref{c2} or~\ref{c3}.
There are $O(1)$ such groups, as discussed above.
Also, by Lemma~\ref{l:hinv}, it takes $O(\alpha\log\log{\msize})$ time to fix the
invariant~\ref{i:heap} for (possibly split) groups $\mgroup_{a-1}$,
$\mgroup_{b+1}$ and $\mgroup^*$.

In order to bound the running time of the remaining steps,
i.e., handling the groups of category~\ref{c1} and splitting
the groups that break the invariant~\ref{i:psize},
we introduce two types of credits for each element inserted into sets $P(\mgroup_i)$:
\begin{itemize}
  \item an $O(\log\log{\msize}+\log{\alpha})$ \emph{identification credit},
%\item an $O(\log\log{m})$ \emph{data structure credit},
\item an $O\left(\frac{\log{m}}{\log{\alpha}}\right)$ \emph{splitting credit}.
\end{itemize}

The identification credit is used to pay for successfully verifying
that some group $\mgroup_i$ falls into category~\ref{c1} and deleting
all the elements of $P(\mgroup_i)$.
Indeed, as discussed above, we spend $O(|P(\mgroup_i)|(\log\log{\msize}+\log{\alpha}))$ time
on this.
As $P(\mgroup_i)$ is not empty, we can charge the cost of merging
$\mgroup_i$ with some other group to some arbitrary element of $P(\mgroup_i)$.
Recall that merging and splitting groups takes $O(\log\log{\msize})$ time.

%We use the data structure credit to pay for traversing and updating the predecessor
%structure~$U$, when processing groups of~\ref{c1}.
%Note, that for each $\mgroup_i$, $P(\mgroup_i)\neq\emptyset$,
%so we can also use data structure credit to pay
%for traversing and updating $F$.

Finally, consider performing a split of $P(\mgroup_i)$ of size $2\alpha$.
As the sets $P(\mgroup_i)$ only grow by inserting single elements,
there exist at least $\alpha$ elements of $P(\mgroup_i)$
that never took part in any split.
We use the total $O\left(\alpha\frac{\log{\msize}}{\log{\alpha}}\right)$ total credit
of those elements to pay for the split.

To sum up, the time needed to perform $\mrowcnt$ operations $\marow$
is \linebreak
$O\left(\mrowcnt\alpha(\log\log{\msize}+\log{\alpha})+I(\log\log{m}+\log{\alpha}+\frac{\log{\msize}}{\log{\alpha}})\right)$,
where $I$ is the total number of insertions to the sets $P(\mgroup_i)$.
As $\meb$ incurs $O(\mcolcnt)$ insertions in total, $I=O(\mrowcnt+\mcolcnt)$.
Setting $\alpha=\sqrt{\log{\msize}}$, we obtain the following lemma.

\begin{lemma}\label{l:full_report}
  Let $\mmat$ be a $\mrowcnt\times\mcolcnt$ offset Monge matrix.
  There exists a data structure supporting $\minit$ in $O(\mrowcnt+\mcolcnt\log{\msize})$ time,
$\mlb$ in $O(1)$ time and both $\mcolminrow$ and $\meb$ in
$O(\log{\msize})$ time.
Additionally, 
any sequence of $\marow$ operations is performed in
$O\left((\mrowcnt+\mcolcnt)\frac{\log{\msize}}{\log{\log{\msize}}}\right)$
total time, where $\msize=\max(\mrowcnt,\mcolcnt)$.
\end{lemma}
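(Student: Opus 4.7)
The plan is to collect the per-operation bounds established earlier in this section as functions of the parameter $\alpha$, and then choose $\alpha$ to balance the dominant terms. No new algorithmic content is needed; the lemma is a quantitative consolidation of the analyses already performed.

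First I would restate the bounds derived above as functions of $\alpha$. Initialization builds the subrow minimum data structure of Theorem~\ref{t:subrow} and loads $\mcolcnt$ elements with key $\infty$ into $H$, costing $O(\mrowcnt+\mcolcnt\log\msize)$; the first $\marow$ is folded in without changing the bound. The invariants~\ref{i:hcorrect} and~\ref{i:heap} make $\mlb$ an $O(1)$ read of $H.\pqminkey()$. A $\mcolminrow$ query costs an $O(\log\log\msize)$ predecessor lookup in $F$ plus an $O(\alpha)$ scan of the relevant potential row set. An $\meb$ operation performs one $H.\pqext$ (amortized $O(\log\msize)$ in the Fibonacci heap), $O(1)$ group splits in $F$, $O(\alpha)$ insertions and deletions into potential row sets (each costing $O(\log\alpha+\log\log\msize)$ by Lemma~\ref{l:pcost}), and one invocation of Lemma~\ref{l:hinv} to restore invariant~\ref{i:heap}, totalling $O(\log\msize+\alpha(\log\alpha+\log\log\msize))$.

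The main accounting is for $\marow$. From the earlier analysis, processing a single activation spends $O(\alpha(\log\log\msize+\log\alpha))$ directly on category~\ref{c2}/\ref{c3} groups and on restoring invariant~\ref{i:heap} via Lemma~\ref{l:hinv}; all remaining work—verifying category~\ref{c1}, merging runs of such groups, and splitting any oversized set—is paid for by the identification credit of $O(\log\log\msize+\log\alpha)$ and the splitting credit of $O(\log\msize/\log\alpha)$ attached to each element ever inserted into some $P(\mgroup_i)$. Letting $I$ denote the total number of such insertions over the life of the data structure, this yields the aggregate bound
\[
O\!\left(\mrowcnt\,\alpha(\log\log\msize+\log\alpha)\right)+O\!\left(I\Bigl(\log\log\msize+\log\alpha+\tfrac{\log\msize}{\log\alpha}\Bigr)\right).
\]
The key observation (which I would justify in one sentence) is that $I=O(\mrowcnt+\mcolcnt)$: each $\marow$ and each $\meb$ creates only $O(1)$ new column groups, and a row is inserted into at most two potential row sets per group it touches, while the split produced by Lemma~\ref{l:split} only redistributes existing elements.

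Finally I would set $\alpha=\sqrt{\log\msize}$, so that $\log\alpha=\tfrac{1}{2}\log\log\msize$ and $\tfrac{\log\msize}{\log\alpha}=\Theta\bigl(\tfrac{\log\msize}{\log\log\msize}\bigr)$. Plugging in: $\mcolminrow$ becomes $O(\sqrt{\log\msize}+\log\log\msize)=O(\log\msize)$; $\meb$ becomes $O(\log\msize+\sqrt{\log\msize}\log\log\msize)=O(\log\msize)$; and the $\marow$ total collapses to $O\!\bigl(\mrowcnt\sqrt{\log\msize}\log\log\msize\bigr)+O\!\bigl((\mrowcnt+\mcolcnt)\tfrac{\log\msize}{\log\log\msize}\bigr)$, where the first summand is absorbed into the second because $\sqrt{\log\msize}\,(\log\log\msize)^2=o(\log\msize)$. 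This is exactly the claimed bound. The only potentially delicate step is confirming $I=O(\mrowcnt+\mcolcnt)$, since without it the splitting-credit term in the $\marow$ cost would not telescope; everything else is direct substitution.
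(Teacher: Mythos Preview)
Your proposal is correct and mirrors the paper's own argument: collect the per-operation bounds in terms of $\alpha$, use the insertion count $I=O(\mrowcnt+\mcolcnt)$ to bound the credit-charged work, and set $\alpha=\sqrt{\log\msize}$ to balance the $\alpha(\log\log\msize+\log\alpha)$ and $\log\msize/\log\alpha$ terms. The paper's justification that $I=O(\mrowcnt+\mcolcnt)$ is even terser than yours (it simply notes that $\meb$ contributes $O(\mcolcnt)$ insertions in total), so your one-sentence expansion is fine.
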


\section{Online Column Minima of a Block Monge Matrix}\label{s:block_min}
Let $\mmat=\om(\mmat_0,\off)$, $\mrows$, $\mcols$, $\mcolcnt$, $\mrowcnt, \msize$ be defined as in Section~\ref{s:full_min}.
In this section we consider the problem of reporting the column minima
of a rectangular offset Monge matrix, but in a slightly different setting.

Again, we are given a fixed rectangular Monge matrix $\mmat_0$ and we also
have an initially empty, growing set of rows $\mcurrows\subseteq\mrows$
for which the offsets $\off(*)$ are known.
Let $\Delta>0$ be an integral parameter not larger than~$\mcolcnt$.
We partition $\mcols$ into a set $\mblocks=\{\mblock_1,\ldots,\mblock_\blcnt\}$ of at most
$\lceil\mcolcnt/\Delta\rceil$ blocks, each of size at most $\Delta$.
The columns in each $\mblock_i$ constitute a contiguous fragment
of $\mcol_1,\ldots,\mcol_\mcolcnt$, and each block $\mblock_i$ is to the left of $\mblock_{i+1}$.
We also maintain a shrinking subset $\mbnotyet\subseteq\mblocks$ containing the
blocks $\mblock_i$, such that the minima $\min\{\mmat(\mrows,\mblock_i)\}$
are not yet known.
More formally, for each $\mblock_i\in \mblocks\setminus\mbnotyet$, we
have $\min\{\mmat(\mrows,\mblock_i)\}=\min\{\mmat(\mcurrows,\mblock_i)\}$.
Initially $\mbnotyet=\mblocks$.

For each column $\mcol$ not contained in any of the blocks of $\mbnotyet$,
the data structure explicitly maintains
the \emph{current minimum}, i.e., the value $\min\{\mmat(\mcurrows,\{\mcol\})\}$.
Moreover, when some new row is activated, the user is notified
for which columns of $\bigcup(\mblocks\setminus\mbnotyet)$ the
current minima have changed.

For blocks $\mbnotyet$,
the data structure only maintains the value $\min\{\mmat(\mcurrows,\bigcup\mbnotyet)\}$.
Once the user can guarantee that the value $\min\{\mmat(\mrows,\bigcup\mbnotyet)\}$
does not depend on the ``hidden'' rows $\mrows\setminus\mcurrows$,
the data structure can move a block $\mblock_i\in\mbnotyet$ such
that $\min\{\mmat(\mrows,\bigcup\mbnotyet)\}=\min\{\mmat(\mcurrows,\mblock_i)\}$
out of $\mbnotyet$ and make it possible to access
the current minima in the columns of $\mblock_i$.

More formally, we support the following set of operations:
\begin{itemize}
\item $\minit(\mrows,\mcols)$ -- initialize the data structure.
\item $\marow(\mrow)$, where $\mrow\in\mrows\setminus\mcurrows$ --
  add $\mrow$ to the set $\mcurrows$.
\item $\mblb()$ -- return $\min\{\mmat(\mcurrows,\bigcup\mbnotyet)\}$.
  If $\mcurrows=\emptyset$ or $\mbnotyet=\emptyset$, return $\infty$.
\item $\mbeb()$ -- tell the data structure that indeed
  $$\min\{\mmat(\mrows\setminus\mcurrows,\mcols)\}\geq\mblb()=\min\{\mmat(\mcurrows,\mblock_i)\},$$
  for some $\mblock_i\in\mbnotyet$, 
  i.e., the smallest element of $\mmat(\mrows,\bigcup\mbnotyet)$ does not
  depend on the entries of $\mmat$ located in rows $\mrows\setminus\mcurrows$.
  Again, it is the responsibility of the user to guarantee that this condition
  is in fact satisfied.

  As the minimum of $\mmat(\mrows,\mblock_i)$ can now be computed, $\mblock_i$
  is removed from $\mbnotyet$.
\item $\mcolmin(\mcol)$, where $\mcol\in\mcols$ --
  for $\mcol\in\bigcup(\mblocks\setminus\mbnotyet)$, return the explicitly maintained
  $\min\{\mmat(\mcurrows,\{\mcol\})\}$.
  For $\mcol\in\bigcup\mbnotyet$, set $\mcolmin(\mcol)=\infty$.
\end{itemize}

Additionally, the data structure provides an access to the queue $\mqueue$
containing
the columns $c\in\bigcup(\mblocks\setminus\mbnotyet)$ such that the most recent call
to either $\marow$ or \linebreak
$\mbeb$ resulted in a change (or an initialization,
if $\mcol\in\mblock_i$ and the last update was $\mbeb$, which moved $\mblock_i$
out of $\mbnotyet$)
of the value $\mcolmin(\mcol)$.

Note that there can be at most $\mrowcnt$ calls to $\marow$ and
no more than
 $\lceil\mcolcnt/\Delta\rceil$ calls to \linebreak
 $\mbeb$.

\subsection{The Components}
\paragraph{An Infrastructure for Short Subrow Minimum Queries}
In this section we assume that for any $\mrow\in\mrows$ and
$1\leq a,b\leq\mcolcnt$, $b-a+1\leq\Delta$,
it is possible to compute an answer to a subrow minimum
query $S(\mrow,a,b)$ (see Section~\ref{s:full_min}) on
matrix $\mmat_0$ (equivalently: $\mmat$) in constant time.
We call such a subrow minimum query \emph{short}.

\paragraph{The Block Minima Matrix} Define a $\mrowcnt\times\blcnt$ matrix $\mmat'$
with rows $\mrows$ and columns $\mblocks$, such that
$$\mmat'_{\mrow_i,\mblock_j}=\min\{\mmat(\{\mrow_i\},\mblock_j)\}.$$
As we assume that we can perform short subrow minima queries in $O(1)$ time,
and every block spans at most~$\Delta$ columns,
we can access the elements of $\mmat'$ in constant time.
Fact~\ref{f:condense} implies that $\mmat'$ is also
a rectangular Monge matrix.

We build the data structure of Section~\ref{s:full_min} for
matrix $\mmat'$.
For brevity we identify the matrix $\mmat'$ with this data
structure and write e.g. $\mmat'.\minit()$
to denote the call to $\minit$ of the data structure
built upon $\mmat'$.
This data structure handles the blocks contained in $\mbnotyet$.
\paragraph{The Exact Minima Array} For each column $\mcol\in\bigcup(\mblocks\setminus\mbnotyet)$,
the value $$\mcminarr(\mcol)=\min\{\mmat(\mcurrows,\{\mcol\})\}$$
is stored explicitly.
The operation $\mcolmin(\mcol)$ returns $\mcminarr(\mcol)$.

\paragraph{Rows Containing the Block Minima}
For each $\mblock_j\in(\mblocks\setminus\mbnotyet)$ we store the value
$$y_j=\mmat'.\mcolminrow(\mblock_j).$$
Note that the data structure of Section~\ref{s:full_min}
guarantees that for $\mblock_i,\mblock_j\in(\mblocks\setminus\mbnotyet)$
such that $i<j$, we have $y_i\geq y_j$.

The set of defined $y_j$'s grows over time.
We store this set in a dynamic predecessor/successor data structure $Y$.
We can thus perform insertions/deletions
and $\ppred$/$\psucc$ queries on a subset of $\{1,2,\ldots,\mrowcnt\}$
in $O(\log\log{\mrowcnt})=O(\log\log{\msize})$ time.

We also have two auxiliary arrays $\mfirstb$ and $\mlastb$
indexed with the rows of $\mrows$.
$\mfirstb(\mrow)$ ($\mlastb(\mrow)$)
contains the leftmost (rightmost respectively) block $\mblock_j$
such that $y_j=\mrow$.
Updating these arrays when $\mbnotyet$ shrinks is straightforward.

\paragraph{The Row Candidate Sets}
Two subsets $D_0$ and $D_1$ of $\mcurrows$ are maintained.
The set $D_q$ for $q=0,1$ contains the rows of $\mcurrows$
that may still prove useful when
computing the initial value of $\mcminarr(\mcol)$ for
$\mcol\in \bigcup\{\mblock_i : \mblock_i\in\mbnotyet \land i\bmod 2=q\}$.
For each such $c$, $D_q$ contains a row $\mrow$ such that
$\min\{\mmat(\mcurrows,\{\mcol\})\}=\mmat_{\mrow,\mcol}$.
Note that adding any row from $\mcurrows$ to $D_q$
does not break this invariant.
The call $\marow(r)$ always adds the row $r$ to both
$D_0$ and $D_1$.
The sets $D_q$ are stored in dynamic predecessor/successor
data structures as well.
\begin{remark}
There is a subtle reason why we keep
two row candidate sets $D_0$, $D_1$ responsible for
even and odd blocks respectively, instead of one.
Being able to separate two neighboring blocks
of each group with a block from the other group
will prove useful in an amortized analysis
of the operation $\mbeb$.
\end{remark}

\subsection{Implementing the Operations}

\paragraph{$\mbeb$}
The preconditions of this operation
ensure that it is valid to 
call \linebreak
$\mmat'.\meb()$, which in response returns some $\mblock_j$.
At this point we find the row~$y_j$ containing
the minimum of $\mmat(\mrows,\mblock_j)$ using
$\mmat'.\mcolminrow(\mblock_j)$.
The data structure~$Y$ and the arrays $\mfirstb$ and $\mlastb$
are updated accordingly.

As the block $\mblock_j$ is moved out of $\mbnotyet$, we need
to compute the initial values $\mcminarr(\mcol)$
for $\mcol\in\mblock_j$.
Let $y_j^-$ be the row returned
by $\mmat'.\mcolmin(\mblock_{j-1})$ if $j>0$ and $\mrow_1$ otherwise.
Similarly, set $y_j^+$ to be the row returned
by $\mmat'.\mcolmin(\mblock_{j+1})$ if $j<\blcnt$ and $\mrow_{\mrowcnt}$
otherwise.
Clearly, $y_j^-\geq y_j\geq y_j^+$.
First we prove that for each column $\mcol\in\mblock_j$, we have
$$\min\{\mmat(\mcurrows,\{\mcol\})\}=\min\{\mmat(\mcurrows\cap\{y_j^+,\ldots,y_j^-\},\{\mcol\})\},$$
that is, the search for the minimum in column $\mcol$ can be limited
to rows $y_j^+$ through~$y_j^-$.
By the definition of $\mmat'$, for some column $\mcol_j\in B_j$,
the minimum $\mmat(\mcurrows,\{\mcol_j\})$ is located in row $y_j$.
Now assume that $\mcol\in B_j$ is to the left of $\mcol_j$.
By Fact~\ref{f:monoton},
one minimum of $\mmat(\mcurrows,\{\mcol\})$ is located in
the rows of $\mcurrows$ (weakly) below $y_j$.
If $j>0$, then for some column $\mcol_j^-\in \mblock_{j-1}$
one minimum of $\mmat(\mcurrows,\{\mcol_j^-\})$ is located
in row $y_j^-$.
By Fact~\ref{f:monoton}, the minimum of $\mmat(\mcurrows,\{\mcol\})$
is located in rows (weakly) above~$y_j^-$.
Analogously we prove that for $\mcol\in \mblock_j$ to the right of $\mcol_j$,
the minimum is located in rows $y_j^+$ through~$y_j$.

We first add the rows $y_j^-,y_j^+$ to $D_{j\bmod 2}$.
From the definition of set $D_{j\bmod 2}$, for each column $\mcol\in \mblock_j$
it suffices to only consider the elements $\mmat_{\mrow,\mcol}$, where
$\mrow\in D_{j\bmod 2}\cap\{y_j^+,\ldots,y_j^-\}$ as potential minima
in column $\mcol$.
All such rows $\mrow$ can be found with $O(\log{\log{m}})$ overhead
per row using predecessor search on $D_{j\bmod 2}$.
Now we prove that after this step all such rows $\mrow$ except of
$y_j^-$ and $y_j^+$ can be safely removed from $D_{j\bmod 2}$.
Indeed, let $c_k$ be a column in some block $\mblock_k\in\mbnotyet$
such that $k\equiv j \pmod 2$ and $k<j$.
In fact, we have $k<j-1$.
By the Monge property, we have
$\mmat_{y_j^-,\mcol_k}+\mmat_{\mrow,\mcol_j^-}\leq\mmat_{\mrow,\mcol_k}+\mmat_{y_j^-,\mcol_j^-}$.
Also, from the definition of $y_j^-$, $\mmat_{y_j^-,\mcol_j^-}\leq \mmat_{\mrow,\mcol_j^-}$.
If we had $\mmat_{y_j^-,\mcol_k}>\mmat_{\mrow,\mcol_k}$, that would
lead to a contradiction.
Thus, $\mmat_{y_j^-,\mcol_k}\leq \mmat_{\mrow,\mcol_k}$,
and removing $\mrow$ from $D_{j\bmod 2}$ does not break
the invariant posed on $D_{j\bmod 2}$, as $y_j^-\in D_{j\bmod 2}$.
The proof of the case $k>j$ is analogous.

Let us now bound the total time spent on updating values $\mcminarr(\mcol)$
during the calls \linebreak
$\mbeb$.
For each column $\mcol\in\mblock_j$, all entries $\mmat_{\mrow,\mcol}$,
where $\mrow\in D_{j\bmod 2}\cap\{y_j^+,\ldots,y_j^-\}$ are
tried as potential column minima.
Alternatively, we can say that for each such row, we try to use
it as a candidate for minima of $O(\Delta)$ columns.
However, only two of these rows are not deleted from $D_{j\bmod 2}$
afterwards.
If we assign a credit of $\Delta$ to each row inserted into
$D_{j\bmod 2}$, this credits can be used to pay for considering
all the rows except of $y_j^-$ and $y_j^+$.
Thus, the total time spent on testing candidates for the minima
over all calls to $\mbeb$ can be bounded
by $O\left(\frac{\mcolcnt}{\Delta}\Delta+I\Delta\right)$, where $I$ is
the number of insertions to either~$D_0$ or $D_1$.
However, $I$ can be easily seen to be $O\left(\mrowcnt+\frac{\mcolcnt}{\Delta}\right)$
and thus the total number of candidates tried by $\mbeb$
is $O(\mrowcnt\Delta+\mcolcnt)$.
The total cost spent on maintaining and traversing sets $D_q$
is $O\left((I+\frac{\mcolcnt}{\Delta})\log{\log{\msize}}\right)=\
O\left((\mrowcnt+\frac{\mcolcnt}{\Delta})\log{\log{\msize}}\right)$.

\paragraph{$\marow$}
Suppose we activate the row $\mrow\in \mrows\setminus\mcurrows$.
The first step is to call \linebreak
$\mmat'.\marow(\mrow)$
and add $\mrow$ to sets $D_0$ and $D_1$.
The introduction of the row $\mrow$ may change
the minima of some columns $\mcol\in\bigcup(\mblocks\setminus\mbnotyet)$.
We now prove that there can be at most $O(\Delta)$ changes.
Recall that for each $\mblock_i\in \mblocks\setminus\mbnotyet$,
for some column $\mcol_i\in \mblock_i$ the minimum
of $\mmat(\mrows,\{\mcol_i\})$ is located in row $y_i$.
Note that $r\neq y_i$, as $r$ has just been activated.
Let $u$ be such that $y_u>\mrow$.
Then, for each block $\mblock_j\in\mblocks\setminus\mbnotyet$, where $j<u$,
Fact~\ref{f:monoton} implies that all the columns of $\mblock_j$ have their minima in rows below
$y_u$ (or exactly at $y_u$) and thus the introduction of row $\mrow$ does
not affect their minima.
Analogously, if $y_v<\mrow$, then the introduction of row $\mrow$
does not affect columns in blocks to the right of $B_v$.
Hence, $\mrow$ can only affect the exact minima in at most two blocks:
$B_u, B_v$, where $u=\mlastb(Y.\psucc(\mrow))$ and $v=\mfirstb(Y.\ppred(\mrow))$.
The blocks can be found in $O(\log{\log{m}})$ time,
whereas updating the values $\mcminarr(\mcol)$ (along with pushing
them to the queue $\mqueue$)
takes $O(\Delta)$ time.

Let us bound the total running time of any sequence of operations $\marow$
and \linebreak
$\mbeb$.
By Lemma~\ref{l:full_report}, the time spent on executing the data structure $\mmat'$
operations is
$O\left(\mrowcnt\frac{\log{\msize}}{\log{\log{\msize}}}\
+\frac{\mcolcnt}{\Delta}\log{\msize}\right)$
whereas the time spent on maintaining the predecessor
structures and updating the column minima
is $O\left(\mrowcnt\Delta+\mrowcnt\log\log{\msize}+\mcolcnt+\frac{\mcolcnt}{\Delta}\log\log{\msize}\right)$.
The following lemma follows.
\begin{lemma}\label{l:block}
  Let $\mmat=\om(\mmat_0,\off)$ be a $\mrowcnt\times\mcolcnt$ rectangular offset Monge matrix.
  Let $\Delta$ be the block size.
  Assume we can perform subrow minima queries spanning at most $\Delta$ columns of $\mmat_0$
  in $O(1)$ time.
  There exists a data structure supporting $\minit$ in $O(\mrowcnt+\mcolcnt+\frac{\mcolcnt}{\Delta}\log{\msize})$
time and both $\mblb$ and $\mcolmin$ in $O(1)$ time.
Any sequence of $\marow$ and $\mbeb$ operations is performed
in $O\left(\mrowcnt\left(\frac{\log{\msize}}{\log{\log{\msize}}}+\Delta\right)+\mcolcnt+\frac{\mcolcnt}{\Delta}\log{\msize}\right)$
time, where $\msize=\max(\mrowcnt,\mcolcnt)$.
%Each call $\marow$ or $\mbeb$ pushes $O(\Delta)$ entries to the queue $\mqueue$.
\end{lemma}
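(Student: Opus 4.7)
The plan is to set up the two-level data structure exactly as described in the text preceding the lemma and then reduce the analysis to (i) Lemma~\ref{l:full_report} applied to the block minima matrix $\mmat'$, and (ii) an amortised credit argument that bounds the total work spent initialising and updating the exact values $\mcminarr(\mcol)$.

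First I would verify that $\mmat'$ defined by $\mmat'_{\mrow,\mblock}=\min\{\mmat(\{\mrow\},\mblock)\}$ is a rectangular Monge matrix: this follows from Fact~\ref{f:condense} applied to $\mmat$ with the trivial row partition and the column partition induced by $\mblocks$. Because each block has at most $\Delta$ columns and the row offset does not influence which entry in a row is the minimum, the hypothesis that short subrow minimum queries on $\mmat_0$ take $O(1)$ time gives access to each entry $\mmat'_{\mrow,\mblock}$ in $O(1)$ time. Hence I can instantiate the data structure of Lemma~\ref{l:full_report} on $\mmat'$, which yields an $\minit$ cost of $O(\mrowcnt+\blcnt\log\msize)=O(\mrowcnt+\tfrac{\mcolcnt}{\Delta}\log\msize)$; adding $O(\mcolcnt)$ for $\mcminarr$, $\mfirstb$, $\mlastb$ and the empty sets $D_0,D_1,Y$ finishes the $\minit$ bound. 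Then $\mblb()$ simply returns $\mmat'.\mlb()$ and $\mcolmin(\mcol)$ looks up $\mcminarr(\mcol)$, both in $O(1)$ time.

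For $\marow$ and $\mbeb$, I would split the total cost into three buckets. The first collects the work inside $\mmat'$: each $\marow$ on our structure triggers one $\mmat'.\marow$, and each $\mbeb$ triggers one $\mmat'.\meb$ plus a constant number of $\mmat'.\mcolminrow$ queries. By Lemma~\ref{l:full_report}, the total cost over the entire sequence is $O(\mrowcnt\tfrac{\log\msize}{\log\log\msize}+\tfrac{\mcolcnt}{\Delta}\log\msize)$. The second bucket is the $O(\log\log\msize)$-per-call overhead for maintaining $Y$, $D_0$, $D_1$ and the auxiliary arrays; since the number of calls that modify these structures is $O(\mrowcnt+\mcolcnt/\Delta)$, this totals $O((\mrowcnt+\mcolcnt/\Delta)\log\log\msize)$ and is absorbed by the claimed bound. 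The third bucket, and the one that requires real care, is the work spent traversing candidate rows and initialising or updating the values $\mcminarr$.

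The main obstacle is bounding that third bucket. For $\marow(\mrow)$, Fact~\ref{f:monoton} together with the monotonicity of the $y_j$'s ensures that $\mrow$ can affect the exact minima of at most the two already-uncovered blocks that flank $\mrow$ with respect to $Y$, contributing $O(\Delta)$ per call and $O(\mrowcnt\Delta)$ overall. For $\mbeb$, when block $\mblock_j$ leaves $\mbnotyet$, the Monge-inequality argument in the text restricts candidate minima for columns of $\mblock_j$ to rows in $D_{j\bmod 2}\cap\{y_j^+,\ldots,y_j^-\}$, and all such rows except $y_j^-$ and $y_j^+$ can then be discarded from $D_{j\bmod 2}$ without violating its invariant. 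I would amortise by placing a credit of $\Delta$ on each insertion into $D_0\cup D_1$; this credit pays for scanning the $\Delta$ columns of one block when the row is later used as a candidate. Since the total number of insertions into $D_0\cup D_1$ is $O(\mrowcnt+\mcolcnt/\Delta)$, the candidate-testing work over all $\mbeb$ calls is $O(\mcolcnt+\mrowcnt\Delta)$. The separation into two sets $D_0,D_1$ is essential: it guarantees that for the Monge inequality used in the discard argument, an intermediate block of the opposite parity really does sit between $\mblock_j$ and the block containing the column $\mcol_k$ under comparison. Summing the three buckets and using $\msize=\max(\mrowcnt,\mcolcnt)$ gives exactly the stated bound.
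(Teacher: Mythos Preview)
Your proposal is correct and follows essentially the same approach as the paper: reduce to Lemma~\ref{l:full_report} on the condensed block matrix $\mmat'$ via Fact~\ref{f:condense}, bound the $\marow$ work on $\mcminarr$ by the two-flanking-blocks argument, and amortise the $\mbeb$ candidate scanning by a $\Delta$-credit per insertion into $D_0\cup D_1$. Your explicit justification of why the parity split $D_0,D_1$ is needed (so that $k\equiv j\pmod 2$, $k\neq j$ forces $|k-j|\geq 2$ and hence $\mcol_k$ lies strictly on the correct side of $\mcol_j^{\pm}$ in the Monge inequality) is in fact sharper than the paper's own Remark~2.
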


\section{Online Column Minima of a Staircase Offset Monge Matrix}\label{s:stair_min}
In this section we show a data structure supporting
a similar set of operations as in Section~\ref{s:full_min}, but
in the case when the matrices $\mmat_0$ and $\mmat=\om(\mmat_0,\off)$ are staircase Monge matrices
with $\msize$ rows $\mrows=\{\mrow_1,\ldots,\mrow_m\}$
and $\msize$ columns $\mcols=\{\mcol_1,\ldots,\mcol_m\}$.
We still aim at reporting the column minima of $\mmat$,
while the set $\mcurrows$ of revealed rows
is extended and new bounds on $\min\{\mmat(\mrows\setminus\mcurrows,\mcols)\}$
are given.

In comparison to the data structure of Section~\ref{s:full_min},
we loosen the conditions posed on the operations $\mlb$ and
$\meb$.
Now, $\mlb$ might return a value smaller than $\min\{\mmat(\mcurrows,\mcurcols)\}$
and a single call to $\meb$ might not report any new column minimum
at all.
However, $\meb$ can still only be called if
$\min\{\mmat(\mrows\setminus\mcurrows,\mcols)\}\geq\mlb()$
and 
the data structure we develop in this section guarantees
that a bounded number of calls to %\linebreak
$\meb$ suffices to report all the column minima of $\mmat$.

The exact set of operations we support is the following:
\begin{itemize}
\item $\minit(\mrows,\mcols)$ -- initialize the data structure and set
  $\mcurrows=\emptyset$ and $\mcurcols=\mcols$.
\item $\marow(\mrow)$, where $\mrow\in\mrows\setminus\mcurrows$ --
  add $\mrow$ to the set $\mcurrows$.
\item $\mlb()$ -- return a number $v$ such that $\min\{\mmat(\mcurrows,\mcurcols)\}\geq v$.
  If $\mcurrows=\emptyset$ or $\mcurcols=\emptyset$, return $\infty$.
\item $\meb()$ -- tell the data structure that we have
  $$\min\{\mmat(\mrows\setminus\mcurrows,\mcols)\}\geq \mlb().$$
  As for previous data structures, it is the responsibility of
  the user to guarantee that this condition is in fact satisfied.

  With this knowledge, the data structure may report some
  column $\mcol\in\mcurcols$ such that %\linebreak
  $\min\{\mmat(\mrows,\{\mcol\})\}$ is known.
  However, it's also valid to not report any new column minimum
  (in such case $\nil$ is returned) and only change
  the known value of $\mlb()$.
\item $\mcolmin(\mcol)$, where $\mcol\in\mcols$ --
  if $\mcol\in\mcols\setminus\mcurcols$, return the known minimum
  in column $\mcol$.
  Otherwise, return $\infty$.
\end{itemize}

\subsection{Partitioning a Staircase Matrix into Rectangular Matrices}
Before we describe the data structure, we prove the following
lemma on partitioning staircase matrices into rectangular
matrices.
\begin{lemma}\label{l:part}
For any $\epsilon\in(0,1)$,
a staircase matrix $\mmat$ with $\msize$ rows and $\msize$ columns can be 
partitioned in $O(\msize)$ time into $O(\msize)$ non-overlapping
rectangular matrices so that each
row appears in $O\left(\frac{\log{\msize}}{\log{\log{\msize}}}\right)$
matrices of the partition, whereas each column appears
in $O\left(\frac{\log^{1+\epsilon}{\msize}}{\log{\log{\msize}}}\right)$
matrices of the partition.
\end{lemma}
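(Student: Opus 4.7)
The plan is to construct a hierarchical partition by recursively slicing the staircase horizontally with branching factor $h=\lceil \log^{\epsilon} \msize\rceil$. Given a sub-staircase on contiguous rows $\{r_1',\ldots,r_s'\}$ and columns $\{c_1',\ldots,c_s'\}$ (with $\mmat_{r_i',c_j'}$ defined iff $i\le j$), I partition the rows into $h$ strips of height $s/h$. A direct check on the staircase shape shows that for each strip $i$, the rows of the strip all have defined entries on every column with global index exceeding $is/h$, so the strip splits cleanly into a full rectangle on columns $c_{is/h+1}',\ldots,c_s'$ (empty for the bottommost strip) and a sub-staircase of size $(s/h)\times(s/h)$ on columns $c_{(i-1)s/h+1}',\ldots,c_{is/h}'$. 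I emit the (at most) $h-1$ nonempty rectangles and recurse on the $h$ sub-staircases. The recursion terminates when $s\le 1$, in which case the single cell is itself output as a rectangle.

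To bound the total number of rectangles, I would analyze the recursion tree: each original column ends up in a unique size-$1$ leaf, so the tree has exactly $\msize$ leaves, and since every internal node has $h$ children, the number of internal nodes is $(\msize-1)/(h-1)=O(\msize/h)$. Each internal node emits at most $h-1$ rectangles, so the total rectangle count is $O(\msize)$. A straightforward recursive traversal writes down the coordinates of each rectangle in constant time, yielding $O(\msize)$ construction time.

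For the multiplicity bounds, note that every row traces a single root-to-leaf path of length $d=\lceil\log_h \msize\rceil=O(\log \msize/\log\log \msize)$ and, at each level, contributes to at most one rectangle (the right part of its own strip). This gives per-row multiplicity $O(\log \msize/\log\log \msize)$. A column likewise follows a unique path of length $d$, but at each level it appears in the right-parts of every strip lying strictly above the strip containing it, which is at most $h-1$ rectangles. Summing, the per-column multiplicity is $O(hd)=O(\log^{\epsilon}\msize\cdot \log \msize/\log\log \msize)=O(\log^{1+\epsilon}\msize/\log\log \msize)$.

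The main point requiring care is the rectangle count: the naive per-level summation $\sum_{k=1}^d h^k=\Theta(h^{d+1})$ overshoots to $\Theta(\msize h)=\Theta(\msize\log^{\epsilon}\msize)$ when $d=\lceil\log_h \msize\rceil$ is rounded up, which would fail the $O(\msize)$ target by a polylogarithmic factor. The fix is to count through the $\msize$ leaves and the resulting $O(\msize/h)$ internal nodes of the recursion tree, which is insensitive to this rounding. The remaining bookkeeping—handling sizes not divisible by $h$ via nearly-equal strip heights, and checking pairwise disjointness of emitted rectangles (sibling strips partition the rows of the parent, and each strip's right-part rectangle is separated from its own sub-staircase by a vertical column cut)—is routine.
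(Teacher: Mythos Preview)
Your construction is essentially identical to the paper's: both slice the staircase horizontally with branching factor $\approx\log^{\epsilon}\msize$, peel off the full rectangle to the right of each strip, and recurse on the $h$ diagonal sub-staircases, yielding the same depth $O(\log\msize/\log\log\msize)$ and the same per-row and per-column multiplicity bounds. The only cosmetic difference is that the paper pads $\msize$ to a power of $b$ and then trims dummy rows/columns, whereas you handle non-divisibility directly; your counting of rectangles via the $\msize$ leaves and $O(\msize/h)$ internal nodes of the recursion tree is in fact a cleaner way to get the $O(\msize)$ bound than the paper's $2b^y-1$ estimate.
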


\newcommand{\pst}{\mmat'^{\text{s}}}
\newcommand{\pre}{\mmat'^{\text{r}}}

\begin{proof}
  Let $\epsilon\in(0,1)$ and set $b=\lfloor\log^{\epsilon}{m}\rfloor$.
  For $m>1$, we have $b\geq 1$.

  We first describe the partition for matrices $\mmat'$ with $\msize'=b^z$ rows
  $\{\mrow_1',\ldots,\mrow_{\msize'}'\}$
  and $\msize'$ columns
  $\{\mcol_1',\ldots,\mcol_{\msize'}'\}$,
  where $z\geq 0$.
  Our partition will be recursive.
  If $z=0$, then $\mmat'$ is a $1\times 1$ matrix and our partition
  consists of a single element $\mmat'$.

  Assume $z>0$.
  We partition $\mmat'$ into $b$ staircase matrices $\pst_1,\ldots,\pst_b$
  and $b-1$ rectangular matrices $\pre_1,\ldots,\pre_{b-1}$.
  For $i=1,\ldots,b$, we set the $i$-th staircase matrix to be
  $$\pst_i=\mmat'(\{\mrow_{(i-1)b^{z-1}+1}',\ldots,\mrow_{ib^{z-1}}'\},\
  \{\mcol_{(i-1)b^{z-1}+1}',\ldots,\mcol_{ib^{z-1}}'\}),$$
  whereas for $j=1,\ldots,b-1$, the $j$-th rectangular matrix is defined as
  $$\pre_i=\mmat'(\{\mrow_{(i-1)b^{z-1}+1}',\ldots,\mrow_{ib^{z-1}}'\},\
  \{\mcol_{ib^{z-1}+1}',\ldots,\mcol_{\msize'}'\}).$$
  See Figure~\ref{f:rec_part} for a schematic depiction of such partition.

  \begin{figure}[t]
  \centering
  \includegraphics[scale=0.6]{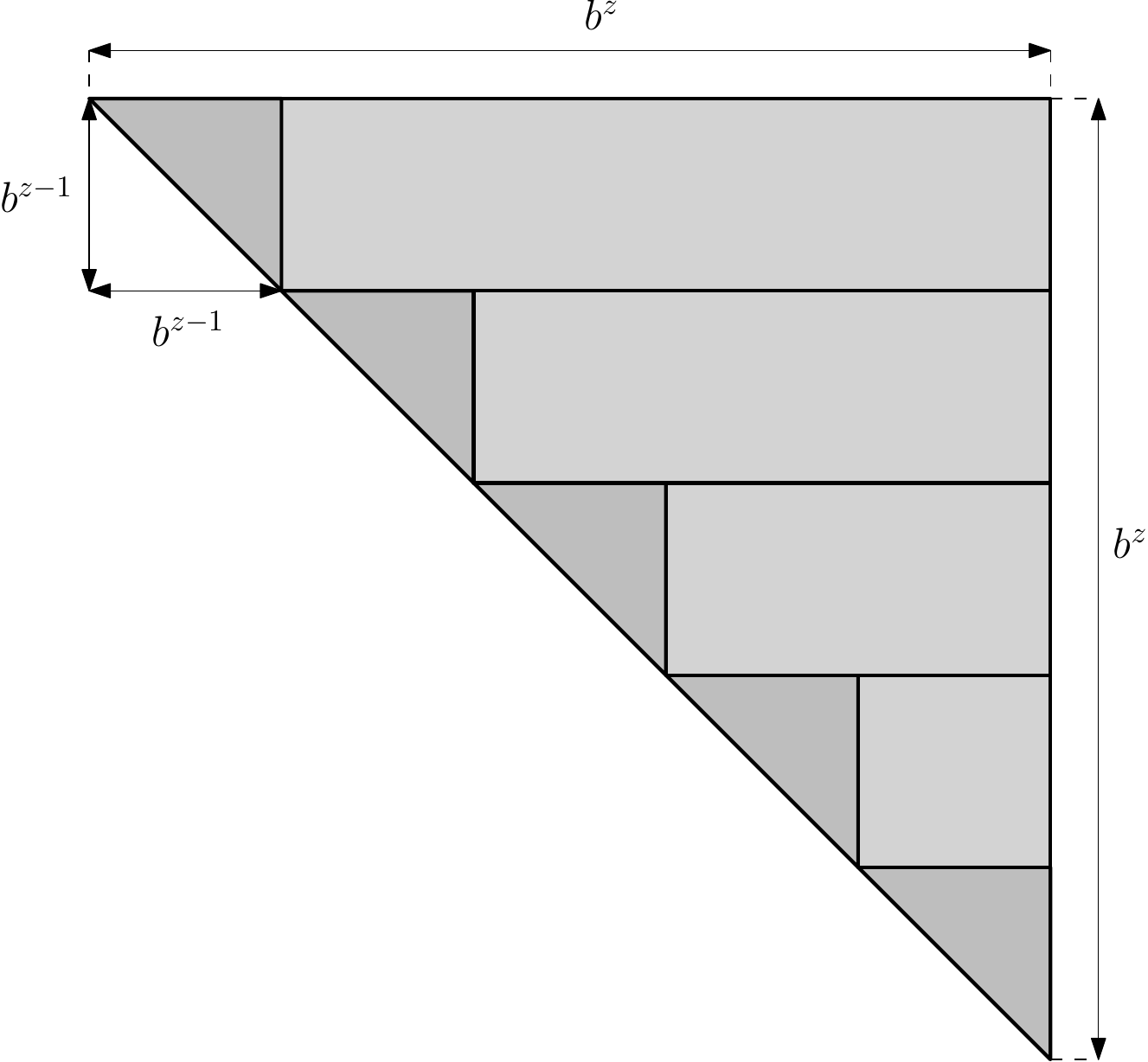}
  \caption{Schematic depiction of the biased partition used in Lemma~\ref{l:part}.}
  \label{f:rec_part}
  \end{figure}

  Each of matrices $\pst_i$ is of size $b^{z-1}\times b^{z-1}$ and 
  is then partitioned recursively.

  Let us now compute the value $\prcount(z)$ ($\pccount(z)$) defined 
  as the maximum number of matrices in partition that some given row (column resp.)
  of a staircase matrix $\mmat'$ of size $b^z\times b^z$ appears in.
  Clearly, on the topmost level of recursion,
  each row appears in exactly one staircase matrix~$\pst_i$
  and at most one rectangular matrix $\pre_j$.
  Thus, we have $\prcount(0)=1$ and $\prcount(z+1)\leq\prcount(z)+1$,
  which easily implies $\prcount(z)\leq z+1$.
  
  Each column appears in exactly one matrix $\pst_i$ and no more than
  $b-1$ matrices $\pre_j$.
  Hence, we have $\pccount(0)=1$ and $\pccount(z+1)\leq \pccount(z)+b-1$.
  We thus conclude $\pccount(z)\leq zb-z+1$.

  Analogously we can compute the value $\pfcount(z)$ denoting the total number
  of rectangular matrices in such recursive partition.
  We have $\pfcount(0)=1$ and $\pfcount(z+1)\leq b\cdot\pfcount(z)+b-1$.
  An easy induction argument shows that $\pfcount(z)\leq 2b^z-1$.

  The partition for an arbitrary matrix $\mmat$ of $\msize$ is obtained as follows.
  We find the smallest $y$ such that $b^y\geq \msize$.
  We next find the recursive partition of matrix $\mmat^*$ which is
  defined as $\mmat$ padded so that it has $b^y$ rows and $b^y$ columns.
  The last step is to remove some number of dummy rightmost columns
  and bottommost rows from each rectangular matrix of the partition.
  
  Now, each row of $\mmat$ appears in at most
  $$y+1=O(\log_b{\msize})=O\left(\frac{\log{\msize}}{\log{b}}\right)=\
  O\left(\frac{\log{\msize}}{\epsilon\log{\log{m}}}\right)=\
  O\left(\frac{\log{\msize}}{\log{\log{m}}}\right)$$
  rectangular matrices of a partition.
  Each column of $\mmat$ appears in at most
  $$yb-y+1\leq yb+1=O(b\log_b{\msize})=O\left(\frac{b\log{\msize}}{\log{b}}\right)=\
  O\left(\frac{\log^{1+\epsilon}{\msize}}{\epsilon\log{\log{m}}}\right)=\
  O\left(\frac{\log^{1+\epsilon}{\msize}}{\log{\log{m}}}\right)$$
  matrices of the partition.
  The partition consists of at most $2b^y-1=O(m)$ rectangles.
  The time needed to compute the row and column intervals
  constituting the rows and columns of the individual matrices
  of the partition is $O(\pfcount(y))=O(m)$.
\end{proof}

\subsection{The Data Structure Components}
\paragraph{The Short Subrow Minimum Queries Infrastructure}
Let $\Delta=\lceil\log^{1-\epsilon/2}{\msize}\rceil$.
In order to be able to use the data structure of Lemma~\ref{l:block} with
block size $\Delta$, we need the following lemma.
\begin{lemma}\label{l:short}
The staircase Monge matrix $\mmat_0$ can be preprocessed in $O(\msize\Delta\log{\msize})$ time
to allow performing subrow minimum queries on $\mmat_0$ spanning at most
$\Delta$ columns in $O(1)$ time.
\end{lemma}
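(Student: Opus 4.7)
The plan is to partition the columns of $\mmat_0$ into $\lceil \msize/\Delta\rceil$ aligned contiguous blocks $\mblock_1,\mblock_2,\ldots$ of at most $\Delta$ columns each. Any subrow query of length at most $\Delta$ is contained in the union of at most two consecutive blocks, so it suffices to support queries whose column range lies inside a single block; the two-block case reduces to a suffix query in one block plus a prefix query in the next, returning the smaller of the two answers. Within each block $\mblock$ I further separate the rows having defined entries in $\mblock$ into a rectangular part $R_\mblock$ (rows $\mrow_i$ whose index $i$ is strictly smaller than the smallest column index in $\mblock$, so that all entries of $\mblock$ in such rows are defined) and a triangular part $T_\mblock$ (the at most $\Delta$ rows whose index lies inside the column-index range of $\mblock$). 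By Fact~\ref{f:submatrix}, both $\mmat_0(R_\mblock,\mblock)$ and $\mmat_0(T_\mblock,\mblock)$ are Monge matrices, so all Monge machinery of Section~\ref{s:pre} is available on each piece.

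For each triangular part, which carries at most $\Delta^2$ defined entries, I explicitly precompute the subrow minimum for every column range inside $\mblock$ and every row of $T_\mblock$ by a single left-to-right scan per row that maintains running prefix minima and, by combining with suffix minima, fills a $O(\Delta^2)$-entry table. This costs $O(\Delta^2)$ per block, $O(\msize\Delta)$ in total, and yields $O(1)$-per-query table lookups for rows in $T_\mblock$. For each rectangular part I build the data structure of Theorem~\ref{t:subrow} on $\mmat_0(R_\mblock,\mblock)$, a Monge matrix of at most $\msize$ rows and at most $\Delta$ columns; preprocessing takes $O(\Delta\log \msize)$ per block and $O(\msize\log \msize)$ overall, and the structure supports subrow queries in $O(\log\log \msize)$ time.

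The remaining step is to push the per-query cost on rectangular parts down from $O(\log\log \msize)$ to $O(1)$. The plan is to exploit $\Delta=\lceil\log^{1-\epsilon/2}\msize\rceil=O(\log\msize)$: every query is confined to a column universe of at most $\Delta$ columns, which fits in a single machine word, and all internal predecessor searches inside Theorem~\ref{t:subrow} during such a query act only on this restricted universe. Replacing the van Emde Boas-style predecessor lookups of Theorem~\ref{t:subrow} by constant-time word-RAM bit operations (e.g.\ most-significant-bit tricks) therefore collapses the internal $O(\log\log \msize)$ cost to $O(1)$ without affecting preprocessing. Alternatively, if one prefers to keep Theorem~\ref{t:subrow} as a black box, then after it is built one can spend an additional $O(\Delta^2)$ time per block ($O(\msize\Delta)$ overall) tabulating its answers on every column range inside $\mblock$ for each of the $O(\Delta)$ rows at which column minima are attained, and route every general query through this table via Fact~\ref{f:monoton} and one $O(1)$-time predecessor lookup.

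Summing the three preprocessing contributions yields $O(\msize\log \msize)+O(\msize\Delta)\subseteq O(\msize\Delta\log \msize)$ (using $\Delta\ge 1$), matching the claimed bound; and each subrow minimum query spanning at most $\Delta$ columns is answered by a constant number of block-level lookups, hence in $O(1)$ time. The main obstacle, and the only place the proof is not fully mechanical, is precisely this $O(\log\log\msize)\to O(1)$ upgrade on the rectangular pieces; everything else is a routine block-wise application of Fact~\ref{f:submatrix} together with the direct tabulation of the triangular parts.
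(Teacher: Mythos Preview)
Your block decomposition and the separation into triangular and rectangular parts are sound, and the brute-force tabulation of the triangular parts is fine. The genuine gap is exactly where you flag it: upgrading the rectangular-part queries from $O(\log\log\msize)$ to $O(1)$. Neither of your two proposed fixes works as written.

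For the first alternative, Theorem~\ref{t:subrow} is a black box here, and you cannot simply assert that its entire query cost comes from predecessor searches over the column universe. In fact the companion result from the same source that \emph{does} give $O(1)$ queries when the number of columns is $O(\log\msize)$ --- this is Lemma~\ref{l:entrow} in the paper --- only supports queries spanning \emph{all} columns of the matrix, not an arbitrary subrange. So even granting the internals you gesture at, you still could not answer an arbitrary range $[a,b]$ inside a block in $O(1)$ time.

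For the second alternative, a subrow minimum query $S(\mrow,a,b)$ asks for the minimum of a \emph{given} row~$\mrow$ over columns $[a,b]$; the $O(\Delta)$ rows where the block's column minima sit are irrelevant to a query in an arbitrary row, and Fact~\ref{f:monoton} says nothing about row-wise minima. Even if you instead exploit the monotonicity of $\mrow\mapsto\arg\min_{c\in[a,b]}\mmat_{\mrow,c}$ and precompute its $O(\Delta)$ breakpoints for each of the $O(\Delta^2)$ ranges, routing a general row~$\mrow$ to the correct piece is a predecessor query over the \emph{row} universe of size $\msize$, which costs $\Theta(\log\log\msize)$, not $O(1)$.

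The paper sidesteps this issue entirely: it builds, for every column index $i$ and every power of two $2^j<\Delta$, the Lemma~\ref{l:entrow} structure on the rectangular submatrix $\mmat_0(\{\mrow_1,\ldots,\mrow_i\},\{\mcol_i,\ldots,\mcol_{i+2^j-1}\})$. Any short query $S(\mrow,a,b)$ is then covered, sparse-table style, by two full-width $O(1)$-time queries on matrices of width $2^{\lfloor\log(b-a+1)\rfloor}$. The total preprocessing is $\sum_{j}\sum_i O(2^j\log\msize)=O(\msize\Delta\log\msize)$. The key ingredient you are missing is precisely this combination of Lemma~\ref{l:entrow} with a dyadic covering of column windows; without it, your per-block structure cannot deliver $O(1)$ on arbitrary subranges.
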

\begin{proof}
We use the following result of $\cite{Gawry:2015}$.
\begin{lemma}[Lemma 3. of \cite{Gawry:2015}]\label{l:entrow}
Given a $y\times x$ rectangular Monge matrix $\mmat'$, one can construct
in $O(x\log{y})$ time an $O(x)$-space data structure supporting
subrow minimum queries spanning \emph{all} columns of $\mmat'$ in $O(1)$ time, if $x=O(\log{y})$.
\end{lemma}
Let $q$ be the maximum integer such that $2^q<\Delta$.
For $j\in[0,q]$ and $i\in [1,\msize-2^j+1]$, let $\mmat^j_i=\mmat(\{\mrow_1,\ldots,\mrow_i\},\{\mcol_i,\ldots,\mcol_{i+2^j-1}\})$,
i.e., $\mmat^j_i$ is a rectangular submatrix of $\mmat$ with columns
$\{\mcol_i,\ldots,\mcol_{i+2^j-1}\}$ and all rows that have
values defined for these columns.
By Fact~\ref{f:submatrix}, $\mmat^j_i$ is a Monge matrix.
For each $\mmat^j_i$, we build a data structure of Lemma~\ref{l:entrow}.
This takes
$$O\left(\sum_{j=0}^q\sum_{i=1}^{\msize-2^j+1}2^j\log{\msize}\right)=O\left(\sum_{j=0}^q2^j\msize\log{\msize}\right)=\
O(2^q\msize\log{\msize})=O(\msize\Delta\log{\msize})$$
time.
Now we show how to handle a subrow minimum query $S(\mrow,a,b)$ on $\mmat_0$, where
$b-a+1\leq\Delta$.
Let $u$ be the greatest integer such that $2^u\leq b-a+1$.
Then we can cover our subrow minimum query with two possibly overlapping
queries of length $2^u$.
Hence, to answer $S(\mrow,a,b)$ it is enough to find
the minimum in row $\mrow$ in $\mmat^j_{a}$ and the minimum
in row $\mrow$ in $\mmat^j_{b-2^j+1}$ and return the smaller one.
By Lemma~\ref{l:entrow}, this takes $O(1)$ time.
\end{proof}

\paragraph{The Partition $\mmat_1,\ldots,\mmat_q$}
  We partition the staircase Monge matrix $\mmat$ into $O(\msize)$ \emph{rectangular} Monge matrices
  $\mmat_1,\ldots,\mmat_q$ such that each $\mmat_i$ is a subrectangle
  of $\mmat$.
  By Lemma~\ref{l:part}, we can ensure that each row appears in
  $O\left(\frac{\log{\msize}}{\log{\log{\msize}}}\right)$ subrectangles and
  each column appears in $O\left(\frac{\log^{1+{\epsilon/2}}{\msize}}{\log{\log{\msize}}}\right)$
  subrectangles.
  Every element of $\mmat$ is covered by exactly one matrix $\mmat_i$.
  For each row $\mrow$ (column~$\mcol$) we compute the set $W_\mrow$ ($W^\mcol$ respectively)
  of matrices of the partition in which $\mrow$ ($\mcol$ resp.) appears.

  We build the block data structure of Section~\ref{s:block_min} for each $\mmat_i$.
  For each $\mmat_i$ we use the same block size $\Delta$.
  As each $\mmat_i$ is a subrectangle of $\mmat$, Lemma~\ref{l:short} guarantees that
  we can perform subrow minimum queries on $\mmat_i$ spanning at most
  $\Delta$ columns in $O(1)$ time.
  For brevity, we identify the matrix $\mmat_i$
  and its associated data structure.
  We use the dot notation to denote operations acting on specific
  matrices, e.g. $\mmat_i.\minit$.

  For each matrix $\mmat_i$ we use notation analogous as in previous sections:
  $\mrows_i$ and $\mcols_i$ are the sets of rows and columns of~$\mmat_i$,
  respectively.
  Let $\mrowcnt_i=|\mrows_i|$ and $\mcolcnt_i=|\mcols_i|$.
  Denote by $\mcurrows_i$ the set of active rows of $\mmat_i$.
  Recall that the blocks of the matrix $\mmat_i$ are
  partitioned into two sets $\mbnotyet_i$ and $\mblocks_i\setminus\mbnotyet_i$.
  Denote by $\bpart(\mmat_i)$ the submatrix $\mmat_i(\mcurrows_i,\bigcup\mbnotyet_i)$
  and by $\epart(\mmat_i)$ the submatrix $\mmat_i(\mcurrows_i,\bigcup(\mblocks_i\setminus\mbnotyet_i))$.

  \paragraph{The Priority Queue $H$}
  The core of our data structure is a priority queue $H$.
  At any time, $H$ contains an element $\mcol$ for each column
  $\mcol\in\mcurcols$ and at most one element $\mmat_i$ for
  each matrix $\mmat_i$.
  Thus the size of $H$ never exceeds $O(m)$.

  We maintain the following invariants after $\minit$ and each call
  $\marow$ or \linebreak
  $\meb$ resulting in $\mcurcols\neq\emptyset$:
  \begin{enumerate}[label=H.\arabic*,leftmargin=*]
    \item For each $\mcol\in\mcurcols$, the key of $\mcol$ in $H$ is\label{i:col}
      equal to $$\min\{\mmat_i.\mcolmin(\mcol):\mmat_i\in W^\mcol\}.$$
    \item For each $\mmat_i$ such that $\bpart(\mmat_i)$ is not empty,
      the key of $\mmat_i$ in $H$ is equal to\label{i:block}
      $$\min\{\bpart(\mmat_i)\}=\mmat_i.\mblb().$$
  \end{enumerate}
  \begin{lemma}\label{l:hlb}
    Assume invariants~\ref{i:col} and~\ref{i:block} are satisfied.
    Then $H.\pqminkey()\leq \mmat(\mcurrows,\mcurcols)$.
  \end{lemma}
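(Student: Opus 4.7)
The plan is to show the inequality pointwise: for an arbitrary active row $\mrow\in\mcurrows$ and arbitrary unresolved column $\mcol\in\mcurcols$, I will argue $H.\pqminkey()\leq\mmat_{\mrow,\mcol}$, and then take the minimum over all such pairs.

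Fix $\mrow$ and $\mcol$. Since the partition $\mmat_1,\ldots,\mmat_q$ covers every defined element of $\mmat$ exactly once, there is a unique index $i$ with $\mrow\in\mrows_i$ and $\mcol\in\mcols_i$, so $\mmat_i\in W_\mrow\cap W^\mcol$ and $\mmat_{\mrow,\mcol}=(\mmat_i)_{\mrow,\mcol}$. The invariant that activation in $\mmat$ propagates to every submatrix containing that row (this is how $\marow$ will be implemented in the next section, and can be taken as part of the setup of the invariants) gives $\mrow\in\mcurrows_i$. Now consider the (unique) block of $\mmat_i$ containing $\mcol$, and split into two cases according to whether that block currently lies in $\mbnotyet_i$ or has already been moved out.

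In the first case, $(\mmat_i)_{\mrow,\mcol}$ is an entry of $\bpart(\mmat_i)=\mmat_i(\mcurrows_i,\bigcup\mbnotyet_i)$, so by the specification of $\mblb$ we have $(\mmat_i)_{\mrow,\mcol}\geq\mmat_i.\mblb()$. Invariant~\ref{i:block} applies since $\bpart(\mmat_i)$ is nonempty (it contains $(\mmat_i)_{\mrow,\mcol}$), so $\mmat_i$ is present in $H$ with key $\mmat_i.\mblb()$, and hence $H.\pqminkey()\leq\mmat_i.\mblb()\leq\mmat_{\mrow,\mcol}$. In the second case, $\mcol$ belongs to a block of $\mmat_i$ already outside $\mbnotyet_i$, so by the guarantee of the block data structure, $\mmat_i.\mcolmin(\mcol)=\min\{\mmat_i(\mcurrows_i,\{\mcol\})\}\leq(\mmat_i)_{\mrow,\mcol}$. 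Invariant~\ref{i:col} then yields that the key of $\mcol$ in $H$ is at most $\mmat_i.\mcolmin(\mcol)$, giving $H.\pqminkey()\leq\mmat_{\mrow,\mcol}$ in this case as well.

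Since the chosen pair $(\mrow,\mcol)\in\mcurrows\times\mcurcols$ was arbitrary, taking the minimum over all such pairs yields $H.\pqminkey()\leq\min\{\mmat(\mcurrows,\mcurcols)\}$, as claimed. The only subtlety to be careful about is the case analysis by block status (which is why we maintain both column-keyed and matrix-keyed entries in $H$); once that dichotomy is set up, both invariants~\ref{i:col} and~\ref{i:block} are used in an essentially symmetric way, and no Monge property is needed for this lemma itself.
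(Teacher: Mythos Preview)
Your proof is correct and follows essentially the same approach as the paper: locate the unique $\mmat_i$ containing the entry, split into cases according to whether the column's block lies in $\mbnotyet_i$, and apply invariant~\ref{i:block} or~\ref{i:col} respectively. The paper phrases this as a proof by contradiction rather than a direct pointwise bound, but the underlying argument is identical.
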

  \begin{proof}
  Let $v=H.\pqminkey()$. Assume the contrary, that there exists an element $\mmat_{\mrow,\mcol}<v$, where
  $\mrow\in\mcurrows$ and $\mcol\in\mcurcols$.
  Let $\mmat_i$ be the rectangular Monge matrix containing element $\mmat_{\mrow,\mcol}$.
  If $\mcol\in\bigcup\mbnotyet_i$, then $v>\mmat_{\mrow,\mcol}\geq \mmat_i.\mblb()$.
  But then the key of $\mmat_i$ in $H$ is $\mmat_i.\mblb()$, a contradiction.
  Similarly, if $\mcol\in\bigcup(\mblocks_i\setminus\mbnotyet)$, then
  $\mmat_i.\mcolmin(\mcol)\leq \mmat_{\mrow,\mcol}<v$, a contradiction.
  \end{proof}
  \subsection{Implementing the Operations}
  \paragraph{Initialization} The procedure $\minit$ first builds
  the short subrow minimum query data structure of Lemma~\ref{l:short}.
  Then, the data structure of Lemma~\ref{l:block} is initialized for each $\mmat_i$.
  The total time needed to initialize these structures
  is thus
  $$O\left(\msize\Delta\log{\msize}+\
    \msize\frac{\log^{1+\epsilon/2}{\msize}}{\log{\log{\msize}}}+
    \frac{\msize}{\Delta}\frac{\log^{2+\epsilon/2}{\msize}}{\log{\log{\msize}}}\
    \right)=\
    O\left(\msize\log^{2-\epsilon/2}{\msize}\right).$$
    %O\left(\msize\left(\frac{\log{\msize}}{\log\log{\msize}}\right)^2\right).$$
  Next, $\minit$ inserts into the priority queue
  $H$ an element $\mcol$ with key $\infty$ for each $\mcol\in\mcols$
  and an element $\mmat_i$ with key $\infty$ for each matrix $\mmat_i$.
  This takes additional $O(\msize)$ time.
  Clearly, the invariants~\ref{i:col} and~\ref{i:block} are satisfied immediately after the initialization.
    
  \paragraph{\mlb} By Lemma~\ref{l:hlb}, the value $v=H.\pqminkey()$ is a lower bound
  on the value $\min\{\mmat(\mcurrows,\mcurcols)\}$.
  The function $\mlb$ returns $v$ and thus works in $O(1)$ time.

  \paragraph{\marow} The call $\marow(\mrow)$ may require changes to some keys
  of the entries of $H$ in order to satisfy invariants~\ref{i:col} and~\ref{i:block}.
  However, the activation of $\mrow$ does not alter what the functions $\mmat_i.\mcolmin(\mcol)$ or
  $\mmat_i.\mblb()$ return for matrices $\mmat_i\notin W_\mrow$.
  For all $\mmat_i\in W_\mrow$ we call $\mmat_i.\marow(\mrow)$.
  By Lemma~\ref{l:block}, the columns $\mcol_j$ of $\epart(\mmat_i)$ with changed minima
  can be read in linear time from $\mmat_i.\mqueue$.
  If ${\mcol_j\in \mcurcols}$ and the current key of~$\mcol_j$
  in $H$ is greater than $\mmat_i.\mcolmin(\mcol_j)$,
  we decrease key of~$c_j$ in~$H$.
  Analogously, the call $\marow(\mrow)$ can incur the change of $\mmat_i.\mblb()$
  and thus we may need to decrease the key of $\mmat_i$ in $H$.
  In both cases, as the operation
  $H.\pqdec$ runs in $O(1)$ time,
  the time spent on decreasing keys in $H$ is asymptotically no more
  than the running time of $\mmat_i.\marow(\mrow)$,
  
  \paragraph{\meb}
  Let $v=\mlb()=H.\pqminkey()$.
  Recall that the precondition of $\meb$ requires
  $\min\{\mmat(\mrows\setminus\mcurrows,\mcurcols)\}\geq\min\{\mmat(\mrows\setminus\mcurrows,\mcols)\}\geq v$.
  Also, $\min\{\mmat(\mcurrows,\mcurcols)\}\geq v$,
  so we can conclude $\min\{\mmat(\mrows,\mcurcols)\}\geq v$.
  We have two cases.

  \noindent{\textbf{1.}} If the top element of $H$ is a column $\mcol$, then from invariant~\ref{i:col} we know that
  $\mcol\in\mcurcols$ and for some $\mmat_j\in W^\mcol$ we have:
  $$\min\{\mmat(\mrows,\mcurcols)\}\geq\
  v=\mmat_j.\mcolmin(\mcol)\geq\min\{\mmat(\mrows,\{\mcol\})\}.$$
  However, clearly $\min\{\mmat(\mrows,\{\mcol\})\}\geq\min\{\mmat(\mrows,\mcurcols)\}$,
  so we conclude that the inequalities
  are in fact equalities and $v$ is indeed the minimum in column $\mcol$.
  In that case $\mcol$ is returned by $\meb$ and $\mcol$ is
  removed from $\mcurcols$.
  It can be easily verified that after calling $H.\pqext()$ invariants~\ref{i:col}
  and~\ref{i:block} still hold.
  This case arises at most once for each column of $\mcols$ so the
  total cost of $H.\pqext$ calls for all columns is $O(\msize\log{\msize})$.

  \noindent{\textbf{2.}} Now consider the case when the top element of $H$
  is a matrix $\mmat_i$.
  In this case we return $\nil$ and do not alter the set $\mcurcols$.
  As $\mmat_i$ is a subrectangle of $\mmat$, from the precondition we have
  $$\min\{\mmat_i(\mrows_i\setminus\mcurrows,\mcols_i)\}\geq\min\{\mmat(\mrows\setminus\mcurrows,\mcols)\}\geq v=\
  \mmat_i.\mblb().$$
  Hence, we can call $\mmat_i.\mbeb()$.
  Recall that this operation shrinks the set~$\mbnotyet_i$
  and thus we need to update $H$ so that the invariants~\ref{i:col} and~\ref{i:block} are
  satisfied.
  First we pop the entry $\mmat_i$ from $H$ with $H.\pqext()$ in $O(\log{\msize})$ time.
  Now, if $\mbnotyet_i\neq\emptyset$, we once again need to insert into $H$
  an element~$\mmat_i$ with key $\mmat_i.\mblb()$ in order to satisfy
  invariant~\ref{i:block}.
  To satisfy invariant~\ref{i:col}, we decrease key of each $c_j\in \mmat_i.\mqueue$
  to $\mmat_i.\mcolmin(c_j)$, if appropriate.
  Again, as decreasing a key in $H$ takes constant time, the time spent
  on decreasing column keys is asymptotically the same
  as the cost of the recent call to $\mmat_i.\mbeb$.

  A call to $\mmat_i.\mbeb$ can happen at most $O(\mcolcnt_i/\Delta)$
  times, so the additional time spent on~$H$ operations
  incurred by the calls to $\mmat_i.\mbeb$ is $O((\mcolcnt_i/\Delta)\cdot\log{m})$.
  For the same reason, the call $\meb$ returns $\nil$ at most
  $O\left(\sum_i\mcolcnt_i/\Delta\right)$ times.
  The total number of calls to $\meb$ to compute all the column
  minima of $\mmat$ is thus $O\left(\msize+\sum_i\mcolcnt_i/\Delta)=O(\msize\log^{\epsilon}{\msize}\right)$.
  The total cost of operations on $H$ that were not
  charged to $\mmat_i.\mbeb$ calls is $O(\msize\log^{1+\epsilon}{\msize})$.

  Let us now compute the total time spent in the calls
  $\mmat_i.\marow$ and \linebreak
  $\mmat_i.\mbeb$.
  
  $$\sum_i O\left(\mrowcnt_i\left(\Delta+\frac{\log{\msize}}{\log{\log{\msize}}}\right)+\mcolcnt_i+\frac{\mcolcnt_i}{\Delta}\log{\msize}\right)=\
  O\left(\frac{\log{\msize}}{\log{\log{\msize}}}\sum_i\mrowcnt_i+\log^{\epsilon/2}{\msize}\sum_i\mcolcnt_i\right)=$$
  $$O\left(\msize\left(\frac{\log{\msize}}{\log{\log{\msize}}}\right)^2+m\log^{1+\epsilon}{\msize}\right)=\
  O\left(\msize\left(\frac{\log{\msize}}{\log{\log{\msize}}}\right)^2\right).$$
  
\begin{lemma}\label{l:staircase_report}
  Let $\mmat=\om(\mmat_0,\off)$ be a $\msize\times\msize$ offset staircase Monge matrix and let
  $\epsilon\in(0,1)$.
There exists a data structure supporting $\minit$ in
$O\left(\msize\log^{2-\epsilon}{\msize}\right)$
time and both $\mlb$ and $\mcolmin$ in $O(1)$ time.
Any sequence of $\marow$ and\linebreak
$\meb$ operations can be performed
in $O\left(\msize\left(\frac{\log{\msize}}{\log{\log{\msize}}}\right)^2\right)$ time.
All the column
minima are computed after $O(\msize\log^{\epsilon}{\msize})$ calls
to %\linebreak
$\meb$.

\end{lemma}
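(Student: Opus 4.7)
The plan is to build the data structure by combining the partitioning result of Lemma~\ref{l:part} with the rectangular block data structure of Lemma~\ref{l:block}. First I would apply Lemma~\ref{l:part} (with parameter $\epsilon/2$) to split $\mmat$ into $O(\msize)$ rectangular sub-Monge matrices $\mmat_1,\ldots,\mmat_q$ in which every row appears in $O(\log\msize/\log\log\msize)$ pieces and every column in $O(\log^{1+\epsilon/2}\msize/\log\log\msize)$ pieces. For each piece $\mmat_i$, I would instantiate the block data structure from Lemma~\ref{l:block} using block size $\Delta=\lceil\log^{1-\epsilon/2}\msize\rceil$; the short subrow-minima queries required by that structure are supplied by Lemma~\ref{l:short}, built once on $\mmat_0$. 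Finally I would maintain one global Fibonacci heap $H$ with two kinds of elements: one per remaining column $\mcol\in\mcurcols$ (keyed by the running min of $\mmat_i.\mcolmin(\mcol)$ over pieces $\mmat_i\in W^\mcol$) and one per piece $\mmat_i$ with nonempty $\bpart(\mmat_i)$ (keyed by $\mmat_i.\mblb()$). Invariants~\ref{i:col}~and~\ref{i:block} as stated before Lemma~\ref{l:hlb} are exactly what makes this consistent.

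For the operations, $\minit$ just chains together the initializations of the short-query structure and of each $\mmat_i$, plus an $O(\msize)$ insertion phase into $H$. Both $\mcolmin$ and $\mlb$ reduce to a single $O(1)$ lookup: $\mcolmin$ returns the key stored with $\mcol$, while $\mlb$ returns $H.\pqminkey()$, which is a valid lower bound by (the proof of) Lemma~\ref{l:hlb}. For $\marow(\mrow)$, I push the activation into every piece in $W_\mrow$; by Lemma~\ref{l:block} each such $\mmat_i$ exposes the affected columns via $\mmat_i.\mqueue$, and each change is propagated to $H$ using $O(1)$ $\pqdec$ calls. For $\meb$, I peek at the top of $H$: if it is a column $\mcol$, a short chain of inequalities (the minimum in $\mcol$ cannot be smaller than $H.\pqminkey$) shows that the current key is exactly $\min\{\mmat(\mrows,\{\mcol\})\}$, so I can extract it, remove $\mcol$ from $\mcurcols$, and return it; if it is a piece $\mmat_i$, I call $\mmat_i.\mbeb()$ and reinsert $\mmat_i$ into $H$ with the refreshed $\mblb$, returning $\nil$.

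The cost accounting is then routine but needs to be carefully summed across pieces. Initialization costs sum to $O(\msize\log^{2-\epsilon/2}\msize)=O(\msize\log^{2-\epsilon}\msize)$ (absorbing the Lemma~\ref{l:short} bound, the per-piece $\minit$, and the heap setup). The aggregate running time of all $\mmat_i.\marow$ and $\mmat_i.\mbeb$ calls is, by Lemma~\ref{l:block},
\[
\sum_i O\!\left(\mrowcnt_i\!\left(\Delta+\tfrac{\log\msize}{\log\log\msize}\right)+\mcolcnt_i+\tfrac{\mcolcnt_i}{\Delta}\log\msize\right),
\]
which using $\sum_i \mrowcnt_i = O(\msize\log\msize/\log\log\msize)$ and $\sum_i \mcolcnt_i = O(\msize\log^{1+\epsilon/2}\msize/\log\log\msize)$ collapses to $O(\msize(\log\msize/\log\log\msize)^2)$. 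Piggybacked heap updates are charged to these calls and cost $O(1)$ each; the only uncharged heap operations are $\pqext$'s during $\meb$, and there are at most $\msize$ column-top events plus $\sum_i O(\mcolcnt_i/\Delta)=O(\msize\log^{\epsilon}\msize)$ piece-top events, giving $O(\msize\log^{1+\epsilon}\msize)$ extra work, which is dominated.

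The crux of the argument, and the step I expect to need the most care, is the bound on the number of $\meb$ calls that return $\nil$. Each such call must correspond to a piece sitting at the top of $H$, which can happen at most $O(\mcolcnt_i/\Delta)$ times for piece $\mmat_i$ (once per shrinkage of $\mbnotyet_i$). Summing gives $O(\msize\log^{\epsilon/2}\msize/\Delta\cdot\log^{1+\epsilon/2}\msize)=O(\msize\log^\epsilon\msize)$ after the two $\epsilon$'s absorb, which is where the precise choice $\Delta=\lceil\log^{1-\epsilon/2}\msize\rceil$ together with the biased partition pays off. Verifying that the $\mlb$/$\meb$ preconditions are always honored at the inner level (the piece-local precondition $\min\{\mmat_i(\mrows_i\setminus\mcurrows,\mcols_i)\}\geq\mmat_i.\mblb()$ follows from $\mmat_i$ being a subrectangle of $\mmat$ together with the outer precondition on $\meb$) closes the proof.
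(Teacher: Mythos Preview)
Your proposal is correct and follows essentially the same construction as the paper: the same biased partition via Lemma~\ref{l:part} with parameter $\epsilon/2$, the same block size $\Delta=\lceil\log^{1-\epsilon/2}\msize\rceil$ backed by Lemma~\ref{l:short}, the same two-kind priority queue with invariants~\ref{i:col}/\ref{i:block} and Lemma~\ref{l:hlb}, and the same case split in $\meb$. Two small arithmetic remarks: the equality $O(\msize\log^{2-\epsilon/2}\msize)=O(\msize\log^{2-\epsilon}\msize)$ goes the wrong way as written (though this mirrors the paper's own bookkeeping and is immaterial), and your intermediate expression for the $\nil$-return count is garbled---the clean computation is $\sum_i \mcolcnt_i/\Delta = O\!\big(\msize\log^{1+\epsilon/2}\msize/(\Delta\log\log\msize)\big)=O(\msize\log^{\epsilon}\msize)$.
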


\begin{remark}\label{r:flipped_report}
Lemma~\ref{l:staircase_report} also holds for flipped staircase matrices.
\end{remark}
\begin{proof}
A flipped staircase matrix $\mmat'$ can be seen as a staircase matrix $\mmat$
with both the rows and columns reversed.
Each subrow minimum query on $\mmat'$ translates easily into a single
subrow minimum query on~$\mmat$.
\end{proof}

\section{Single-Source Shortest Paths in Dense Distance Graphs}\label{s:dijkstra}
In this section we study the possibly most general instance
of the problem of computing single-source shortest paths in dense
distance graphs, that fits all the most important applications.
The overall structure of our algorithm resembles Dijkstra's algorithm
and is also similar to both \cite{Fak:2006} and \cite{Mozes:2014}.
Nevertheless, we give a complete implementation and analysis.

Let $G=(V,E)$ be a weighted planar digraph and let
$G_1,\ldots,G_g$ be some partition of $G$ into \emph{connected} regions with few holes.
Denote by $X_{i,j}$ the vertices of $\bnd{G_i}$ lying
on the $j$-th hole of $G_i$, in clockwise order.
Also assume that for each $u,v\in\bnd{G_i}$ there exists
a path in $G_i$ -- each region could be easily extended
with bidirectional copies of edges of $G_i$ with
some very large length so that we can tell if the path
actually exists by only looking at the weight of the shortest path.

The graph $G$, the partition $G_1,\ldots,G_g$
and the dense distance graphs $\ddg(G_i)$ constitute
the ,,fixed input'' of our problem.
We are allowed
to preprocess each $\ddg(G_i)$ once in time asymptotically
no more than the time used for construction of $\ddg(G_i)$.
To the best of our knowledge, in all known applications
this time is no less than $O((|V(G_i)|+|\bnd{G_i}|^2)\log{|V(G_i)|})$, which is the running time of the method
described in Section~\ref{s:prelim}.
Denote by $\ddg(G_i)[x,y]$ the length of the shortest path
$x\to y$ in~$G_i$.

Now, let $P$ be some set of ``outer'' directed edges with both
endpoints in $\bnd{G}$,
not necessarily contained in~$E$ and
not necessarily preserving the planarity of $G$.
Denote by $\ell(e)\in\mathbb{R}$ the length of edge $e\in P$.
Also, let $\phi$ be a function $\bnd{G}\to \mathbb{R}$, called a \emph{price function}.
We define reduced lengths with respect to $\phi$ for both edges of $P$ and
distances in dense distance graphs.
\begin{itemize}
  \item for $(x,y)\in P$ let $\ell^\phi((x,y))=\ell((x,y))+\phi(x)-\phi(y)$,
  \item for $(x,y)\in \bnd{G_i}\times \bnd{G_i}$, let $\ddg(G_i)^\phi[x,y]=\ddg(G_i)[x,y]+\phi(x)-\phi(y)$.
\end{itemize}
A price function is called \emph{feasible}, if all the reduced lengths
are non-negative.
In all the relevant applications we also assume that for each $G_i$ we are
given (as part of the ,,fixed'' input) a feasible price
function~$\phi_i^0$.
\begin{remark}
In the flow-related applications (\cite{Borradaile:2011, Lacki:2012}),
graphs $\ddg(G_i)$ typically contain non-negative lengths, and hence $\phi_i^0\equiv 0$.
The distance oracles (\cite{Kaplan:2012, Mozes:2012})
typically handle negative edges during their initialization.
Actually, any algorithm following the original recursive
approach of \fak and Rao \cite{Fak:2006}
to construct dense distance graphs (and simultaneously compute single-source
shortest paths in the case of real edge lengths)
can be extended to find the feasible $\phi_i^0$ for each $G_i$.
\end{remark}

Given the set $P$,
the function $\ell$, a feasible $\phi$ and a vertex $s\in \bnd{G}$, our goal is to
design an efficient subroutine
computing the lengths
of the shortest paths from $s$ to all vertices of $\bnd{G}$
in graph $(\bnd{G},P)\cup \ddg(G)$,
assuming edge-lengths reduced by $\phi$.
As the reduction of lengths does not change the shortest paths,
we can follow the general approach of Dijkstra's algorithm,
which assumes non-negative edge lengths.
However, our subroutine has to be robust enough to
not preprocess the \emph{entire} graph $\ddg(G)$ each time the subroutine is called
with different parameters $\phi$ and $P$.
We call this problem the \emph{single-source shortest paths in a dense distance graph}
problem.

Such a subroutine has been used e.g. in an $O(n\log^3{n})$ time algorithm of 
Borradaile et al. \cite{Borradaile:2011}, computing
the multiple-source multiple-sink maximum flow in a directed planar
graphs.
Their subroutine extends FR-Dijkstra \cite{Fak:2006} to work with reduced lengths
for the case of a single-region graph $G$ with distinguished boundary
vertices $\bnd{G}$ lying on a single face of $G$ and a set of ``outer'' edges $P$.
The computation of the lengths of shortest paths from
$s\in \bnd{G}$ to all vertices of $\bnd{G}$
takes $O(|\bnd{G}|\log^2{|\bnd{G}|}+|P|\log{|\bnd{G}|})$ time and is a bottleneck of their algorithm.
In this section we propose a more efficient implementation.

\subsection{The Algorithm}
\begin{lemma}\label{l:decomp}
  Suppose a feasible price function $\phi_i^0$ for $\ddg(G_i)$ is given.
  The graph $\ddg(G_i)$ can be decomposed into a set of $O(1)$ (flipped) staircase
  Monge matrices $D_i$ of at most $|\bnd{G_i}|$ rows and columns.
  For each $u,v\in \bnd{G_i}$ we have:
  \begin{itemize}
  \item for each $\mmat\in D_i$ such that
  $\mmat_{u,v}$ is defined, $\mmat_{u,v}\geq \ddg(G_i)[u,v]$.
  \item there exists $\mmat\in D_i$ such that
  $\mmat_{u,v}$ is defined and $\mmat_{u,v}=\ddg(G_i)[u,v]$.
  \end{itemize}
  The decomposition can be computed in $O((|V(G_i)|+|\bnd{G_i}|^2)\log{|V(G_i)|})$ time.
\end{lemma}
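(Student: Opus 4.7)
The plan is to construct the decomposition by exploiting that $G_i$ has only $O(1)$ holes and invoking the classical Monge property for shortest paths in planar graphs whose endpoints lie on faces.

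First, I would compute all pairwise distances $\ddg(G_i)[u,v]$ for $u,v\in\bnd{G_i}$. The feasible price function $\phi_i^0$ is used to reduce the edge lengths of $G_i$ so that they become non-negative, enabling the use of Klein's multiple-source shortest path data structure~\cite{Klein:2005}. Applied once per hole of $G_i$ (there are only $O(1)$ holes, and $\bnd{G_i}$ lies on their boundaries), this yields all required distances in $O((|V(G_i)|+|\bnd{G_i}|^2)\log|V(G_i)|)$ total time. The reduction is then undone to recover the actual values $\ddg(G_i)[u,v]$, which is the time bound claimed in the lemma.

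Next, I would assemble the matrices of $D_i$ by enumerating all ordered pairs $(h_j,h_k)$ of holes of $G_i$; there are $O(1)$ such pairs. For each pair fix the clockwise cyclic orderings of $X_{i,j}$ and $X_{i,k}$ on their respective holes. The key structural ingredient, going back to \fak and Rao~\cite{Fak:2006}, is the \emph{planar Monge property}: whenever $u_1,u_2\in X_{i,j}$ and $v_1,v_2\in X_{i,k}$ are positioned so that the shortest paths $u_1\to v_1$ and $u_2\to v_2$ in $G_i$ must intersect in the plane (which happens precisely for the orientations dictated by planarity when the endpoints lie on holes), one has
$$\ddg(G_i)[u_1,v_2]+\ddg(G_i)[u_2,v_1]\le\ddg(G_i)[u_1,v_1]+\ddg(G_i)[u_2,v_2].$$
After choosing a suitable linearization of each hole's cyclic order, this inequality holds throughout a triangular ``staircase'' portion of the $|X_{i,j}|\times|X_{i,k}|$ distance matrix. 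For the same-hole case $j=k$, two complementary matrices---one staircase and one flipped staircase---suffice to cover both orientations of traversal around the hole. For the cross-hole case $j\neq k$, the rectangular block of cross-distances is embedded as the upper-right part of a staircase matrix of size $|X_{i,j}|+|X_{i,k}|\le|\bnd{G_i}|$, with the two triangular corners filled by the intra-hole distance matrices already built for $h_j$ and $h_k$; for a compatible choice of orientations the resulting larger matrix remains Monge.

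The main obstacle is precisely this careful choice of cyclic orientations on each hole so that the Monge inequality holds \emph{uniformly} across all the triangular pieces of each resulting staircase matrix---in particular, so that the intra-hole matrices used to fill the corners of a cross-hole staircase are compatible with the cross-hole block along the diagonal boundary between them. Once this case analysis is carried out, both conditions of the lemma follow immediately: every defined entry $\mmat_{u,v}$ of a matrix in $D_i$ is a genuine $u\to v$ shortest-path length in $G_i$, hence $\mmat_{u,v}\ge\ddg(G_i)[u,v]$ (in fact with equality whenever defined); and for every $(u,v)\in\bnd{G_i}^2$ the pair of holes incident to $u$ and $v$ together with the orientation of the chosen $u\to v$ shortest path determines one of the $O(1)$ matrices in which $\mmat_{u,v}=\ddg(G_i)[u,v]$ is defined.
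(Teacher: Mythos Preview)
Your same-hole construction matches the paper's. The gap is in the cross-hole case $j\neq k$: you assert that the rectangular block of actual distances $\ddg(G_i)[u,v]$ with $u\in X_{i,j}$, $v\in X_{i,k}$ can be made Monge by a ``compatible choice of orientations,'' and then embedded into one larger staircase. This is not true in general. When the endpoints lie on two \emph{distinct} faces, the planar crossing argument fails: for $u_1<u_2$ on hole $j$ and $v_1<v_2$ on hole $k$, a shortest $u_1\to v_1$ path and a shortest $u_2\to v_2$ path need not intersect, because one can pass the second hole on the left and the other on the right. No single linearization of the two cyclic orders fixes this.

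The paper handles this by invoking the construction of Mozes and Wulff-Nilsen, which produces \emph{two} rectangular Monge matrices $\mmat^{j,k,L}$ and $\mmat^{j,k,R}$ whose entries are lengths of restricted (left-going / right-going) paths, hence only upper bounds on $\ddg(G_i)[u,v]$, with $\ddg(G_i)[u,v]=\min(\mmat^{j,k,L}_{u,v},\mmat^{j,k,R}_{u,v})$. This is exactly why the lemma is stated with the two bullets (entries are $\geq$ the distance, and the minimum over the family attains it); your remark ``in fact with equality whenever defined'' shows you are not using this mechanism. Each of these rectangular Monge matrices is then padded to square and split into a staircase and a flipped staircase, giving $2h+4h(h-1)=O(1)$ matrices total. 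Your embedding idea (gluing cross-hole rectangles to intra-hole triangles) is unnecessary once you accept rectangular-to-staircase splitting, and it cannot rescue the argument without first obtaining Monge cross-hole matrices.
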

\begin{proof}

  Let $h$ be the number of holes of $G_i$.
  %Let us also forget for a moment about the price function $\phi$.
  We describe the set of (flipped) staircase Monge matrices $D_i$.

  First, for the $j$-th hole we add to $D_i$ a staircase
  matrix $\mmat^{j+}$ and a flipped staircase matrix~$\mmat^{j-}$ with
  rows $X_{i,j}$ and columns $X_{i,j}$.
  The order imposed on the rows and columns is the clockwise order
  on the $j$-th hole.
  For $u,v\in X_{i,j}$, $u\leq v$, we set $\mmat^{j+}_{u,v}=\ddg(G_i)^\phi[u,v]$.
  For $u\geq v$, we set $\mmat^{j-}_{u,v}=\ddg(G_i)^\phi[u,v]$.
  The matrices $\mmat^{j+}$ and $\mmat^{j-}$ represent the distances
  between the nodes of a single hole.
  We now prove that both $\mmat^{j+}$ and $\mmat^{j-}$ are Monge.
  Let $v,x,y,z$ be some nodes of $X_{i,j}$ in clockwise order.

  Assume $\mmat^{j+}_{v,y}+\mmat^{j+}_{x,z}<\mmat^{j+}_{v,z}+\mmat^{j+}_{x,y}$, or,
  equivalently, $\ddg(G_i)[v,y]+\ddg(G_i)[x,z]<\ddg(G_i)[v,z]+\ddg(G_i)[x,y]$.
Recall that the matrix $\ddg(G_i)$ represents distances between
all pairs of vertices of $\bnd{G_i}$ in $G_i$.
As the vertices of $X_{i,j}$ lie on a single face of a planar graph $G_i$,
any path $v\to y$ in $G_i$ has to cross each path $x\to z$
in $G_i$.
Specifically, the shortest path $v\xrightarrow{p_1}y$ and the shortest path
$x\xrightarrow{p_2}z$ have some common vertex $u\in G_i$.
Thus, the total length of paths $v\xrightarrow{p_1}u\xrightarrow{p_1} y$ and
$x\xrightarrow{p_2} u\xrightarrow{p_2} z$
is $\ddg(G_i)[v,y]+\ddg(G_i)[x,z]$.
But the paths $v\xrightarrow{p_1}u\xrightarrow{p_2} z$ and
$x\xrightarrow{p_2} u\xrightarrow{p_1} y$ also have the
same total length and that length cannot be less than $\ddg(G_i)[v,z]+\ddg(G_i)[x,y]$.
This contradicts
$\ddg(G_i)[v,y]+\ddg(G_i)[x,z]<\ddg(G_i)[v,z]+\ddg(G_i)[x,y]$
and thus
proves that $\mmat^{j+}$ is a staircase Monge matrix.
The proof that $\mmat^{j-}$ is a flipped staircase Monge matrix is analogous.
Both $\mmat^{j+}$ and $\mmat^{j-}$ are computed in $O(|\bnd{G_i}|^2)$ time.

Now we describe the matrices of $D_i$ representing the distances between
nodes lying on the $j$-th hole and nodes on the $k$-th hole (for $j\neq k$).
Mozes and Wulff-Nilsen (\cite{Mozes:2010}, Section~4.4) showed that
given the initial feasible price function $\phi_i^0$, in
$O((|V(G_i)|+|\bnd{G_i}|^2)\log{|V(G_i)|})$ time one can compute
two rectangular Monge matrices $\mmat^{j,k,\text{L}}$ and
$\mmat^{j,k,\text{R}}$, with rows $X_{i,j}$ and columns $X_{i,k}$,
such that for each $u\in X_{i,j}$
and $v\in X_{i,k}$ we have
$\ddg(G_i)[u,v]=\min(\mmat^{j,k,\text{L}}_{u,v},\mmat^{j,k,\text{R}}_{u,v})$
(for more details about this construction, see also \cite{Kaplan:2012}, Section~5.3).
%As discussed previously, in all applications, this additional time overhead
%is no more than the time needed to construct $\ddg(G_i)$.
Each square Monge matrix can be easily decomposed into a 
staircase Monge matrix and a flipped staircase Monge matrix.
A rectangular Monge matrix can be padded with either some number
of copies of the last row or some number of copies of the last column
in order to make it square.
Thus, for each pair $(k,l)$, $k\neq l$, we add to $D_i$ four (flipped)
staircase Monge matrices.
In total, the set $D_i$ has $2h+4h(h-1)=O(1)$ staircase Monge
matrices, each of size no more than $|\bnd{G_i}|\times|\bnd{G_i}|$.
\end{proof}

We use Lemma~\ref{l:decomp} to decompose $\ddg(G_i)$ into staircase Monge matrices.
Now, given a price function $\phi$, for each region $G_i$ we define
the set of matrices $D_i^\phi$.
For each $\mmat\in D_i$ we include
in $D_i^\phi$ a matrix $\mmat^\phi$ with the same rows and columns
as $\mmat$, such that $\mmat^\phi_{u,v}=\mmat_{u,v}+\phi(u)-\phi(v)$.
It is easy to notice that the terms $\phi(*)$ do not influence
the Monge property.
We stress that the set $D_i^\phi$ is not computed explicitly
and the preprocessing of $\ddg(G_i)$ happens only
once, immediately after $\ddg(G_i)$ is created.
Each entry of some matrix of $D_i^\phi$ can be obtained
from the entry of the corresponding matrix in $D_i$ in constant time.

We now show how Dijkstra's algorithm can be simulated on $(\bnd{G},P)\cup\bigcup \ddg(G_i)^\phi$ with
reduced lengths in order to compute lengths of shortest
paths from the source vertex $s\in \bnd{G}$ to $\bnd{G}$,
using the data structure developed in Section~\ref{s:stair_min}.

\newcommand{\dd}{\mathcal{D}}
Recall that Dijkstra's algorithm run from the source $s$ in
graph $G=(V,E)$ grows a set $S$ of \emph{visited}
vertices of $V$ such that the lengths $d(v)$ of the shortest paths $s\to v$
for $v\in S$ are already known.
Initially $S=\{s\}$ and we repeatedly choose a vertex $y\in V\setminus S$
such that the value $z(y)=\min_{x\in S}\{d(x)+\ell(x,y):(x,y)\in E\}$
is the smallest.
$y$ is then added to $S$ with $d(y)=z(y)$.
The vertices $y\in V\setminus S$ are typically stored
in a priority queue with keys $z(y)$,
which allows to choose the best $y$ efficiently.

Our implementation (see Algorithm~\ref{alg:frd}) also maintains a growing subset $S\subseteq \bnd{G}$ of visited vertices
and the values $\off(x)$ for $x\in S$.
We build a data structure of Lemma~\ref{l:staircase_report}, for each matrix
in $\bigcup_{i=1}^g \dd_i^\phi$.
$\dd_i^\phi$ is an ,,offset'' version of $D_i^\phi$.
For each matrix $\mmat\in D_i^\phi$, there is a corresponding
offset matrix $\mmat'=\om(\mmat,\off)$ in $\dd_i^\phi$ with the same rows and columns.
The row $u$ of $\mmat\in \dd_i^\phi$ is activated
(see Section~\ref{s:stair_min})
immediately once $u$ is added to $S$.
By Fact~\ref{f:offset}, the matrices of $\dd_i^\phi$ are Monge matrices.
The matrices~$\dd_i^\phi$ are never stored explicitly.
Each entry is computed from the corresponding
entry in $D_i$, the price function $\phi$ and the array $\off$
in $O(1)$ time every time it is accessed.

\newcommand{\hkey}{key}

We have a priority queue $H$ storing an element $x$ for each $x\in \bnd{G}\setminus S$.
Denote by $\hkey(e)$ the key of an element $e\in H$.
Let $W_{\mrow}$ ($W^\mcol$ respectively) be the set of all matrices of $\bigcup \dd_i^\phi$
with row $\mrow$ (column $\mcol$).
For some data structure $\mmat$ of Lemma~\ref{l:staircase_report}, denote
by $\mcols^*(\mmat)$ the set of columns of $\mmat$, for which
the minima have been already reported.

In our algorithm, we cannot afford to set $\hkey(y)$
for each $y\in \bnd{G}\setminus S$ to
$$z(y)=\min_{x\in S}\left\{\min\{d(x)+\ell^\phi(x,y) :(x,y)\in P\},\
\min\{\mmat_{x,y} : \mmat\in W_x\cap W^y\}\right\},$$
as would Dijkstra's algorithm do.
Instead, for $y\in \bnd{G}\setminus S$, $\hkey(y)$ satisfies
$$\hkey(y)=\min_{x\in S}\left\{\min\{d(x)+\ell^\phi(x,y):(x,y)\in P\},\
    \min\{\mmat_{x,y} : \mmat\in W_x, y\in\mcols^*(\mmat)\}\right\},$$
    We also add $O(g)$ special elements $\{\mmat:\mmat\in \bigcup_{i=1}^g\dd_i^\phi\}$
to our priority queue $H$.
At all times we have $\hkey(\mmat)=\mmat.\mlb()$.
We also ensure that for each $x\in S$, in every $\mmat\in W_x$,
row $x$ is activated.

The above invariants imply that for $y\in \bnd{G}\setminus S$ we have
\begin{gather}\label{e:zbound}
  z(y)\geq \min(\hkey(y), \min\{\mmat.\mlb() :\mmat\in W^y\}).
\end{gather}
Indeed, for each $x\in S$ and $\mmat\in W_x$ such that $y\notin C^*(\mmat)$,
by the definition of %\linebreak
$\mmat.\mlb$, we have 
$\min\{\mmat_{x,y} : y\notin C^*(\mmat)\}\geq \mmat.\mlb()$.
Also the definition of $\hkey(y)$ implies that if $\hkey(y)\neq\infty$
then $\hkey(y)$ is the length of some $s\to y$ path.

\begin{algorithm}[htb!]
\begin{algorithmic}[1]
\Function{$\textsc{Dijkstra}$}{$P, \ell, \phi, s$}
\State Initialize the data structures of Lemma~\ref{l:staircase_report}
for each $\mmat\in \bigcup_i\dd_i^\phi$.
\State $H := $ empty priority queue
\State $S := \emptyset$
\Procedure{$\textsc{Visit}$}{$x, val$}
  \State $S := S\cup \{x\}$
  \State $d(x) := val$
  \For{$\mmat\in W_x$}
    \State $\mmat.\marow(x)$
    \State $H.\pqdec(\mmat,\mmat.\mlb())$
  \EndFor
  \For{$(x,y)\in P$}
    \If{$y\notin S$}
    \State $H.\pqdec(y,d(x)+\ell^\phi(x,y))$\label{alg:relax}
    \EndIf
  \EndFor
\EndProcedure
\For{$x\in \bnd{G}\setminus\{s\}$}
  \State $d(x)=\infty$
  \State $H.\pqins(x,\infty)$
\EndFor
\For{$\mmat\in\bigcup_i\dd_i^\phi$}
  \State $H.\pqins(\mmat,\infty)$
\EndFor
\State $\textsc{Visit}(s, 0)$
\While{$S\neq \bnd{G}$ \textbf{ and } $H.\pqminkey()\neq\infty$}\label{alg:begwhile}
\State $v := H.\pqminkey()$
\State $Z := H.\pqext()$\label{alg:extract}
\If {$Z$ is a vertex of $\bnd{G}$}
  \State $\textsc{Visit}(Z, v)$  
\Else
  \State $x = Z.\meb()$
  \If{$x\neq\nil$ \textbf{and} $x\notin S$}
    \State $H.\pqdec(x,Z.\mcolmin(x))$
  \EndIf
  \State $H.\pqins(Z,Z.\mlb())$
\EndIf
\EndWhile
\State \textbf{return} $d$
\EndFunction
\end{algorithmic}
\caption{Pseudocode of our single-source shortest paths algorithm.
The function $\textsc{Dijkstra}$ returns a vector~$d$
containing the lengths of the shortest paths from $s$ to all other
vertices of $\bnd{G}$.
We assume that $\ddg(G)$ is preprocessed so that we can access
the matrices of sets $D_1,\ldots,D_g$.}
\label{alg:frd}
\end{algorithm}

One can easily verify that the key invariants are satisfied before the first
iteration of the $\textbf{while}$ loop in line~\ref{alg:begwhile}.

Assume the element that gets extracted from $H$ in
line~\ref{alg:extract} is some vertex $x\in \bnd{G}\setminus S$.
We need to prove that $x$ has the least value $z(x)$ among
all vertices of $\bnd{G}\setminus S$ and that $z(x)=\hkey(x)$.
As the keys of $H$ include all keys $\hkey(y)$ where $y\in \bnd{G}\setminus S$
and all keys $\hkey(\mmat)$ for the $O(g)$ data structures,
for each $y\in\bnd{G}\setminus S$ we have $z(y)\geq \hkey(x)$.
But there actually exists a path $s\to x$ of length $\hkey(x)$,
so $z(y)\geq \hkey(x)= z(x)$ and thus $x$ has the minimal
$z(x)$ among all vertices in $\bnd{G}\setminus S$.
Consequently, $x$ can be safely added to~$S$.
The procedure
$\textsc{Visit}$ is used to update the set $S$, the array $d$
and all the keys of~$H$ affected by inserting $x$ to $S$.

Otherwise, the element extracted from $H$ is some data structure
$Z$.
We try to extract some previously unknown column minimum
of $Z$ with the call $Z.\meb()$.
In order to do this, we need to guarantee that all the rows
of $Z$ that are not active at that point contain
only values not less than $Z.\mlb()$.
Notice that after each extraction of an element with key $v$ from $H$,
we update some other keys in $H$ to values not less than $v$.
Thus, each extracted element has key not less than the
previously extracted elements.
In particular, we know that for each $y\in \bnd{G}\setminus S$,
$d(y)\geq Z.\mlb()$.
For each $\mmat\in W_y$, the values in row $y$ are not less
than $d(y)$ and indeed $x=Z.\meb()$ can be called.
If $x\neq\nil$, a column minimum of $Z$ has been found
and the key of $x$ is updated.
Finally, $Z$ is reinserted into $H$ with the key equal to the
new value of $Z.\mlb()$.

Let us now bound the running time of the function $\textsc{Dijkstra}$.
By Lemma~\ref{l:staircase_report}, the initialization along
with any sequence of operations $\marow$ and $\mbeb$ can be performed on 
$\mmat\in \dd_i^\phi$ in
$O\left(|\bnd{G_i}|\left(\frac{\log{|\bnd{G_i}|}}{\log\log{|\bnd{G_i}|}}\right)^2\right)$
time.

The time spent on extracting elements from $H$ is $O(I\log{|\bnd{G}|})$,
where $I$ is the number of insertions into $H$.
Clearly $H$ never contains more than $O(|\bnd{G}|+g)=O(|\bnd{G}|)$ elements.
Each vertex of $\bnd{G}$ is inserted into $H$ at most once
and, by Lemma~\ref{l:staircase_report}, each data structure $\mmat\in \dd_i^\phi$
is inserted into $H$ at most $O(|\bnd{G_i}|\log^{\epsilon}{|\bnd{G_i}|})$ times
before it reports all the column minima.
Hence, the total time spend on $H.\pqext$ is
$O\left(\log{|\bnd{G}|}\sum_i|\bnd{G_i}|\log^{\epsilon}{|\bnd{G}|}\right)$.
The operation $H.\pqdec$ takes constant time and
thus we can neglect the calls to $H.\pqdec$ immediately
after $\marow$ or $\mblb$.
However, there are also $O(|P|)$ calls to $H.\pqdec$ in
line~\ref{alg:relax}, which cannot be neglected this way.
%The function $\textsc{Dijkstra}$ works in
%$$O\left(\sum_i|\bnd{G_i}|\left(\frac{\log{|\bnd{G_i}|}}{\log\log{|\bnd{G_i}|}}\right)^2+\
%|P|+\sum_i|\bnd{G_i}|\log^{1+\epsilon}{|\bnd{G}|}\right)$$
%time, for any $\epsilon\in (0,1)$.
Taking into account the preprocessing of Lemma~\ref{l:decomp}, we conclude
with the following theorem.
\begin{theorem}\label{t:dijkstra}
  Let $\epsilon\in (0,1)$. After preprocessing $\ddg(G)$ in
  $O\left(\sum_i(|V(G_i)|+|\bnd{G_i}|^2)\log{|V(G_i)|}\right)$ time, 
  an instance of the single-source shortest paths problem on $\ddg(G)$ can be solved in \linebreak
  $O\left(\sum_i|\bnd{G_i}|\left(\frac{\log^2{|\bnd{G_i}|}}{\log^2\log{|\bnd{G_i}|}}+\
    \log^{\epsilon}{|\bnd{G_i}|}\log{|\bnd{G}|}\right)+|P|\right)$
time.
\end{theorem}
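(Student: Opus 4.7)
The plan is to verify that Algorithm~\ref{alg:frd} correctly simulates Dijkstra's algorithm on $(\bnd{G},P)\cup \bigcup_i \ddg(G_i)^\phi$ with reduced edge lengths, and then add up the costs using Lemma~\ref{l:decomp} for preprocessing and Lemma~\ref{l:staircase_report} for the column-minima data structures. The preprocessing piece is immediate: I apply Lemma~\ref{l:decomp} to each $G_i$, obtaining $O(1)$ (flipped) staircase Monge matrices per region, which is in the claimed preprocessing budget. The sets $\dd_i^\phi$ are never materialized; each entry is computed on demand from $D_i$, $\phi$, and $\off$ in $O(1)$.

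For correctness, I would maintain the following invariants across iterations of the \textbf{while} loop at line~\ref{alg:begwhile}: (i) for every $x\in S$, $d(x)$ equals the true reduced-length distance from $s$ to $x$; (ii) for every $y\in \bnd{G}\setminus S$, $\hkey(y)$ equals the best $s\to y$ path length visible via either edges of $P$ or DDG entries that are currently exposed (i.e.\ via pairs $(x,y)$ with $\mmat\in W_x$ and $y\in\mcols^*(\mmat)$); (iii) $\hkey(\mmat)=\mmat.\mlb()$ and every $x\in S$ has been $\marow$-activated in each $\mmat\in W_x$; and (iv) the sequence of keys popped in line~\ref{alg:extract} is non-decreasing. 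Invariant~(iv) is the lynchpin of the entire argument. Combined with inequality~\eqref{e:zbound} from the text, it justifies both branches of the loop body: when the top of $H$ is a vertex $x$, any $y\in \bnd{G}\setminus S$ satisfies $z(y)\geq \min(\hkey(y),\min_{\mmat\in W^y}\mmat.\mlb())\geq \hkey(x)$, while $\hkey(x)$ is realized by an actual $s\to x$ path, so $x$ is the correct vertex to visit; when the top is a data structure $Z$, invariant~(iv) guarantees that every unvisited row $y$ of $Z$ satisfies $d(y)\geq Z.\mlb()$, meeting the precondition of $Z.\meb$.

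The running-time accounting is then a straightforward sum. Lemma~\ref{l:staircase_report} bounds the total work inside each $\mmat\in\dd_i^\phi$ by $O(|\bnd{G_i}|(\log|\bnd{G_i}|/\log\log|\bnd{G_i}|)^2)$, and since there are $O(1)$ such matrices per region, the data-structure work sums to the first term of the target bound. The priority queue $H$ holds $O(|\bnd{G}|+g)=O(|\bnd{G}|)$ elements at any time; each vertex is $\pqins$-ed once, and by Lemma~\ref{l:staircase_report} each $\mmat\in\dd_i^\phi$ is re-inserted at most $O(|\bnd{G_i}|\log^\epsilon|\bnd{G_i}|)$ times before all its column minima are reported, so the total cost of $\pqext$ operations is $O\left(\sum_i|\bnd{G_i}|\log^\epsilon|\bnd{G_i}|\log|\bnd{G}|\right)$. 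All $\pqdec$ calls triggered by $\marow$ and $\meb$ (one per updated column reported via $\mmat_i.\mqueue$, plus one per $\mmat.\mlb$ change) run in $O(1)$ each and are charged to the corresponding data-structure operation; the remaining $\pqdec$ calls in line~\ref{alg:relax} total $O(|P|)$. Adding everything yields the bound in the theorem.

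The main obstacle is establishing invariant~(iv), the monotonicity of extracted keys, since it is the only place where the non-negativity of reduced lengths is really used. This needs a case analysis over all operations that can alter a key in $H$: (a) $\pqdec$ in line~\ref{alg:relax} sets $\hkey(y)\gets d(x)+\ell^\phi(x,y)$, which is at least $d(x)$ by feasibility of $\phi$, hence at least the last extracted key by the inductive hypothesis; (b) inside $\textsc{Visit}$, $\marow$ on row $x$ with $d(x)$ equal to the last extracted key can only raise $\mmat.\mlb()$ above $d(x)$, since Lemma~\ref{l:staircase_report} guarantees $\mmat.\mlb()$ is a valid lower bound on $\mmat(\mcurrows,\mcurcols)$; (c) after $Z.\meb$, the refreshed $Z.\mlb()$ is again at least the key with which $Z$ was just popped, by the monotone behavior promised by Lemma~\ref{l:staircase_report}; and (d) reveals of new column minima via $Z.\mcolmin$ are exactly entries $Z_{x,y}\geq Z.\mlb()$, so the induced $\pqdec$ on $y$ respects monotonicity. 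Once invariant~(iv) is in place, the rest of the proof is bookkeeping.
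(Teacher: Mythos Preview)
Your proposal mirrors the paper's argument: same preprocessing via Lemma~\ref{l:decomp}, same use of the data structure of Lemma~\ref{l:staircase_report} on each staircase piece, the same invariants (the paper states them less formally but relies on exactly~\eqref{e:zbound} and the monotonicity of extracted keys), and the same cost accounting for the outer heap and for the $O(|P|)$ relaxations.

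One caveat on your case analysis for invariant~(iv): in (b) and (c) you invoke Lemma~\ref{l:staircase_report} for monotonicity of $\mlb$, but that lemma only guarantees $\mlb()\le\min\{\mmat(\mcurrows,\mcurcols)\}$, which by itself does not yield $\mlb()\ge v$ after $\marow$ or $\meb$; a lower bound on a minimum can in principle jump down. The paper is equally terse here---it simply asserts that ``after each extraction of an element with key $v$ from $H$, we update some other keys in $H$ to values not less than $v$''---so you are not missing anything the paper supplies, but your stated justification does not quite close the gap. To make it airtight one has to peek at the implementation in Sections~\ref{s:full_min}--\ref{s:stair_min}: every key stored in any internal priority queue is either $\infty$ or an actual entry $\mmat_{\mrow,\mcol}$ with $\mrow$ already active. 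Hence after $\marow(x)$ the only values to which internal keys can be decreased are row-$x$ entries, each $\ge d(x)=v$ by feasibility of~$\phi$, while all previously present keys were already $\ge$ the old $\mlb()=\hkey(\mmat)\ge v$; similarly, $\meb$ only pops the internal minimum (value $v$) and reinserts or decreases to values that are column minima of the just-exposed block, all of which are $\ge v$. With this observation your cases (b)--(d) go through.
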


\begin{remark}
Currently the most efficient known algorithm for computing $\ddg(G_i)$ given a feasible
price function $\phi_i^0$ \cite{Klein:2005} runs in
$O\left((|V(G_i)|+|\bnd{G_i}|^2)\log{|V(G_i)|}\right)$ time,
which is asymptotically the same as the time needed for preprocessing
in Theorem~\ref{t:dijkstra}.
Consequently, we neglect the preprocessing time in Theorem~\ref{t:dijkstra}
when discussing applications.
\end{remark}

\section{Implications}\label{s:impli}
The implications of Theorem~\ref{t:dijkstra} are numerous.
In this section we mention some planar graph problems for which the best
known algorithms compute single-source shortest paths in dense
distance graphs and such step constitutes the main bottleneck
of their running times.

\paragraph{Multiple-Source Multiple-Sink Maximum Flow in Directed Planar Graphs}
In this problem we are given a directed planar graph $G=(V,E)$ with real edge capacities.
Let $n=|V|$.
We are also given two subsets $S,T\subseteq V$, $S\cap T=\emptyset$.
The set $S$ contains \emph{source vertices}, while $T$ contains \emph{sink vertices}.
Our goal is to send as much flow from the vertices $S$ to vertices of $T$
without violating edge capacity constraints and flow conservation
on the vertices $V\setminus S\setminus T$.
Although in general graphs this problem can be reduced to
single-source single-sink maximum flow by adding a super-source and
a super-sink (connected with all sources and all sinks, respectively),
such a reduction does not
work for planar graphs as it does not preserve planarity.
Note that the problem of computing maximum matching in a bipartite
planar graph can be reduced to a single-source single-sink maximum
flow instance.

Borradaile et al. \cite{Borradaile:2011} found an $O(n\log^3{n})$
algorithm for this problem.
Their algorithm recursively partitions $G$ in a balanced way with cycle
separators of size $C=O(\sqrt{n})$. The results of
recursive calls are combined using $O(C)$
computations of single-source shortest paths
in a single-region dense distance graph with boundary size
$O(C)$ and $O(C)$ additional edges.
The implementation they use runs in $O(C\log^2{C})$ time.
If we replace it with the implementation of Theorem~\ref{t:dijkstra},
we obtain the following corollary.

\begin{corollary}
The multiple-source multiple-sink maximum flow and the maximum bipartite matching
in a planar graph can
be computed in $O\left(n\frac{\log^3{n}}{\log^2\log{n}}\right)$ time.
\end{corollary}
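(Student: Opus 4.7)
The plan is to plug Theorem~\ref{t:dijkstra} into the recursive algorithm of Borradaile et al. \cite{Borradaile:2011} in place of their $O(C\log^2 C)$-time single-source shortest path subroutine. First I would recall the structure of their algorithm: a planar subinstance of size $n'$ is split by a balanced cycle separator of size $C=O(\sqrt{n'})$, and the two recursive results are combined via $O(C)$ single-source shortest path computations on a single-region dense distance graph whose boundary has size $O(C)$, augmented with a set $P$ of $O(C)$ ``outer'' edges. Everything except these SSSP calls already fits within $O(n'\log n')$ per recursion level in their analysis, so it suffices to speed up the SSSP calls.

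Next I would apply Theorem~\ref{t:dijkstra} to each such SSSP call, instantiated with $g=1$, $|\bnd{G_1}|=|\bnd{G}|=C$, and $|P|=O(C)$. This yields a per-call cost of
$$O\!\left(C\,\frac{\log^2 C}{\log^2\log C}+C\log^{1+\epsilon} C\right).$$
Choosing $\epsilon\in(0,1)$ small enough we have $\log^{1-\epsilon} C=\omega(\log^2\log C)$ for large $C$, so the first term dominates and each SSSP call costs $O\!\left(C\,\tfrac{\log^2 C}{\log^2\log C}\right)$. Multiplying by the $O(C)$ combines, one subproblem of size $n'$ is handled in
$$O\!\left(C^2\,\frac{\log^2 C}{\log^2\log C}\right)=O\!\left(n'\,\frac{\log^2 n'}{\log^2\log n'}\right)$$
time for the combine phase.

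Finally I would sum over the recursion tree. Because the separators are balanced, the recursion has depth $O(\log n)$ and at any fixed depth the total size of subproblems is $O(n)$. Consequently one level contributes $O\!\left(n\,\tfrac{\log^2 n}{\log^2\log n}\right)$ for the combine work (plus $O(n\log n)$ for all auxiliary work already present in \cite{Borradaile:2011}), and summing across $O(\log n)$ levels gives the claimed $O\!\left(n\,\tfrac{\log^3 n}{\log^2\log n}\right)$ total running time. The maximum bipartite matching bound then follows from the standard planarity-preserving reduction to multiple-source multiple-sink maximum flow.

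The main obstacle is not conceptual but one of bookkeeping: verifying that the preprocessing cost of Theorem~\ref{t:dijkstra} is absorbed by the DDG construction already paid for in \cite{Borradaile:2011}, and that the feasible price functions $\phi_i^0$ required by the theorem are produced and propagated by their recursion. Both facts follow from the remark after Theorem~\ref{t:dijkstra} and from the fact that \cite{Borradaile:2011} already computes (via the original recursive Fakcharoenphol--Rao scheme) a feasible price function alongside each DDG it constructs, so no new work is introduced.
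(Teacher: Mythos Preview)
The proposal is correct and follows essentially the same approach as the paper: plug Theorem~\ref{t:dijkstra} into the recursive scheme of Borradaile et al.~\cite{Borradaile:2011} in place of their $O(C\log^2 C)$ shortest-path subroutine, and sum across the $O(\log n)$ levels of the balanced recursion. Your write-up is in fact more detailed than the paper's own justification (which is the short paragraph preceding the corollary); one tiny remark is that any $\epsilon\in(0,1)$ suffices, not just ``small enough'', since $\log^{1-\epsilon}C=\omega(\log^2\log C)$ whenever $1-\epsilon>0$.
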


\paragraph{Single-Source All-Sinks Maximum Flow in Directed Planar Graphs}
In this problem we are also given a directed planar graph $G=(V,E)$
with real edge capacities
and some $s\in V$.
Our goal is to compute the values of the maximum flow between $s$
and all vertices $t\in V\setminus\{s\}$.

Łącki et al. \cite{Lacki:2012} gave an $O(n\log^3{n})$-time algorithm
for this problem.
The overall structure of their algorithm is similar to this
of Borradaile et al. \cite{Borradaile:2011} and the bottleneck
on each level of the recursive decomposition is to solve
$O(X)$ instances of a single-source shortest path problem
in a dense distance graph with a total boundary of $O(X)$,
where $X=O(\sqrt{n})$.

\begin{corollary}
The single-source all-sinks maximum flow in planar graphs can
be solved in %\linebreak
$O\left(n\frac{\log^3{n}}{\log^2\log{n}}\right)$ time.
\end{corollary}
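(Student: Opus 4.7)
The plan is to invoke the algorithm of Łącki et al.~\cite{Lacki:2012} as a black box and merely replace its shortest-path subroutine on dense distance graphs with the one provided by Theorem~\ref{t:dijkstra}. As remarked in the text immediately preceding the corollary, the work on each level of their recursive decomposition is dominated by solving $O(X)$ instances of single-source shortest paths on dense distance graphs whose total boundary size is $O(X)$, where $X=O(\sqrt{n})$; outside of these calls, every level performs only $O(n)$ additional work, so the total running time is $O(\log n)$ levels times the cost of the shortest-path calls per level.

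To bound the cost per level, fix any constant $\epsilon\in(0,1)$ and consider a single invocation of Theorem~\ref{t:dijkstra} on a dense distance graph whose regions have boundary sizes $|\bnd{G_i}|=O(X)$ summing to $O(X)$, together with $|P|=O(X)$ extra edges inherited from the decomposition. Since $X=O(\sqrt{n})$ we have $\log X=\Theta(\log n)$ and $\log\log X=\Theta(\log\log n)$, so the per-call bound becomes
\[
O\!\left(\sum_i|\bnd{G_i}|\left(\frac{\log^2{|\bnd{G_i}|}}{\log^2\log{|\bnd{G_i}|}}+\log^{\epsilon}{|\bnd{G_i}|}\log{|\bnd{G}|}\right)+|P|\right)
\;=\;O\!\left(X\,\frac{\log^2 n}{\log^2\log n}\right),
\]
where the first term dominates once $\epsilon$ is chosen sufficiently small (for instance $\epsilon<1$ suffices). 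Multiplying by the $O(X)=O(\sqrt{n})$ calls per level gives $O(n\log^2 n/\log^2\log n)$ per level, and summing over the $O(\log n)$ levels yields the advertised $O(n\log^3 n/\log^2\log n)$ total.

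The only nontrivial point to check is that our subroutine meets the interface assumed by \cite{Lacki:2012}: namely that each region comes equipped with a feasible initial price function $\phi_i^0$ needed by Lemma~\ref{l:decomp}, and that the preprocessing required by Theorem~\ref{t:dijkstra} can be absorbed into the construction of the corresponding dense distance graphs. Both conditions are satisfied, since the algorithm of \cite{Lacki:2012} already builds dense distance graphs (together with suitable feasible price functions) by the standard recursive FR-style procedure, and the preprocessing of Theorem~\ref{t:dijkstra} is within the same $O((|V(G_i)|+|\bnd{G_i}|^2)\log{|V(G_i)|})$ budget as this construction, as noted in the remark following the theorem. Consequently the substitution is legitimate and the corollary follows.
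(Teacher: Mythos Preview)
Your proposal is correct and follows essentially the same approach as the paper, which does not give a formal proof but merely states the corollary after observing that the bottleneck of the algorithm of \cite{Lacki:2012} is the $O(X)$ single-source shortest-path computations per level on dense distance graphs of total boundary $O(X)$. Your write-up just makes the arithmetic explicit; the one small imprecision is the claim that ``every level performs only $O(n)$ additional work'' (the non-SSSP work per level in \cite{Lacki:2012} may be somewhat larger, e.g.\ $O(n\log n)$ for building the dense distance graphs), but this is immaterial since it is still dominated by the improved SSSP cost and hence does not affect the final bound.
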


\paragraph{Exact Distance Oracles for Directed Planar Graphs}
Mozes and Sommer \cite{Mozes:2012} considered the following problem.
Given a planar digraph $G=(V,E)$ with real edge lengths
and space allocation $S\in [n\log\log{n},n^2]$,
construct a data structure of size $O(S)$ answering exact distance
queries in $G$ as efficiently as possible.
They proposed a data structure that can be constructed
in $O\left(S\frac{\log^3{n}}{\log\log{n}}\right)$ time,
and is capable of answering the distance queries in
$O\left(\frac{n}{\sqrt{S}}\log^2{n}\log^{3/2}{\log{n}}\right)$ time.
At the heart of their query algorithm lies the basic
version of FR-Dijkstra (without reduced costs), and thus
replacing it with our algorithm gives a faster query algorithm for $S=O(n^{2-\epsilon})$,
for any $\epsilon>0$.
\begin{corollary}
  Given a planar graph $G$ and $S\in [n\log{\log{n}},n^{2-\epsilon}]$,
  one can construct a $O(S)$-space exact distance oracle for $G$
  in $O\left(S\frac{\log^3{n}}{\log\log{n}}\right)$ time.
  The oracle answers shortest path queries
  $O\left(\frac{n}{\sqrt{S}}\frac{\log^2{n}}{\log^{1/2}\log{n}}\right)$
  time.
\end{corollary}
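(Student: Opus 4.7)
The plan is to invoke the Mozes--Sommer exact distance oracle construction \cite{Mozes:2012} as a black box and simply replace their FR-Dijkstra subroutine with our improved single-source shortest paths algorithm from Theorem~\ref{t:dijkstra}. Their construction, for a space parameter $S$, builds an $r$-division (and a suitable recursive family of DDGs) with $r$ chosen as a function of $S$ such that the resulting oracle occupies $O(S)$ space, can be built in $O\left(S\frac{\log^3{n}}{\log\log{n}}\right)$ time, and answers each distance query by performing one (or $O(1)$) single-source shortest paths computations on a dense distance graph whose boundary has total size $\Theta(n/\sqrt{S})$. The construction time is unaffected by our changes, so I only need to re-analyze the query time.

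The first step is to verify that the condition $S\le n^{2-\epsilon}$ in the statement implies that the underlying $r$-division has $r=\text{poly}(n)$. Indeed, in the Mozes--Sommer scheme $r$ is chosen so that $n/\sqrt{r}\cdot\text{polylog}(n)=\Theta(n/\sqrt{S})$ up to polylogarithmic factors, hence $r=\Theta(S)\cdot\text{polylog}(n)$. The assumption $S\ge n^{\epsilon'}$ for some $\epsilon'>0$ (which is implied, up to constants, by $S\le n^{2-\epsilon}$ combined with the fact that $S\ge n\log\log n$) ensures $\log r=\Theta(\log n)$ and $\log\log r=\Theta(\log\log n)$. This is precisely the regime in which Theorem~\ref{t:dijkstra} gives its strongest improvement over FR-Dijkstra: choosing $\epsilon$ in Theorem~\ref{t:dijkstra} sufficiently small (depending on the given $\epsilon$ in the statement), the term $\log^{\epsilon}{|\bnd{G_i}|}\log{|\bnd{G}|}=O(\log^{1+\epsilon}{n})$ is dominated by $\frac{\log^2{|\bnd{G_i}|}}{\log^2\log{|\bnd{G_i}|}}=\Theta\!\left(\frac{\log^2 n}{\log^2\log n}\right)$.

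The second step is to plug this bound into the Mozes--Sommer query analysis. In their analysis the query cost is dominated by an FR-Dijkstra call on a DDG whose total boundary has size $\Theta(n/\sqrt{S})$, contributing $O\!\left(\frac{n}{\sqrt{S}}\log n\log r\right)=O\!\left(\frac{n}{\sqrt{S}}\log^2 n\right)$ per call; this is multiplied by an additional $O(\log^{3/2}\log n)$ factor arising from the recursive structure of their oracle. Replacing the per-boundary-vertex cost of $\log n\log r$ with the bound from Theorem~\ref{t:dijkstra}, namely $O\!\left(\frac{\log^2 n}{\log^2\log n}\right)$, the FR-Dijkstra contribution becomes $O\!\left(\frac{n}{\sqrt{S}}\frac{\log^2 n}{\log^2\log n}\right)$, and multiplying by the same $O(\log^{3/2}\log n)$ recursion factor yields the claimed query time $O\!\left(\frac{n}{\sqrt{S}}\frac{\log^2 n}{\log^{1/2}\log n}\right)$.

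The only place I would need to be careful is in confirming that every FR-Dijkstra invocation inside the Mozes--Sommer query procedure indeed fits the framework of Theorem~\ref{t:dijkstra}. This amounts to checking that each such invocation is on a graph of the form $(\bnd{G},P)\cup\ddg(G')$ where $G'$ is a union of few-holes regions whose feasible price functions $\phi_i^0$ are available from the construction phase (which is how Mozes--Sommer already use FR-Dijkstra with reduced lengths, since they handle negative edges during initialization, as noted in the remark following Lemma~\ref{l:decomp}). The set $P$ of ``outer'' edges introduced at query time has size $O(n/\sqrt{S})$, so the $+|P|$ term in Theorem~\ref{t:dijkstra} is absorbed. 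I expect this verification to be the main (though routine) obstacle: it is essentially a bookkeeping exercise to walk through the Mozes--Sommer query procedure and confirm that its FR-Dijkstra calls respect the input format demanded by Theorem~\ref{t:dijkstra}, with no changes in construction time.
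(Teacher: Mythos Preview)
Your proposal is correct and follows essentially the same approach as the paper: invoke the Mozes--Sommer oracle as a black box and replace the FR-Dijkstra subroutine in its query procedure with the algorithm of Theorem~\ref{t:dijkstra}. The paper's own justification is in fact terser than yours---it is a single sentence observing that FR-Dijkstra (in its basic, unreduced-cost form) lies at the heart of the Mozes--Sommer query and that the replacement yields the stated speedup for $S=O(n^{2-\epsilon})$.
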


\paragraph{Fully-Dynamic Distance Oracles for Directed Planar Graphs}
In this problem we are given a directed planar graph $G$ with real edge lengths
which undergoes edge insertions and deletions.
It is also guaranteed that edge insertions do not break the planarity
of $G$.
Italiano et al. \cite{Italiano:2011} showed a fully dynamic data
structure limited to the case of non-negative edge lengths.
On the other hand, Kaplan et al. \cite{Kaplan:2012} showed a data structure processing
updates and answering queries in $O(n^{2/3}\log^{5/3}{n})$ time
in the case, when only edge-length updates are allowed.
Both data structures can be easily combined in order to allow
both edge set updates and negative lengths.
Again, FR-Dijkstra on a dense distance graph induced by an
$r$-division can be identified as one of the bottlenecks
of both query and update algorithms.
The second bottleneck is the computation of a dense distance
graph of a piece using the data structure of Klein \cite{Klein:2005}
in $O(r\log{r})$ time.
The terms $O\left(\frac{n}{\sqrt{r}}\frac{\log^2{n}}{\log^2\log{n}}\right)$ and $O(r\log{n})$ can
be balanced for $r=n^{2/3}\frac{\log^{2/3}}{\log^{4/3}\log{n}}$.
\begin{corollary}
  For a directed planar graph $G$, in $O\left(n\frac{\log^{2}}{\log\log{n}}\right)$ we can construct
  a data structure supporting both edge updates (insertions and deletions) and distance queries
  in $O\left(n^{2/3}\frac{\log^{5/3}}{\log^{4/3}\log{n}}\right)$ amortized time.
\end{corollary}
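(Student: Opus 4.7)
The plan is to take the fully-dynamic distance oracle obtained by combining the data structure of Italiano et al.~\cite{Italiano:2011} (which handles topology-preserving edge insertions and deletions but only for non-negative lengths) with that of Kaplan et al.~\cite{Kaplan:2012} (which allows length updates with possibly negative values), and to swap its FR-Dijkstra subroutine for the one given by Theorem~\ref{t:dijkstra}. As already noted in the paragraph preceding the statement, the per-operation cost of the combined oracle splits, after a suitable $r$-division, into two dominant terms: (i)~recomputing the dense distance graph of the single piece touched by an update, which costs $O(r\log r)=O(r\log n)$ using Klein's MSSP~\cite{Klein:2005}, and (ii)~a single-source shortest path computation on $\ddg(G)$ for the $r$-division.

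By Theorem~\ref{t:dijkstra} applied to the $r$-division ($|\bnd{G_i}|=O(\sqrt r)$, $g=O(n/r)$, and with $\epsilon$ chosen small enough so that the $\log^{\epsilon}$ factor is absorbed), the cost of (ii) drops from the $O\!\left(\frac{n}{\sqrt r}\log n\log r\right)$ of FR-Dijkstra to $O\!\left(\frac{n}{\sqrt r}\cdot\frac{\log^2 n}{\log^2\log n}\right)$; the $|P|$ term is $O(n/\sqrt r)$ because each update in the two oracles contributes only $O(\sqrt r)$ auxiliary edges per affected piece. Balancing (i) against the new (ii), i.e.\ solving $r\log n=\frac{n}{\sqrt r}\cdot\frac{\log^2 n}{\log^2\log n}$, gives $r^{3/2}=\frac{n\log n}{\log^2\log n}$, whence $r=n^{2/3}\,\frac{\log^{2/3}n}{\log^{4/3}\log n}$ and a per-operation amortized bound of $O(r\log n)=O\!\left(n^{2/3}\frac{\log^{5/3}n}{\log^{4/3}\log n}\right)$, matching the claim.

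For the preprocessing bound, I would compute the initial $r$-division in linear time via Klein et al.~\cite{Klein:2013}, build $\ddg(G_i)$ for every piece with Klein's MSSP in $\sum_i O((|V(G_i)|+|\bnd{G_i}|^2)\log|V(G_i)|)=O(n\log n)$ total time, run the preprocessing of Theorem~\ref{t:dijkstra} on each region (same bound), and obtain an initial feasible price function on $G$ using the Mozes--Wulff-Nilsen SSSP algorithm~\cite{Mozes:2010} in $O(n\log^2 n/\log\log n)$ time; the last step dominates and yields the stated construction bound.

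The main obstacle is verifying that the black-box combination of~\cite{Italiano:2011} and~\cite{Kaplan:2012} really does reduce each update and each query to $O(1)$ SSSP computations on $\ddg(G)$ plus $O(1)$ piece recomputations, and that the feasible price function they maintain internally can be fed to Theorem~\ref{t:dijkstra} at essentially no extra cost; once this is granted, the rest is plugging the improved SSSP bound into the pre-existing analysis and redoing the single balancing computation above.
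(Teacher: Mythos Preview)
Your proposal is correct and follows essentially the same approach as the paper: the paper's justification is just the paragraph preceding the corollary, which identifies the same two bottlenecks (piece recomputation via Klein's MSSP and FR-Dijkstra on the $r$-division), replaces the latter by Theorem~\ref{t:dijkstra}, and performs exactly the balancing computation you carry out. Your added detail on the preprocessing bound (the $O(n\log^2 n/\log\log n)$ coming from the Mozes--Wulff-Nilsen SSSP for the initial feasible price function) and your cautionary remark about the black-box combination are both reasonable elaborations of what the paper leaves implicit.
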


\subsection*{Acknowledgements}
We thank Piotr Sankowski for helpful discussions. The first author also thanks Oren Weimann and Shay Mozes for discussions about Monge matrices and FR-Dijkstra.

\bibliography{paper}

\newcommand{\sortkey}[1]{}
\begin{thebibliography}{10}

\bibitem{SMAWK:1987}
Alok Aggarwal, Maria~M. Klawe, Shlomo Moran, Peter~W. Shor, and Robert~E.
  Wilber.
\newblock Geometric applications of a matrix-searching algorithm.
\newblock {\em Algorithmica}, 2:195--208, 1987.

\bibitem{RandomizedBellmanFord}
Michael~J. Bannister and David Eppstein.
\newblock Randomized speedup of the bellman-ford algorithm.
\newblock In {\em Proceedings of the Meeting on Analytic Algorithmics and
  Combinatorics}, ANALCO '12, pages 41--47, Philadelphia, PA, USA, 2012.
  Society for Industrial and Applied Mathematics.

\bibitem{Borradaile:2016}
Glencora Borradaile, David Eppstein, Amir Nayyeri, and Christian
  Wulff{-}Nilsen.
\newblock All-pairs minimum cuts in near-linear time for surface-embedded
  graphs.
\newblock In {\em 32nd International Symposium on Computational Geometry, SoCG
  2016, June 14-18, 2016, Boston, MA, {USA}}, pages 22:1--22:16, 2016.

\bibitem{Borradaile:2011}
Glencora Borradaile, Philip~N. Klein, Shay Mozes, Yahav Nussbaum, and Christian
  Wulff{-}Nilsen.
\newblock Multiple-source multiple-sink maximum flow in directed planar graphs
  in near-linear time.
\newblock In {\em {IEEE} 52nd Annual Symposium on Foundations of Computer
  Science, {FOCS} 2011, Palm Springs, CA, USA, October 22-25, 2011}, pages
  170--179, 2011.

\bibitem{Borradaile:2015}
Glencora Borradaile, Piotr Sankowski, and Christian Wulff{-}Nilsen.
\newblock Min \emph{st}-cut oracle for planar graphs with near-linear
  preprocessing time.
\newblock {\em {ACM} Trans. Algorithms}, 11(3):16:1--16:29, 2015.

\bibitem{Fak:2006}
Jittat Fakcharoenphol and Satish Rao.
\newblock Planar graphs, negative weight edges, shortest paths, and near linear
  time.
\newblock {\em J. Comput. Syst. Sci.}, 72(5):868--889, August 2006.

\bibitem{Fredman:1987}
Michael~L. Fredman and Robert~Endre Tarjan.
\newblock Fibonacci heaps and their uses in improved network optimization
  algorithms.
\newblock {\em J. ACM}, 34(3):596--615, July 1987.

\bibitem{Gawry:2015}
Pawel Gawrychowski, Shay Mozes, and Oren Weimann.
\newblock Submatrix maximum queries in monge matrices are equivalent to
  predecessor search.
\newblock In {\em Automata, Languages, and Programming - 42nd International
  Colloquium, {ICALP} 2015, Kyoto, Japan, July 6-10, 2015, Proceedings, Part
  {I}}, pages 580--592, 2015.

\bibitem{Henzinger:1997}
Monika~Rauch Henzinger, Philip~N. Klein, Satish Rao, and Sairam Subramanian.
\newblock Faster shortest-path algorithms for planar graphs.
\newblock {\em J. Comput. Syst. Sci.}, 55(1):3--23, 1997.

\bibitem{Italiano:2011}
Giuseppe~F. Italiano, Yahav Nussbaum, Piotr Sankowski, and Christian
  Wulff{-}Nilsen.
\newblock Improved algorithms for min cut and max flow in undirected planar
  graphs.
\newblock In {\em Proceedings of the 43rd {ACM} Symposium on Theory of
  Computing, {STOC} 2011, San Jose, CA, USA, 6-8 June 2011}, pages 313--322,
  2011.

\bibitem{Kaplan:2012}
Haim Kaplan, Shay Mozes, Yahav Nussbaum, and Micha Sharir.
\newblock Submatrix maximum queries in monge matrices and monge partial
  matrices, and their applications.
\newblock In {\em Proceedings of the Twenty-Third Annual {ACM-SIAM} Symposium
  on Discrete Algorithms, {SODA} 2012, Kyoto, Japan, January 17-19, 2012},
  pages 338--355, 2012.

\bibitem{Klein:2005}
Philip~N. Klein.
\newblock Multiple-source shortest paths in planar graphs.
\newblock In {\em Proceedings of the Sixteenth Annual {ACM-SIAM} Symposium on
  Discrete Algorithms, {SODA} 2005, Vancouver, British Columbia, Canada,
  January 23-25, 2005}, pages 146--155, 2005.

\bibitem{Klein:2013}
Philip~N. Klein, Shay Mozes, and Christian Sommer.
\newblock Structured recursive separator decompositions for planar graphs in
  linear time.
\newblock In {\em Symposium on Theory of Computing Conference, STOC'13, Palo
  Alto, CA, USA, June 1-4, 2013}, pages 505--514, 2013.

\bibitem{Lacki:2012}
{\sortkey{Lacki}}{Jakub Łącki, Yahav Nussbaum, Piotr Sankowski and Christian
  Wulff{-}Nilsen}.
\newblock Single source - all sinks max flows in planar digraphs.
\newblock In {\em 53rd Annual {IEEE} Symposium on Foundations of Computer
  Science, {FOCS} 2012, New Brunswick, NJ, USA, October 20-23, 2012}, pages
  599--608, 2012.

\bibitem{Mozes:2014}
Shay Mozes, Yahav Nussbaum, and Oren Weimann.
\newblock Faster shortest paths in dense distance graphs, with applications.
\newblock {\em CoRR}, abs/1404.0977, 2014.

\bibitem{Mozes:2012}
Shay Mozes and Christian Sommer.
\newblock Exact distance oracles for planar graphs.
\newblock In {\em Proceedings of the Twenty-Third Annual {ACM-SIAM} Symposium
  on Discrete Algorithms, {SODA} 2012, Kyoto, Japan, January 17-19, 2012},
  pages 209--222, 2012.

\bibitem{Mozes:2010}
Shay Mozes and Christian Wulff{-}Nilsen.
\newblock Shortest paths in planar graphs with real lengths in
  ${O}(n\log^2{n}/\log\log{n})$ time.
\newblock In {\em Algorithms - {ESA} 2010, 18th Annual European Symposium,
  Liverpool, UK, September 6-8, 2010. Proceedings, Part {II}}, pages 206--217,
  2010.

\bibitem{Orlin:2013}
James~B. Orlin.
\newblock Max flows in ${O}(nm)$ time, or better.
\newblock In {\em Symposium on Theory of Computing Conference, STOC'13, Palo
  Alto, CA, USA, June 1-4, 2013}, pages 765--774, 2013.

\bibitem{vEB:1977}
Peter van Emde~Boas.
\newblock Preserving order in a forest in less than logarithmic time and linear
  space.
\newblock {\em Inf. Process. Lett.}, 6(3):80--82, 1977.

\end{thebibliography}

\end{document}